\def\UrlSpecials{\do\~{\kern -.15em\lower .7ex\hbox{~}\kern .04em}} \catcode`~=13 
\newcommand{\nn}{\nonumber}
\newcommand{\calA}{\mathcal{A}}
\newcommand{\calB}{\mathcal{B}}
\newcommand{\calC}{\mathcal{C}}
\newcommand{\calE}{\mathcal{E}}
\newcommand{\calF}{\mathcal{F}}
\newcommand{\calG}{\mathcal{G}}
\newcommand{\calM}{\mathcal{M}}
\newcommand{\calN}{\mathcal{N}}
\newcommand{\calQ}{\mathcal{Q}}
\newcommand{\calS}{\mathcal{S}}
\newcommand{\calT}{\mathcal{T}}
\newcommand{\calU}{\mathcal{U}}
\newcommand{\calX}{\mathcal{X}}
\newcommand{\calY}{\mathcal{Y}}
\newcommand{\ba}{\mathbf{a}}
\newcommand{\bA}{\mathbf{A}}
\newcommand{\bB}{\mathbf{B}}
\newcommand{\bc}{\mathbf{c}}
\newcommand{\bg}{\mathbf{g}}
\newcommand{\bG}{\mathbf{G}}
\newcommand{\bh}{\mathbf{h}}
\newcommand{\bH}{\mathbf{H}}
\newcommand{\bi}{\mathbf{i}}
\newcommand{\bI}{\mathbf{I}}
\newcommand{\bJ}{\mathbf{J}}
\newcommand{\bL}{\mathbf{L}}
\newcommand{\boldm}{\mathbf{m}}
\newcommand{\bM}{\mathbf{M}}
\newcommand{\bR}{\mathbf{R}}
\newcommand{\bs}{\mathbf{s}}
\newcommand{\bS}{\mathbf{S}}
\newcommand{\bT}{\mathbf{T}}
\newcommand{\bu}{\mathbf{u}}
\newcommand{\bU}{\mathbf{U}}
\newcommand{\bv}{\mathbf{v}}
\newcommand{\bV}{\mathbf{V}}
\newcommand{\bx}{\mathbf{x}}
\newcommand{\bX}{\mathbf{X}}
\newcommand{\by}{\mathbf{y}}
\newcommand{\bz}{\mathbf{z}}
\newcommand{\bZ}{\mathbf{Z}}
\newcommand{\rmQ}{\mathrm{Q}}
\newcommand{\bbE}{\mathbb{E}}
\newcommand{\bbF}{\mathbb{F}}
\newcommand{\bbN}{\mathbb{N}}
\newcommand{\bbR}{\mathbb{R}}
\newcommand{\frakC}{\mathfrak{C}}
\newcommand{\scB}{\mathscr{B}}
\newcommand{\scC}{\mathscr{C}}
\newcommand{\scD}{\mathscr{D}}
\newcommand{\scP}{\mathscr{P}}
\newcommand{\scR}{\mathscr{R}}
\newcommand{\scS}{\mathscr{S}}
\newcommand{\scT}{\mathscr{T}}
\newcommand{\scV}{\mathscr{V}}
\DeclareMathAlphabet{\mathbsf}{OT1}{cmss}{bx}{n}
\DeclareMathAlphabet{\mathssf}{OT1}{cmss}{m}{sl}
\newcommand{\rvE}{\mathsf{E}}
\newcommand{\rvP}{\mathsf{P}}
\DeclareSymbolFont{bsfletters}{OT1}{cmss}{bx}{n}  
\DeclareSymbolFont{ssfletters}{OT1}{cmss}{m}{n}
\DeclareMathSymbol{\bsfGamma}{0}{bsfletters}{'000}
\DeclareMathSymbol{\ssfGamma}{0}{ssfletters}{'000}
\DeclareMathSymbol{\bsfDelta}{0}{bsfletters}{'001}
\DeclareMathSymbol{\ssfDelta}{0}{ssfletters}{'001}
\DeclareMathSymbol{\bsfTheta}{0}{bsfletters}{'002}
\DeclareMathSymbol{\ssfTheta}{0}{ssfletters}{'002}
\DeclareMathSymbol{\bsfLambda}{0}{bsfletters}{'003}
\DeclareMathSymbol{\ssfLambda}{0}{ssfletters}{'003}
\DeclareMathSymbol{\bsfXi}{0}{bsfletters}{'004}
\DeclareMathSymbol{\ssfXi}{0}{ssfletters}{'004}
\DeclareMathSymbol{\bsfPi}{0}{bsfletters}{'005}
\DeclareMathSymbol{\ssfPi}{0}{ssfletters}{'005}
\DeclareMathSymbol{\bsfSigma}{0}{bsfletters}{'006}
\DeclareMathSymbol{\ssfSigma}{0}{ssfletters}{'006}
\DeclareMathSymbol{\bsfUpsilon}{0}{bsfletters}{'007}
\DeclareMathSymbol{\ssfUpsilon}{0}{ssfletters}{'007}
\DeclareMathSymbol{\bsfPhi}{0}{bsfletters}{'010}
\DeclareMathSymbol{\ssfPhi}{0}{ssfletters}{'010}
\DeclareMathSymbol{\bsfPsi}{0}{bsfletters}{'011}
\DeclareMathSymbol{\ssfPsi}{0}{ssfletters}{'011}
\DeclareMathSymbol{\bsfOmega}{0}{bsfletters}{'012}
\DeclareMathSymbol{\ssfOmega}{0}{ssfletters}{'012}
\newcommand{\hatH}{\hat{H}}
\newcommand{\hatbH}{\hat{\bH}}
\newcommand{\hatI}{\hat{I}}
\newcommand{\hatbI}{\hat{\bI}}
\newcommand{\hatbJ}{\hat{\bJ}}
\newcommand{\hatm}{\hat{m}}
\newcommand{\hatM}{\hat{M}}
\newcommand{\tilm}{\tilde{m}}
\newcommand{\tilQ}{\tilde{Q}}
\newcommand{\tilU}{\tilde{U}}
\newcommand{\tilbU}{\tilde{\bU}}
\newcommand{\hatx}{\hat{x}}
\newcommand{\hatX}{\hat{X}}
\newcommand{\tilx}{\tilde{x}}
\newcommand{\tilX}{\tilde{X}}
\newcommand{\tilY}{\tilde{Y}}
\newcommand{\tilbz}{\tilde{\bz}}
\newcommand{\tilbZ}{\tilde{\bZ}}
\newcommand{\bLambda}{\bm{\Lambda}}
\newcommand{\bDelta}{\bm{\Delta}}
\def\fndot{\, \cdot \,}
\newcommand{\ceil}[1]{\lceil{#1}\rceil}
\newcommand{\Pen}{P_{\mathrm{e}}^{(n)}}
\DeclareMathOperator*{\argmax}{arg\,max}
\DeclareMathOperator{\var}{\mathsf{Var}}
\DeclareMathOperator{\cov}{\mathsf{Cov}}
\DeclareMathOperator{\rank}{rank}
\newcommand{\Ber}{\mathrm{Bern}}
\newcommand{\bzero}{\mathbf{0}}
\newcommand{\bone}{\mathbf{1}}
\newtheorem{theorem}{Theorem} 
\newtheorem{lemma}[theorem]{Lemma}
\newtheorem{corollary}[theorem]{Corollary}
\newtheorem{definition}{Definition}
\newcommand{\qednew}{\nobreak \ifvmode \relax \else
      \ifdim\lastskip<1.5em \hskip-\lastskip
      \hskip1.5em plus0em minus0.5em \fi \nobreak
      \vrule height0.75em width0.5em depth0.25em\fi}
\title{On the Dispersions  of Three  Network Information Theory Problems}
\author{Vincent Y.~F.\ Tan$^*$,~\IEEEmembership{Member,~IEEE}  \hspace{.025in}  and \hspace{.025in}     Oliver Kosut$^{\dagger}$,~\IEEEmembership{Member,~IEEE}  \thanks{$^*$   Department of Electrical and Computer Engineering (ECE), National University of Singapore (NUS)   (Email:
  {vtan@nus.edu.sg})} \thanks{$^{\dagger}$ School of Electrical, Computer and Energy Engineering, Arizona State University (Email:  okosut@asu.edu).   } \thanks{This paper was presented in part at the Information Theory and Applications workshop in San Diego, CA in January 2012, the Conference on Information Sciences and Systems in Princeton, NJ in March 2012 and the International Symposium on Information Theory in Cambridge, MA in July 2012. }}
\begin{document}
\flushbottom
\maketitle

\begin{abstract}
We analyze the dispersions of distributed lossless source coding (the Slepian-Wolf problem), the multiple-access channel and the asymmetric broadcast channel. For the two-encoder Slepian-Wolf problem, we introduce a   quantity known as the entropy dispersion matrix, which is analogous to the   scalar dispersions that have gained interest recently. We prove a {\em global dispersion} result that can be expressed in terms of this entropy dispersion matrix and provides intuition on the approximate rate losses at a given blocklength and error probability. To gain better intuition about the rate at which the non-asymptotic rate region converges to the Slepian-Wolf boundary, we define  and characterize two   operational dispersions: the {\em local dispersion} and the {\em weighted sum-rate dispersion}. The former represents the rate of convergence to a point on the Slepian-Wolf   boundary while the latter represents the fastest rate for which a weighted sum of the two rates converges to its asymptotic fundamental limit. Interestingly, when we approach either of the two corner points, the local dispersion  is characterized not by a univariate Gaussian  but a bivariate one as well as a subset of off-diagonal elements of the aforementioned entropy dispersion matrix. Finally, we demonstrate the versatility of our achievability proof technique by providing inner bounds for the multiple-access channel and the asymmetric broadcast channel in terms of dispersion matrices. All our proofs are unified  a so-called vector rate redundancy theorem which is proved using the multidimensional   Berry-Ess\'{e}en theorem. 

\end{abstract}
\begin{keywords}
Dispersion,   Second-order coding rates, Network information theory, Slepian-Wolf, Multiple-access channel, Asymmetric broadcast channel
\end{keywords}

\section{Introduction}
{\em Network information theory}~\cite{elgamal} aims to find the fundamental limits of  communication in networks with multiple senders and receivers. The primary goal is to characterize the {\em optimal rate region} or {\em capacity region}--that is, the set of rate  tuples for which there exists codes with reliable transmission. Such rate tuples are known as being {\em achievable}. While the characterization of capacity regions is a difficult problem in general, there have been positive results for  several   special classes of networks such as the multiple-access channel~\cite{ahl71, liao} and  the asymmetric~\cite{kor77}  or degraded broadcast channels~\cite{cover, bergmans}. A prominent example in multi-terminal lossless source coding in which the optimal rate region is   known is the so-called Slepian-Wolf problem~\cite{sw73} which involves {\em separately} encoding two (or more) correlated sources and subsequently  estimating them from their rate-limited representations.

The  capacity region  for a   channel model is an asymptotic notion. One is allowed to design codes that operate over arbitrarily long blocks (or channel uses) in order to drive either the maximal or average probabilities of error to zero. To illustrate this point, let us recap Shannon's point-to-point channel coding theorem~\cite{Shannon48}. He showed that up to $n C$ bits can be  reliably transmitted  over $n$ uses of a discrete memoryless channel (DMC) $W$ as $n$ becomes large. Here,  $C = \max_{p_X} I(p_X,W)$ is termed the  {\em capacity} of the channel $W$. However, this fundamental result for reliable communication over a noisy channel can  be optimistic in practice as there may be  system constraints on the delay.  One can thus ask a slightly different and   more challenging question: What is the maximal code size $M^*(n,\epsilon)$ as a function of  a fixed blocklength $n$  and target average error probability $\epsilon$? The second-order asymptotic behavior  of $\log M^*(n,\epsilon)$ was studied first by Strassen~\cite{strassen} and the analysis was extended recently by Hayashi~\cite{Hayashi09} and Polyanskiy, Poor and Verd\'{u}~\cite{PPV10}. They showed that for most channels and for all $\epsilon\in (0,1)$, 
\begin{equation} \label{eqn:channel_disp}
\frac{1}{n}\log M^*(n,\epsilon)  =  C - \sqrt{\frac{V}{n}} \rmQ^{-1}(\epsilon) + O\left(\frac{\log n}{n}\right).
\end{equation}
The constant  $V$  coincides with an operational quantity known as the  {\em channel dispersion}~\cite{PPV10}, which is similar to the {\em second-order coding rate} in \cite{Hayashi09}. The  channel dispersion is the variance of the log-likelihood ratio of the channel $W$ and the capacity-achieving output distribution $p_{Y^*}$ assuming uniqueness of the capacity-achieving input distribution $p_{X^*} :=\argmax_{p}I(p,W)$.  The term $\sqrt{\frac{V}{n}} \, \rmQ^{-1}(\epsilon)$ is  approximately the rate penalty in at blocklength $n$. The first two terms in \eqref{eqn:channel_disp} are   known as the {\em Gaussian approximation} to $R^*(n,\epsilon)$. 

In this paper, using   Gaussian approximations, we ask similar  dispersion-type questions for three multi-user  problems: distributed lossless source coding, also known as    the Slepian-Wolf (SW) problem, the multiple-access channel (MAC) and the asymmetric broadcast channel (ABC). We show that the network analogue of the scalar dispersion quantity  $V$   is a positive-semidefinite  matrix $\bV$ that generally depends on the channel, input distributions or sources. We call this a {\em global dispersion} result. Furthermore, we also perform {\em local dispersion} analysis. Just as $V$ in \eqref{eqn:channel_disp} quantifies the rate of convergence of $R^*(n,\epsilon)$ to capacity,  we examine the rate of convergence to various points on the boundary of the asymptotic rate region for the SW problem.  Our results  are     of practical importance    due to the ubiquity of communication networks where numerous users simultaneously share a data compression system or utilize a common channel. Since there may be hard constraints on the permissible number of channel uses (i.e., the delay in decoding), it is useful to gain an intuition of the approximate backoff from the asymptotic fundamental limits in terms of a quantity that is analogous to $V$ in \eqref{eqn:channel_disp}. 


\subsection{Summary of Main Results} \label{sec:summary}
There are three main results in this paper:

\begin{enumerate}
\item For the SW problem, we   define  the   $(n,\epsilon)$-optimal rate region $\scR_{\mathrm{SW}}^*(n,\epsilon)$   to be the set of rate pairs $(R_1, R_2)$ for which there exists a length-$n$ code  such that the error probability in reconstructing the sources does not exceed $\epsilon$. We characterize   $\scR_{\mathrm{SW}}^*(n,\epsilon)$  up to an $O(\frac{\log n}{n})$ factor. More precisely, we show the following  {\em global dispersion} result (Theorem~\ref{thm:orr}) for the SW problem: $\scR_{\mathrm{SW}}^*(n,\epsilon)$ is the set of rate pairs $(R_1, R_2)$ satisfying
\begin{equation} \label{eqn:sw_intro}
\begin{bmatrix}R_1\\ R_2\\R_1+R_2 \end{bmatrix} \in \begin{bmatrix}H(X_1|X_2 )\\ H(X_2|X_1 ) \\ H(X_1,X_2) \end{bmatrix}  + \frac{\scS(\bV,\epsilon)}{\sqrt{n}} \pm  O\left(\frac{\log n}{n}\right)\bone,
\end{equation}
where the set $\scS(\bV,\epsilon) \subset\bbR^3$ is the multidimensional analogue of the cumulative distribution function for a zero-mean multivariate Gaussian with covariance matrix  $\bV$. See Fig.~\ref{fig:slices} for a schematic of $\scR_{\mathrm{SW}}^*(n,\epsilon)$.  This is pleasingly analogous to \eqref{eqn:channel_disp} in which the backoff from the asymptotic optimal rate region is of the order  $O(\frac{1}{\sqrt{n}})$. The constant is also specified as the dispersion matrix $\bV$.

\item However, while the global dispersion result for SW in~\eqref{eqn:sw_intro}    resembles the channel dispersion one in \eqref{eqn:channel_disp}, it differs in one key aspect. Namely, the rate at which $\scR_{\mathrm{SW}}^*(n,\epsilon)$  approaches certain boundary points of the asymptotic SW region is somewhat nebulous. To clarify this, we define two related operational dispersions which we also characterize exactly. First, we consider approaching various points on the boundary at a specified angle $\theta$ (See Fig.~\ref{fig:slices}). Interestingly, when we approach either of the corner points, the {\em local dispersion} (proved in Theorem~\ref{thm:local_disp}) is characterized not by a univariate Gaussian (via the $\rmQ^{-1}$ function as in \eqref{eqn:channel_disp}) but a bivariate Gaussian, a subset of the off-diagonal elements of the dispersion matrix $\bV$ and the angle of approach. This phenomenon is not observed when we approach non corner-points. Indeed, in this case, the local dispersion is simply characterized by  an element on the diagonal of $\bV$ and the angle of approach.  Second, suppose we want to minimize a linear combination of $R_1$ and $R_2$, say $\alpha R_1+\beta R_2$ for some $\alpha,\beta\ge 0$.  It can be seen from the polygonal shape of the SW region that for almost all values of $(\alpha,\beta)$, the resulting rate pairs will converge to one of the two corner points. We characterize the speed at which the $(\alpha,\beta)$-weighted sum-rate of the best SW code of length $n$ with error probability not exceeding $\epsilon$ converges to either $\alpha H(X_1|X_2)+\beta H(X_2)$ or $\alpha H(X_1)+\beta H(X_2|X_1)$.  We call the proportionality constant involved in this speed  the {\em $(\alpha,\beta)$-weighted sum-rate dispersion}  or simply the {\em weighted sum-rate dispersion} (proved in Theorem~\ref{thm:sum_rate_disp}).

\item Lastly, to  demonstrate the full utility of our achievability proof technique which is based on the method of types~\cite{Csi97}, we apply it to obtain second-order-type inner bounds for the $(n,\epsilon)$-capacity regions for the discrete memoryless MAC (Theorem~\ref{thm:mac}) and the discrete memoryless  ABC (Theorem~\ref{thm:bc}). These inner bounds are expressed like the global dispersion result in \eqref{eqn:sw_intro} but similar local  and weighted sum-rate dispersions can also be derived. 
\end{enumerate}
 
\subsection{Related Work} \label{sec:related}

The asymptotic expansions   for the fundamental limits of hypothesis testing, source and channel coding were first studied by Strassen~\cite{strassen}. Subsequently, dispersion or second-order coding analysis for channel coding for various point-to-point channel models were studied in~\cite{Hayashi09} and~\cite{PPV10}.   Such dispersion   analysis has promptly been extended to   lossy source coding~\cite{Kos11, ingber11} and joint source-channel coding~\cite{wang11}.   Dispersion analysis is  complementary  to that of traditional error exponent analysis \cite{gallagerIT, Csi97}. In the latter, we fix a rate tuple in the capacity region and ask how rapidly the error probability decays as an exponential function of the blocklength. In the former, the error probability and the blocklength are fixed (though for tractability, we often allow $n$ to also grow and we study the  asymptotics). The spotlight is  now shone  on achievable rates at the specified blocklength and error probability.

The problem of SW coding for a  fixed error probability  and blocklength was   discussed by Baron et al.~\cite{Baron04}, Sarvotham et al.~\cite{Sar05b} and He et al.~\cite{He09}. However, in these works, the authors considered a single source $X_1$ to be compressed and (non-coded) side information $X_2$ available only at the decoder. Thus, $X_2$ is neither coded nor estimated. They showed that a scalar dispersion quantity governs the second-order coding rate. Thus, for this problem, we cannot observe the peculiar corner point phenomenon  discussed in the second point in Section~\ref{sec:summary}. He et al.~\cite{He09}  also analyzed the variable-length   SW problem and showed that the   dispersion is, in general, smaller than  in the fixed-length setting.  Due to the duality between one-encoder  SW coding and  channel coding~\cite{Chen07,  Ahls79, Ahls79b}, this variable-length dispersion  shown to be similar to that for channel coding \cite{Hayashi09, PPV10}. However, it is again not clear how to obtain the dispersion matrix-type result in \eqref{eqn:sw_intro} or the local and weighted sum-rate dispersions by exploiting the duality between channel coding and the one-encoder SW coding problem~\cite{Chen07,  Ahls79}.  There is also duality between the two-encoder SW problem and the MAC as stated in \cite[Theorem 14.3]{Csi97} but it is not clear whether this duality can be exploited for deriving conclusive dispersion results for the MAC. Sarvotham et al.~\cite{Sar05} considered the SW problem with two sources to be compressed but limited their setting to the case the sources  are binary and  symmetric. They demonstrated a result analogous to Baron et al.~\cite{Baron04}. The three constraints on the individual rates $R_1$, $R_2$ and the sum rate $R_1+R_2$ are decoupled when the sources are binary and symmetric. Similar conclusions were made by Chang and Sahai~\cite{Chang07} from an error exponent perspective.  Our work generalizes their setting in that we consider {\em all} finite alphabet sources (not necessarily symmetric)  with {\em multiple} encoders. We discuss  further connections  in Sections~\ref{sec:sing}.  


\subsection{Paper Organization}
This paper is organized as follows: In the following subsection, we introduce our notation. In Section~\ref{sec:sw}, we present our dispersion results for the problem of distributed lossless source coding (the SW problem).  The global dispersion result is stated first, followed by the local  and weighted sum-rate dispersion results. We then provide a thorough discussion of these results, comparing and contrasting them. Following that in Sections~\ref{sec:mac} and \ref{sec:abc}, we present the second-order inner bounds  for the MAC and ABC respectively.   We conclude our discussion and suggest avenues for further research in Section~\ref{sec:concl}. Most of the proofs are presented in Section~\ref{sec:proofs} where we start by presenting  a general result known as the vector rate redundancy theorem. We subsequently apply it in the achievability proofs for the SW problem, the MAC and the ABC.  The  proofs of the  local  and weighted sum-rate dispersion results are presented in the appendices, together with other auxiliary results.

\subsection{Notation} \label{sec:notation}
We adopt the following set of notation: Random variables  and the values they take on will be denoted by upper case  (e.g., $X$) and lower case (e.g., $x$) respectively. Random vectors will be denoted by upper case bold font  or with a superscript indicating its length (e.g., $\bX$ or $X^n=(X_1,\ldots, X_n)$). Their realizations will be denoted by  lower case bold font or with a superscript (e.g., $\bx$ or $x^n= (x_1,\ldots, x_n)$). Matrices will also be denoted by upper case bold font (e.g., $\bM$); this should hopefully cause no confusion with random vectors.   The notation $\bM^T$ denotes the transpose of $\bM$.  The notations $\bM\succ 0$ and $\bM\succeq 0$  mean that $\bM$ is (symmetric) positive-definite and positive-semidefinite respectively. In addition, $\lambda_{\min}(\bM)$, $\lambda_{\max}(\bM)$ and  $\|\bM\|_2$ denote, respectively, the minimum and maximum eigenvalue and the spectral norm of $\bM$. The $(i,j)$ element of $\bM$ is denoted as $[\bM]_{i,j}$.  For a vector $\bv \in \bbR^d$, $\|\mathbf{v}\|_q= (\sum_{t=1}^d |v_t|^q)^{1/q}$  is the $\ell_q$  norm for $q\in [1,\infty]$.  The notation $\bone$ denotes the vector of all ones. For two vectors $\bu,\bv\in\bbR^d$,   $\bu\le\bv$ means $u_t\le v_t$ for all $t=1,\ldots, d$. The notation $\bu\ge\bv$   is defined similarly. Sets   will   be denoted by calligraphic font (e.g., $\calX$).  Subsets of Euclidean space will be denoted by  script font (e.g., $\scR$). 

Types (empirical distributions) will be denoted by upper case (e.g., $P$) and distributions by lower case (e.g., $p$).  The set of   distributions supported on a finite set $\calX$ and   the set of $n$-types supported on $\calX$ will be denoted by $\scP(\calX)$ and  $\scP_n(\calX)$ respectively. The type of a sequence $x^n$  is denoted as $P_{x^n}$. The set of all sequences whose type is some $P$ is denoted as $\calT_P$, the {\em type class}. For two sequences $x^n\in \calX^n, y^n\in \calY^n$, the conditional type of $y^n$ given $x^n$ is the stochastic matrix $V:\calX\to\calY$  satisfying $P_{x^n}(a)V(b|a) = P_{x^n, y^n} (a,b)$ for all $(a,b)\in\calX\times\calY$. The set of $y^n$ with conditional type $V$ given $x^n$ is denoted by $\calT_V(x^n)$, the {\em $V$-shell of $x^n$}.  The family  of   stochastic matrices $V:\calX\to\calY$ for which the $V$-shell   of a sequence $x^n \in\calT_P$ is not empty is denoted as $\scV_n(\calY;P)$~\cite[Sec.~2.5]{Csi97}. 

Entropy  and conditional entropy are denoted as $H(X)=H(p_{X})$ and $H(Y|X) = H( p_{Y|X}|p_{X})$ respectively. Mutual information is denoted as $I(X;Y)= I(p_X, p_{Y|X})$. We often times make the dependence on the distribution explicit.  Let $x^n , y^n$ be a pair of sequences for which the $y^n$ has conditional type $V$ given $x^n$ and let $\tilX$ and $\tilY$ be dummy random variables with joint distribution  $P_{x^n, y^n}$. Then,   the notations $\hatH (x^n)= H(P_{x^n})=H(\tilX)$ and $\hatH (y^n|x^n)=H(V|P_{x^n})=H(\tilY|\tilX)$ denote, respectively, the empirical marginal and conditional entropies respectively. Note that empirical information quantities will generally be denoted with hats. So for example, the empirical mutual information  of the random variables $\tilX, \tilY$ above will be denoted interchangeably as $\hatI(x^n \wedge y^n) = I(P_{x^n}, V) = I(\tilX ; \tilY)$.     Empirical conditional mutual information is defined similarly. 

The multivariate Gaussian probability density function with mean $\boldm$ and covariance   $\bLambda$ is denoted as $\calN(\bu; \boldm, \bLambda)$ or more simply as $\calN(\boldm, \bLambda)$. For a standard univariate  Gaussian $\calN(u;0,1)$, the cumulative distribution function and $\rmQ$-function are defined as $\Phi(z)  := \int_{-\infty}^z \calN(u;0,1)\, \mathrm{d}u$ and $\rmQ(z):=1-\Phi(z)$ respectively. The functional inverse of the $\rmQ$-function is denoted as $\rmQ^{-1}(\epsilon)$. The Bernoulli random variable $X\sim\Ber(q)$ if $\rvP(X=1)=q$ and $\rvP(X=0)=1-q$.  Logarithms are to the base 2.  We also use the discrete interval notation $[2^{nR}]:=\{1,\ldots, \ceil{2^{nR}}\}$.  Asymptotic notation  such as $o(\fndot), O(\fndot)$ and $\Theta(\fndot)$ is used throughout. See~\cite[Sec.~I.3]{Cor03} for definitions.

\section{Dispersion of   Distributed Lossless Source Coding} \label{sec:sw}

Distributed lossless source coding---also known as the Slepian-Wolf or SW problem---consists in {\em separately} encoding two (or more) correlated sources $(X_1^n, X_2^n)\sim\prod_{k=1}^n p_{X_1, X_2}(x_{1k}, x_{2k})$ into a pair of rate-limited messages $(M_1,M_2) \in [2^{nR_1}]\times[2^{nR_2}]$. Subsequently, given these compressed versions of the sources, a decoder seeks to reconstruct $(X_1^n, X_2^n)$. One of the most  remarkable results in information theory, proved by Slepian and Wolf in 1973~\cite{sw73}, states  that the set of achievable rate pairs $(R_1, R_2)$ is asymptotically equal to that when each of the encoders is also given knowledge of the other source, i.e., encoder 1 knows $X_2^n$ and vice versa. The optimal rate region $\scR_{\mathrm{SW}}^*$ is  given by  the polyhedron
\begin{subequations}\label{eqn:sw_reg}
\begin{align}
R_1& \ge H(X_1|X_2 ) \\*
R_2&\ge H(X_2|X_1 ) \\*
R_1+R_2 &\ge H(X_1,X_2 ).  
\end{align}  
\end{subequations}
We are also interested in the optimal \emph{weighted sum-rate}; that is, for constants $\alpha,\beta\ge 0$, the minimum value of $\alpha R_1+\beta R_2$ for achievable $(R_1,R_2)$. Of particular interest is the case $\alpha=\beta=1$, corresponding to the standard sum-rate, but other cases may be important as well, such as if transmitting from encoder 1 is more costly than transmitting from encoder  2. Because of the polygonal shape of the optimal region described in \eqref{eqn:sw_reg}, the optimal weighted sum-rate is always achieved at one of the two corner points, and the optimal rate is given by
\begin{align}
R_\text{sum}^*(\alpha,\beta):=\begin{cases} \alpha H(X_1|X_2)+\beta H(X_2) & \alpha\ge\beta\\ \alpha H(X_1)+\beta H(X_2|X_1) & \alpha<\beta.\end{cases}
\label{eqn:opt_sum_rate}
\end{align}
As with most other statements in   information theory \cite{Csi97}, the results in \eqref{eqn:sw_reg} and \eqref{eqn:opt_sum_rate} are first-order asymptotic. In this section, we analyze the second-order, or dispersion behavior of the SW problem. That is, we study how quickly achievable rates can approach the asymptotic fundamental limits given in \eqref{eqn:sw_reg} and \eqref{eqn:opt_sum_rate} as the blocklength grows.

We will focus on the two-sender case. 
A SW code is characterized by four parameters; the  {\em blocklength} $n$, the {\em rates} of the first and second sources $(R_1, R_2)$ and the {\em  probability of error}   defined as 
\begin{equation}
\Pen:= \rvP( ( \hatX_1^n, \hatX_2^n)\ne (X_1^n , X_2^n)), \label{eqn:perr_sw}
\end{equation}
where $\hatX_1^n$ and  $\hatX_2^n$ are the reconstructed versions of $X_1^n$ and $X_2^n$ respectively.\footnote{A more challenging task would be to consider constituent error probabilities $\rvP(\hatX_1^n\ne X_1^n)$, $\rvP(\hatX_2^n\ne X_2^n)$ and $\Pen$ and place three  different upper bounds $\epsilon_1, \epsilon_2$ and $\epsilon_3$  on these probabilities. We choose to consider the single compound error probability in \eqref{eqn:perr_sw} for simplicity.} Each blocklength $n$ and probability of error $\epsilon \in (0,\frac{1}{2})$ results in some achievable region of rate pairs that will in general be smaller than the asymptotically optimal region $\scR_{\mathrm{SW}}^*$ (for $\epsilon \in (\frac{1}{2},1)$, the achievable region will, in general, be larger). Our first result in this section is a characterization of the $(n,\epsilon)$-optimal region up to a $O(\frac{\log n}{n})$ correction term. This is a tight second-order result in the sense that it the gap between inner and outer bounds is $O(\frac{\log n}{n})$, and thus it exactly specifies the constants on the $O(\frac{1}{\sqrt{n}})$ terms with which the $(n,\epsilon)$-region approaches the asymptotically optimal region. However, it is a global rather than a local result, and as such it is opaque to certain behaviors about how the achievable region approaches the optimal SW boundary. To complete the story, we also define two other dispersions operationally. We characterize these  dispersions exactly by leveraging the first, global result. The first is \emph{local dispersion}, meaning the speed of convergence to a specific point on the boundary of the asymptotically optimal region from a specific angle. The second type of dispersion considers the weighted sum-rate discussed above: in particular, how quickly the weighted sum-rate can approach the asymptotically optimal rate given in \eqref{eqn:opt_sum_rate}.

We start with definitions followed by  the statements of our results. We then discuss the implications of our results. The proof of the global result is provided in Section~\ref{sec:prf_sw}, and the proofs of the other dispersion results are in Appendix~\ref{app:slice}.

\subsection{Definitions} \label{sec:defs_sw}
Let $(\calX_1,\calX_2,p_{X_1, X_2}(x_1,x_2))$ be a discrete memoryless multiple source (DMMS). This means that $(X_1^n, X_2^n)\sim\prod_{k=1}^n p_{X_1,X_2}(x_{1k}, x_{2k})$, i.e., the source is independent and identically distributed (i.i.d.). We remind the reader that the alphabets $\calX_1, \calX_2$ are finite. We also assume throughout that $p_{X_1, X_2}(x_1, x_2)>0$ for every $(x_1, x_2)\in\calX_1\times\calX_2$ and that the sources are not independent. Finally, we assume that the error probability $0<\epsilon <1$.

\begin{definition} \label{def:sw_code}
An {\em $(n,2^{nR_1}, 2^{nR_2}, \epsilon)$-SW code} consists of two encoders $f_{j,n}:\calX_j^n\to \calM_{j} =[2^{nR_j}], j=1,2$,   and a decoder $\varphi_n:\calM_{1}\times \calM_{2}\to \calX_1^n \times \calX_2^n$ such that the  the error probability in~\eqref{eqn:perr_sw} with $(\hatX_1^n , \hatX_2^n) :=\varphi_n(  f_{1,n}(X_1^n), f_{2,n}(X_2^n))$  does not exceed $\epsilon$. 
\end{definition}

\begin{definition} \label{def:neps_ach}
A rate pair $(R_1 ,R_2 )$ is  {\em  $(n,\epsilon)$-achievable}  if there exists an $(n, 2^{nR_1 }, 2^{nR_2 }, \epsilon)$-SW code for the DMMS $p_{X_1, X_2}(x_1,x_2)$. The {\em  $(n,\epsilon)$-optimal rate region} $\scR_{\mathrm{SW}}^*(n,\epsilon) \subset\bbR^2$ is the set of all  $(n,\epsilon)$-achievable  rate pairs. 
\end{definition}

\begin{definition} \label{def:weighted_sum}
A weighted sum-rate $R_\textrm{sum}$ is $(n,\epsilon,\alpha,\beta)$-\emph{achievable} if there exists an $(n,\epsilon)$-achievable pair $(R_1,R_2)$ such that $\alpha R_1+\beta R_2\le R_\text{sum}$. Let $R_\text{sum}^*(n,\epsilon;\alpha,\beta)$ be the minimum $(n,\epsilon,\alpha,\beta)$-achievable sum-rate.
\end{definition}

\begin{figure}
\centering
\begin{tabular}{ccc}
\begin{picture}(115, 115)
\input{common_pic}

\put(20, 95){\vector(1,0){10}}
\put(50, 95){\vector(-1,0){10}}
\put(30, 110){\footnotesize $\sqrt{ \frac{[ \bV]_{1,1}}{n}} \, \rmQ^{-1}(\epsilon)\!+O\big(\frac{\log n}{n}\big)$}

\put(48, 80){\vector(-1,-1){18}}
\multiput(30,62)(4,0){8}{\line(1,0){2}}
\qbezier(43,75)(47,68)(47,62)
\put(50,67){\footnotesize $\theta$}
\end{picture} &  \hspace{.35in}
\begin{picture}(115, 115)
\input{common_pic}

\put(60, 70){\vector(-1,-1){25}}
\multiput(35,45)(4,0){8}{\line(1,0){2}}
\qbezier(50,60)(55,55)(55,45)
\put(57,50){\footnotesize $\theta$}
\end{picture} &  \hspace{.35in}
\begin{picture}(115, 115)
\input{common_pic}
 
\put(65, 60){\vector(-1,-2){15}}
\qbezier(53,36)(56,34)(56,30)
\put(58,32){\footnotesize $\theta$}
\put(50,30){\circle*{4}}
\put(52,20){\footnotesize $(   R_1^*,    R_2^*   )$}
\end{picture} \\
(a) &  \hspace{.35in} (b)&  \hspace{.35in} (c) 
\end{tabular}
\caption{Schematic plots of the $(n,\epsilon)$-optimal rate region $\scR_{\mathrm{SW}} (n,\epsilon)$  for $\epsilon\le\frac{1}{2}$  and the asymptotic SW region in \eqref{eqn:sw_reg} whose boundary is indicated by $\scB_{\mathrm{SW}}^*$. We use the simplified notation $H_1:= H(X_1), H_2:= H(X_2), H_{1|2}:= H(X_1|X_2)$,  $H_{2|1}:= H(X_2|X_1)$ and $H_{1,2}=H(X_1, X_2)$. The directions of approach   are indicated by the arrows in the different subplots. In subplot (a), we approach the vertical boundary; the local dispersion $F(\theta,\epsilon;R_1^*,R_2^*)$ is given in~\eqref{eqn:subplota}. In subplot (b), we approach the sum-rate boundary; the local dispersion is given in~\eqref{eqn:subplotb}. In subplot (c), we approach the corner point $(H_1, H_{2|1})$; the local dispersion is given  implicitly in~\eqref{eqn:subplotc}.} \label{fig:slices}
\end{figure}

Our analysis in this paper will be focused not on providing direct bounds on $\scR_{\mathrm{SW}}^*(n,\epsilon)$ and $R_\text{sum}^*(n,\epsilon;\alpha,\beta)$ for finite $n$, but rather on the speed at which these approach $\scR_{\text{SW}}^*$ and $R_\text{sum}^*$ respectively, as $n\to\infty$. In particular, we are interested in characterizing the quanities defined in the following two definitions. These are both versions of operational dispersion. The first is \emph{local dispersion} (illustrated in Fig.~\ref{fig:slices}): the speed of convergence to a particular asymptotic rate pair from a given angle,
 and the second is \emph{weighted sum-rate dispersion}: the speed of convergence of the weighted sum-rate for a given weight pair.

\begin{definition} \label{def:local_disp}
Fix a rate pair $(R_1^*,R_2^*)$ on the boundary of the  asymptotic SW rate region $\scR_{\mathrm{SW}}^*$, and a probability of error $\epsilon>0$. The dispersion-angle pair $(F,\theta)$ is $(R_1^*,R_2^*,\epsilon)$-\emph{achievable} if there exists a sequence of $(n,2^{nR_{1,n}},2^{nR_{2,n}},\epsilon)$-SW codes such that 
\begin{align}
\limsup_{n\to\infty}\sqrt{n}\left( R_{1,n}-R_1^* \right) &\le \sqrt{F}(\cos\theta) \rmQ^{-1}(\epsilon)\\
\limsup_{n\to\infty}\sqrt{n}\left( R_{2,n}-R_2^* \right) &\le \sqrt{F}(\sin\theta ) \rmQ^{-1}(\epsilon).
\end{align}
The {\em local dispersion} $F(\theta,\epsilon;R_1^*,R_2^*)$ is the infimum of all $F$ such that $(F,\theta)$ is $(R_1^*,R_2^*,\epsilon)$-\emph{achievable}.
\end{definition}

\begin{definition} \label{def:weighted_sum_disp}
The {\em weighted sum-rate dispersion} for the weight pair $(\alpha,\beta)$ and probability of error $\epsilon$ is given by
\begin{equation}
G(\epsilon;\alpha,\beta):=\liminf_{n\to\infty}\  n\left(\frac{R_\text{sum}^*(n,\epsilon;\alpha,\beta)-R_\text{sum}^*(\alpha,\beta)}{\rmQ^{-1}(\epsilon)}\right)^2
\end{equation}
where $R_\text{sum}^*(\alpha,\beta)$ is defined in \eqref{eqn:opt_sum_rate}.
\end{definition}

Observe from Definition~\ref{def:local_disp} that for any $\epsilon>0$, angle $\theta$, and asymptotic rate pair $(R_1^*,R_2^*)$, there exist codes with rates $\{(R_{1,n},R_{2,n})\}_{n\in\bbN}$ and probability of error $\epsilon$ satisfying the approximate relationships
\begin{subequations}\label{eqn:rate_approx}
\begin{align}
R_{1,n}&\approx R_1^*+\sqrt{\frac{F(\theta,\epsilon;R_1^*,R_2^*)}{n}}(\cos\theta)\rmQ^{-1}(\epsilon) \\
R_{2,n}&\approx R_2^*+\sqrt{\frac{F(\theta,\epsilon;R_1^*,R_2^*)}{n}}(\sin\theta)\rmQ^{-1}(\epsilon).
\end{align}
\end{subequations}
The only interesting values of $\theta$ are those for which the rates $\{(R_{1,n},R_{2,n})\}_{n\in\bbN}$ approach the asymptotic rate pair $(R_1^*, R_2^*)$ from the interior (resp.\ exterior) of the asymptotic SW rate region $\scR_{\mathrm{SW}}^*$ when $\epsilon\le\frac{1}{2}$ (resp.\ when  $\epsilon  > \frac{1}{2}$). For example, when approaching a point on the vertical boundary [see Fig.~\ref{fig:slices}(a)], the local dispersion is only interesting if $-\frac{\pi}{2}<\theta<\frac{\pi}{2}$.

From Definition~\ref{def:weighted_sum_disp}, for any $\epsilon>0$ and weight pair $(\alpha,\beta)$, there exists codes with rates  $\{(R_{1,n},R_{2,n})\}_{n\in\bbN}$ and probability of error $\epsilon$ satisfying
\begin{equation}
\alpha R_{1,n}+\beta R_{2,n}\approx R_\text{sum}^*(\alpha,\beta)+\sqrt{\frac{G(\epsilon;\alpha,\beta)}{n}}\rmQ^{-1}(\epsilon). \label{eqn:weighted_approx}
\end{equation}
Below, Theorem~\ref{thm:local_disp} exactly characterizes $F(\theta,\epsilon;R^*_1,R^*_2)$, and Theorem~\ref{thm:sum_rate_disp} exactly characterizes $G(\epsilon;\alpha,\beta)$.

We now define quantities that will allow us to state our results. For a positive-semidefinite symmetric matrix $\bV \in\bbR^{d\times d}$, let the random vector $\bZ\sim \calN(\bzero, \bV)$. Note that $\calN(\bzero, \bV)$  is a degenerate Gaussian if $\bV$ is singular. If $\rank(\bV)=r<d$,  all the probability mass of $p(\bu) = \calN(\bu;\bzero,\bV)$ lies in a subspace of dimension $r$ in $\bbR^d$. Define the  set 
\begin{equation}
\scS (\bV,\epsilon):= \{ \bz\in \bbR^3: \rvP (\bZ\le \bz)  \ge 1-\epsilon\}. \label{eqn:SVset}
\end{equation}
Note that $\scS(\bV,\epsilon)\subset\bbR^3$  is well-defined even if $\bV$ is singular. Furthermore,  $\scS (\bV,\epsilon')\subset\scS (\bV,\epsilon)$ if $\epsilon' \le \epsilon$.   This set is  analogous to the (inverse) cumulative distribution function of a zero-mean Gaussian with covariance matrix $\bV$.   
If $\epsilon\le \frac{1}{2}$, $\scS (\bV,\epsilon)$  is a convex, unbounded set in the positive orthant in $\bbR^3$. The boundary of $\scS (\bV,\epsilon)$ is   smooth if $\bV$ is positive-definite. We shall see that this set scaled by $\frac{1}{\sqrt{n}}$, namely $\frac{1}{\sqrt{n}}\scS(\bV,\epsilon)$, plays an important role in specification of bounds on  the $(n,\epsilon)$-optimal rate region. This set is diagrammed in two dimensions (for ease of visualization) in  Fig.~\ref{fig:SV}.  We note that the boundaries are indeed curved due to the fact that $\bV\succ 0$. Note that as $n$  increases to infinity or $\epsilon$ increases towards $\frac{1}{2}$, the boundaries are translated closer to the horizontal and vertical axes. If $\epsilon>\frac{1}{2}$, the region strictly includes the positive orthant. Also observe that as the condition number\footnote{Recall that the {\em condition number} of $\bV$ is the ratio of its maximum to minimum eigenvalues, i.e., $\mathrm{cond}(\bV)=\lambda_{\max}(\bV)/\lambda_{\min}(\bV)$. } $\bV$ increases, i.e., $\bV$   tends towards being singular,  the corners of the curves become ``sharper'' (or ``less rounded'').  Indeed, in the limiting case when $\bV$  has rank one, the support of $p(\bu)=\calN(\bu;\bzero,\bV)$ belongs to a subspace of dimension one. In this case,  the set  $\scS (\bV,\epsilon)$ is an axis-aligned, unbounded rectangle (a cuboid in higher dimensions). See further discussions in Section~\ref{sec:sing}.

\begin{figure}
\centering
\begin{overpic}[width = .8\columnwidth]{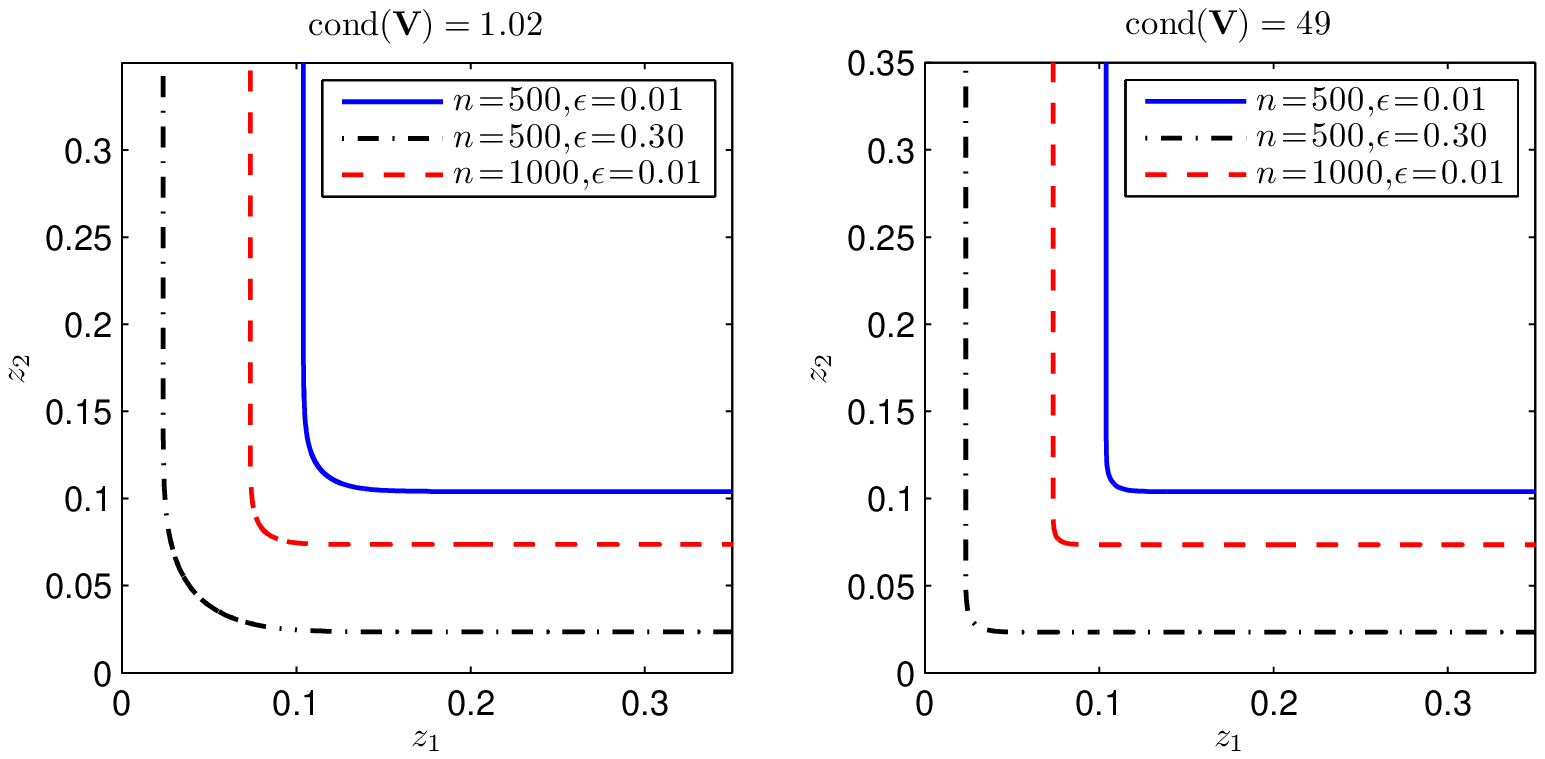}
\put(25,27){\small $\frac{1}{\sqrt{n}}\scS(\bV,\epsilon)$}
\put(75,27){\small $\frac{1}{\sqrt{n}}\scS(\bV,\epsilon)$}
\end{overpic}
\caption{The boundaries of the region $\frac{1}{\sqrt{n}}\scS(\bV,\epsilon)$ for different values  $n$, $\epsilon$ and $\bV$. On the left plot, $\bV=[ 1 \,\,\, 0.01 ; 0.01 \,\,\, 1]$ (small condition number) and on the right,  $\bV=[ 1 \,\,\, 0.96 ; 0.96 \,\,\, 1]$ (large condition number).     The regions $\frac{1}{\sqrt{n}}\scS(\bV,\epsilon)$ lie  to the top right corner of the boundaries. $\scS(\bV,\epsilon)$  defined in \eqref{eqn:SVset} is a subset of $\bbR^3$ but in the figures, we only illustrate the projection of the set in two dimensions.} 
\label{fig:SV}
\end{figure}

\begin{definition}
The {\em  entropy density vector} is defined as 
\begin{equation}
\bh( X_1, X_2 ) := \begin{bmatrix} -\log p_{X_1|X_2}(X_1| X_2)\\ -\log p_{X_2|X_1}(X_2| X_1) \\ -\log p_{X_1,X_2}(X_1 , X_2) \end{bmatrix} . \label{eqn:ent_dens}
\end{equation}
\end{definition}
The mean of the entropy density vector is  the  vector of entropies, i.e., 
\begin{equation}\label{eqn:entropy_vec}
\rvE [\bh( X_1, X_2 ) ] = \bH( p_{X_1, X_2}) := \begin{bmatrix} H(X_1|X_2) \\ H(X_2|X_1) \\ H(X_1, X_2) \end{bmatrix}. 
\end{equation}
We denote the entries of $ \bH( p_{X_1, X_2})$ as $H_t( p_{X_1, X_2})$ for $1\le t\le 3$. Also, let 
\begin{equation}
\kappa:= \max_{1\le t\le 3}   \big\|\nabla_{p_{X_1,X_2}}^2 H_t(p_{X_1,X_2}) \big\|_2 \label{eqn:kappa_sw}
\end{equation}
 be  the maximum of the spectral norms of the Hessians of $p_{X_1,X_2}\mapsto H_t(p_{X_1,X_2})$, viewed as functions of the vectorized version of $p_{X_1,X_2} \in \bbR^{|\calX_1| |\calX_2|}$. 
\begin{definition} 
 The {\em entropy dispersion matrix}   $\bV( p_{X_1, X_2})$ is the covariance matrix of the random vector $\bh( X_1, X_2)$ i.e., 
 \begin{equation}
\bV(p_{X_1, X_2}) = \cov( \bh( X_1, X_2) ). \label{eqn:swv}
\end{equation}
\end{definition}
We abbreviate the deterministic quantities  $\bH( p_{X_1, X_2}) \in \bbR^3$ and $\bV( p_{X_1, X_2}) \succeq 0$ as $\bH$ and $\bV$ respectively.   Observe that $\bV$ is a matrix analogue of scalar dispersion.
We will find it convenient, in this and following sections, to define the non-negative {\em rate vector} $\bR \in \bbR^3$  as
\begin{equation} \label{eqn:rate_vec}
 \bR:=\begin{bmatrix} R_1 \\ R_2  \\ R_1+R_2 \end{bmatrix} .
\end{equation}
\begin{definition} \label{def:inner_sw}
Define the region $\scR_{\mathrm{in}}(n,\epsilon) \subset\bbR^2$ to be  the set of rate pairs $(R_1,R_2)$ that satisfy
\begin{equation}
\bR \in \bH  + \frac{1}{\sqrt{n}}  \scS (\bV  ,  \epsilon) +   \frac{\nu\log n}{n}  \bone ,\label{eqn:ach_sw}
\end{equation}
where $\nu: = |\calX_1||\calX_2|+ \kappa+3/2$. Also define the region $\scR_{\mathrm{out}}(n,\epsilon)\subset\bbR^2$ to be  the set of rate pairs $(R_1,R_2)$ that satisfy
\begin{equation}
\bR \in \bH  + \frac{1}{\sqrt{n}}  \scS (\bV  ,  \epsilon) -  \frac{ \log  n}{n} \bone . \label{eqn:con_sw}
\end{equation}
\end{definition}

\begin{definition} 
Define  the bivariate generalization of the $\rmQ$-function as
\begin{equation}
\Psi(\rho; x',y'):= \frac{1}{2\pi\sqrt{1-\rho^2}} \int_{x'}^{\infty}\int_{y'}^{\infty} \exp \left\{ -\frac{x^2-2\rho xy+y^2}{2(1-\rho^2)} \right\}\, \mathrm{d}y  \, \mathrm{d}x.   \label{eqn:Psi}
\end{equation}
\end{definition}

\subsection{Main Results and Interpretation}
\begin{theorem}[Global Dispersion   for Slepian-Wolf] \label{thm:orr}
Let $\epsilon\in (0,1)$. The $(n,\epsilon)$-optimal rate region $\scR_{\mathrm{SW}}^*(n,\epsilon)$ satisfies
\begin{equation}
 \scR_{\mathrm{in}}(n,\epsilon) \subset \scR_{\mathrm{SW}}^*(n,\epsilon) \subset \scR_{\mathrm{out}}(n,\epsilon)   \label{eqn:rr_sw}
\end{equation}
for all $n$ sufficiently large. Furthermore, the inner bound is universally attainable, i.e., the coding scheme does not depend on the knowledge of the source statistics.
\end{theorem}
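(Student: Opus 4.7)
The plan is to reduce both inclusions in~\eqref{eqn:rr_sw} to Gaussian approximations of $\rvP[\tfrac{1}{n}\sum_{k=1}^n \bh(X_{1k}, X_{2k}) \le \bgamma]$ via the multidimensional Berry--Ess\'een theorem applied to the i.i.d.\ sample mean of the entropy-density vector in~\eqref{eqn:ent_dens}, which has mean $\bH$ and covariance $\bV/n$. This is the substance of the ``vector rate redundancy theorem'' to be developed in Section~\ref{sec:proofs}, and the proof of Theorem~\ref{thm:orr} amounts to connecting the operational error probability to this sample average in each of the two directions.

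For achievability I would use universal random binning: draw $f_{j,n}(x_j^n)\sim\mathrm{Unif}([2^{nR_j}])$ independently across $j\in\{1,2\}$ and across sequences, and decode using a minimum joint empirical entropy rule, which requires no knowledge of $p_{X_1, X_2}$ and is therefore universal. The error event decomposes into a union, over $t\in\{1,2,3\}$, of the event that some confusable sequence sharing the appropriate bin has empirical entropy no greater than the true source: e.g., for $t=1$, $\exists\,\tilde x_1^n\ne X_1^n$ with $f_{1,n}(\tilde x_1^n)=f_{1,n}(X_1^n)$ and $\hatH(\tilde x_1^n|X_2^n)\le\hatH(X_1^n|X_2^n)$. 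Combining the standard type-class bound $|\calT_V(x_2^n)|\le(n+1)^{|\calX_1||\calX_2|}2^{n\hatH(V|P_{x_2^n})}$ with the uniform binning yields
\begin{equation*}
\Pen \le \rvP\bigl[\hat{\bH}(X_1^n, X_2^n)\not\le\bR-\tfrac{c_1\log n}{n}\bone\bigr]+o(1/n),
\end{equation*}
where $\hat{\bH}$ is the empirical analogue of~\eqref{eqn:entropy_vec} and the polynomial $(n+1)^{|\calX_1||\calX_2|}$ contributes the $|\calX_1||\calX_2|$ summand in $\nu$. A second-order Taylor expansion of each $H_t$ around $p_{X_1,X_2}$ gives $\hat{\bH}(X_1^n, X_2^n)=\tfrac{1}{n}\sum_k \bh(X_{1k}, X_{2k})+\br_n$ with $\|\br_n\|_\infty\le c_2\kappa\|P_{X_1^n, X_2^n}-p_{X_1, X_2}\|_2^2=O(\log n/n)$ off an exponentially small event, contributing the $\kappa$ summand; the remaining $3/2$ in $\nu$ absorbs the Berry--Ess\'een residual. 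Once the probability above is identified with a Gaussian tail through multidimensional Berry--Ess\'een, the defining inequality $(R_1,R_2)\in\scR_{\mathrm{in}}(n,\epsilon)$ immediately guarantees that it is at most $\epsilon$.

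For the converse I would use a three-dimensional counting argument. Let $\scA$ denote the set of source pairs correctly decoded by a given $(n,2^{nR_1},2^{nR_2},\epsilon)$-SW code. Because the decoder is a function of $(f_{1,n}(X_1^n), f_{2,n}(X_2^n))$, one has $\rvP[\scA]\ge 1-\epsilon$, $|\scA|\le 2^{n(R_1+R_2)}$, $|\{x_1^n:(x_1^n,x_2^n)\in\scA\}|\le 2^{nR_1}$ for every $x_2^n$, and the symmetric inequality for $R_2$. Together these cardinality bounds imply, for each $t\in\{1,2,3\}$ and any $\gamma>0$,
\begin{equation*}
\rvP\bigl[\{-\tfrac{1}{n}\log q_t(X_1^n, X_2^n)>R_t+\gamma\}\cap\scA\bigr]\le 2^{-n\gamma},
\end{equation*}
where $q_t$ denotes $p_{X_1|X_2}$, $p_{X_2|X_1}$ or $p_{X_1, X_2}$ according to the coordinate. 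A union bound over $t$ combined with $\rvP[\scA^c]\le\epsilon$ yields $\rvP[\tfrac{1}{n}\bh(X_1^n, X_2^n)\le\bR+\gamma\bone]\ge 1-\epsilon-3\cdot 2^{-n\gamma}$; choosing $\gamma=\tfrac{\log n}{n}$ and invoking multidimensional Berry--Ess\'een once more converts this into $(R_1,R_2)\in\scR_{\mathrm{out}}(n,\epsilon)$.

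The principal obstacles are (i) handling the potentially singular covariance $\bV$---if $\bh$ has linear dependencies among its three coordinates then $\calN(\bzero,\bV)$ is degenerate, so Berry--Ess\'een must be applied after projecting onto the row space of $\bV$, which is precisely why $\scS(\bV,\epsilon)$ in~\eqref{eqn:SVset} has been defined directly via the Gaussian CDF without assuming $\bV\succ 0$; and (ii) bundling the three correction sources $|\calX_1||\calX_2|$, $\kappa$, and $3/2$ into the single $\nu\log n/n$ cushion without loss, which requires the Taylor remainder and the Berry--Ess\'een residual to be controlled simultaneously on the same high-probability event where $\|P_{X_1^n, X_2^n}-p_{X_1, X_2}\|_2$ is small. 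With these in place, the same framework---substituting the appropriate information-density vector and its covariance---will later yield the inner bounds for the MAC and ABC in Sections~\ref{sec:mac}--\ref{sec:abc}.
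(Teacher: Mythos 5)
Your proposal is correct and takes essentially the same approach as the paper's proof: random binning with a universal empirical-entropy decoder and a type-counting union bound for achievability, and a cardinality/information-spectrum argument for the converse, both reduced via the multidimensional Berry--Ess\'een theorem (after Taylor-expanding the empirical entropy vector) to Gaussian approximations with covariance $\bV/n$. The paper's decoder is actually a fixed threshold test (accept the unique pair with $\hatbH\le\bR-\delta_n\bone$) rather than a strict minimum-entropy rule, and it cites the Miyake--Kanaya / Han converse lemma rather than re-deriving the underlying counting bound as you do, but these are cosmetic variants that lead to the same estimates; your residual bookkeeping is also slightly imprecise (the competitor probability under the paper's choice of $\delta_n$ is $\Theta(1/\sqrt{n})$, not $o(1/n)$, and the high-probability event on which the Taylor remainder is small has probability $1-O(1/n)$, not $1-e^{-\Omega(n)}$), though neither affects the final $O(\frac{\log n}{n})$ slack.
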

The proof is provided in Section~\ref{sec:prf_sw}. 
We now state our results on the local and sum-rate dispersion. These are proved in Appendix~\ref{app:slice} and hold for all $\epsilon \in (0,1)$.

\begin{theorem}[Local Dispersion for Slepian-Wolf] \label{thm:local_disp}
Let $\theta \in [0,2\pi]$. Depending on $(R_1^*,R_2^*)$, there are five cases:
\begin{enumerate}
\item $R_1^*=H(X_1|X_2)$ and $R_2^*>H(X_2)$ (vertical boundary). Then if $-\frac{\pi}{2}<\theta <\frac{\pi}{2}$, 
\begin{equation}
F(\theta,\epsilon;R_1^*,R_2^*)= \frac{[\bV]_{1,1}}{ \cos^2 \theta     }. \label{eqn:subplota}
\end{equation}
\item $R_2^*=H(X_2|X_1)$ and   $R_1^*>H(X_1)$ (horizontal boundary). Then if $0< \theta < \pi$, 
\begin{equation}
F(\theta,\epsilon;R_1^*,R_2^*)= \frac{[\bV]_{2,2}}{ \sin^2 \theta     }. \label{eqn:hori}
\end{equation}
\item $R_1^*+R_2^*=H(X_1, X_2)$, $R_1^*>H(X_1|X_2)$ and  $R_2^*>H(X_2|X_1)$  (sum-rate boundary). Then if $ -\frac{\pi}{4}<\theta <   \frac{3\pi}{4}$, 
\begin{equation}
F(\theta,\epsilon;R_1^*,R_2^*)= \frac{[\bV]_{3,3}}{  (\cos \theta+\sin \theta)^2 }. \label{eqn:subplotb}
\end{equation}
\item $R_1^*= H(X_1|X_2) $ and $R_2^* = H(X_2)$. Then  if $-\frac{\pi}{4} < \theta< \frac{\pi}{2}$, $F(\theta,\epsilon;R_1^*,R_2^*)$ is the solution to 
\begin{equation}
\Psi \left(\rho_{1,3};-\sqrt{\frac{F}{[\bV]_{1,1}}} \, (\cos\theta)\, \rmQ^{-1}(\epsilon), -\sqrt{\frac{F}{[\bV]_{3,3}}} \, (\cos\theta+\sin\theta)\, \rmQ^{-1}(\epsilon) \right) = 1-\epsilon, \label{eqn:corner_pt}
\end{equation}
where $\rho_{1,3}:=[\bV]_{1,3}/\sqrt{[\bV]_{1,1}[\bV]_{3,3}}$ is the correlation coefficient of  the random variables $-\log p_{X_1|X_2}(X_1|X_2)$ and  $-\log p_{X_1,X_2}(X_1, X_2)$. 
\item $R_1^*= H(X_1 ) $ and $R_2^* = H(X_2|X_1)$. Then   if $0<\theta< \frac{3\pi}{4}$, $F(\theta,\epsilon;R_1^*,R_2^*)$ is the solution to 
\begin{equation}
\Psi \left(\rho_{2,3};-\sqrt{\frac{F}{[\bV]_{2,2}}} \, (\sin\theta)\, \rmQ^{-1}(\epsilon), -\sqrt{\frac{F}{[\bV]_{3,3}}} \, (\cos\theta+\sin\theta)\, \rmQ^{-1}(\epsilon) \right) = 1-\epsilon \label{eqn:subplotc}
\end{equation}
where $\rho_{2,3}$ is defined analogously to $\rho_{1,3}$.
\end{enumerate}
\end{theorem}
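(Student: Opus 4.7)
The plan is to derive Theorem~\ref{thm:local_disp} as a direct consequence of the global dispersion result Theorem~\ref{thm:orr}. The $O(\log n/n)$ correction terms in the inclusions~\eqref{eqn:rr_sw} are asymptotically negligible at the $\Theta(1/\sqrt n)$ local scale of interest: dividing them by $1/\sqrt n$ leaves $O(\log n/\sqrt n)=o(1)$, so both $\scR_{\mathrm{in}}(n,\epsilon)$ and $\scR_{\mathrm{out}}(n,\epsilon)$ squeeze the $(n,\epsilon)$-optimal region down to the translate $\bH+\frac{1}{\sqrt n}\scS(\bV,\epsilon)$ up to $o(1/\sqrt n)$ slack.

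First, I parametrize the candidate rate sequence as $R_{j,n}=R_j^*+\frac{\sqrt F}{\sqrt n}\alpha_j\rmQ^{-1}(\epsilon)+o(1/\sqrt n)$ with $(\alpha_1,\alpha_2)=(\cos\theta,\sin\theta)$. By Theorem~\ref{thm:orr}, such a sequence is $(n,\epsilon)$-achievable for all sufficiently large $n$ if and only if the scaled deviation $\bz_n:=\sqrt n(\bR_n-\bH)$ eventually lies in $\scS(\bV,\epsilon)$, i.e.\ $\rvP(\bZ\le \bz_\infty)\ge 1-\epsilon$ for $\bZ\sim \calN(\bzero,\bV)$, where
\[
\bz_\infty = \sqrt F\,\rmQ^{-1}(\epsilon)\begin{bmatrix}\cos\theta\\ \sin\theta\\ \cos\theta+\sin\theta\end{bmatrix}+\lim_{n\to\infty}\sqrt n\begin{bmatrix}R_1^*-H(X_1|X_2)\\ R_2^*-H(X_2|X_1)\\ R_1^*+R_2^*-H(X_1,X_2)\end{bmatrix}.
\]
Denote the entries of the second vector by $\Delta_1,\Delta_2,\Delta_3\ge 0$; those that are strictly positive cause the corresponding coordinate of $\bz_\infty$ to diverge to $+\infty$, so $\rvP(Z_t\le[\bz_\infty]_t)\to 1$ and the $t$-th constraint drops out in the limit.

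For Cases 1--3 (non-corner boundary points), exactly one of $\Delta_1,\Delta_2,\Delta_3$ is zero, so the joint event $\{\bZ\le\bz_\infty\}$ collapses to a single univariate condition on $Z_t\sim\calN(0,[\bV]_{t,t})$. Minimizing $F$ turns this univariate inequality into an equality, yielding $\sqrt F\,\alpha_t\,\rmQ^{-1}(\epsilon)=\sqrt{[\bV]_{t,t}}\,\rmQ^{-1}(\epsilon)$ for the binding coordinate $t\in\{1,2,3\}$, and hence~\eqref{eqn:subplota},~\eqref{eqn:hori}, and~\eqref{eqn:subplotb} respectively. The stated angular ranges are precisely those for which the binding $\alpha_t$ is strictly positive, which is the condition needed for the univariate tail inequality to yield a finite positive $F$ for both $\epsilon<\frac{1}{2}$ and $\epsilon>\frac{1}{2}$.

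For Cases 4--5, two of the $\Delta_t$'s vanish simultaneously (the two constraints meeting at the corner) while the third is strictly positive and drops out. The surviving condition $\rvP(\bZ\le\bz_\infty)\ge 1-\epsilon$ becomes a bivariate Gaussian tail inequality in coordinates $(1,3)$ for Case 4 and $(2,3)$ for Case 5. Normalizing each surviving $Z_t$ by $\sqrt{[\bV]_{t,t}}$ reduces the joint vector to a standard bivariate Gaussian with correlation $\rho_{1,3}$ (resp.\ $\rho_{2,3}$), and rewriting $\rvP(Z_t\le z_t,\,Z_3\le z_3)$ as the upper-tail event $\rvP(X\ge-z_t/\sqrt{[\bV]_{t,t}},\,Y\ge-z_3/\sqrt{[\bV]_{3,3}})=\Psi(\rho_{\cdot,3};\cdot,\cdot)$ via~\eqref{eqn:Psi} produces~\eqref{eqn:corner_pt} (resp.\ \eqref{eqn:subplotc}) exactly. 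Achievability uses the inner inclusion of Theorem~\ref{thm:orr} and the converse uses the outer inclusion. The main obstacle is the limit-exchange argument near the corner points: I must verify that (i) the $O(\log n/n)$ slack from Theorem~\ref{thm:orr} does not corrupt the limit of $\sqrt n(\bR_n-\bH)$ at the $1/\sqrt n$ scale, (ii) the bivariate Gaussian CDF is continuous and strictly monotone in its thresholds, so that infimizing $F$ over achievable values produces an equality in~\eqref{eqn:corner_pt}, and (iii) the stated angular ranges are exactly those on which the equation admits a unique positive finite solution $F$---outside these ranges, either no $F$ suffices or the local dispersion degenerates.
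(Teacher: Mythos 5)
Your proposal is correct and follows essentially the same route as the paper: reduce to the global characterization (Theorem~\ref{thm:orr}), identify which of the three constraints are binding at $(R_1^*,R_2^*)$ (one for the non-corner cases, two for the corner cases), drop the non-binding ones in the limit, normalize the surviving coordinates to a standard uni- or bivariate Gaussian, and equate the tail probability to $1-\epsilon$. The only organizational difference is that the paper packages this reasoning inside a general-purpose Lemma (Lemma~\ref{lemma:gen_disp}) that handles both Theorem~\ref{thm:local_disp} and the weighted sum-rate dispersion of Theorem~\ref{thm:sum_rate_disp} simultaneously, whereas you specialize directly to the local-dispersion optimization.

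One point you flag as an ``obstacle'' but leave unresolved is in fact the crux of the achievability direction: showing that the non-binding coordinates can be discarded without corrupting the $1/\sqrt{n}$-scale answer. Your statement that ``$\rvP(Z_t\le[\bz_\infty]_t)\to 1$ so the $t$-th constraint drops out'' is only one direction. For the lower bound on the optimal $F$, dropping constraints can only increase the joint probability, so monotonicity suffices. But for the upper bound (exhibiting a sequence of rates that actually lies in $\scR_{\mathrm{in}}(n,\epsilon)$ for large $n$) you must show that putting the binding coordinates \emph{exactly} at their Gaussian threshold plus $o(1/\sqrt n)$ still satisfies the full three-dimensional constraint; the paper does this by giving the binding coordinates an exponentially small extra push $\tau_n$ and bounding $\rvP(\bar{\calA}_{j,n}^c)$ via a Chernoff estimate $\rmQ(t)\le \tfrac{1}{2}e^{-t^2/2}$ for $j\in\calN_\bs^c$ (eqs.~\eqref{eqn:chernoff}--\eqref{eqn:chernoff3}). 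Without some quantitative decay rate on the non-binding tails, the ``$o(1/\sqrt n)$'' slack you allow is not \emph{a priori} large enough to absorb them, so this step needs to be filled in explicitly.
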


\begin{theorem}[Weighted Sum-Rate Dispersion for Slepian-Wolf]\label{thm:sum_rate_disp}
If $\alpha\ge\beta$, then
\begin{equation}
G(\epsilon;\alpha,\beta)=\min_{w_1,w_2} (\alpha w_1+\beta w_2)^2
\label{eqn:G_sol}
\end{equation}
where the minimum is taken over all $w_1,w_2$ satisfying
\begin{equation}\label{eqn:G_constraint1}
\Psi\left(\rho_{1,3};-\frac{w_1 \rmQ^{-1}(\epsilon)}{\sqrt{[\bV]_{1,1}}},-\frac{(w_1+w_2)\rmQ^{-1}(\epsilon)}{\sqrt{[\bV]_{3,3}}}\right)=1-\epsilon.
\end{equation}
If $\alpha<\beta$, then $G(\epsilon;\alpha,\beta)$ is also given by \eqref{eqn:G_sol} but with the minimization subject to
\begin{equation}\label{eqn:G_constraint2}
\Psi\left(\rho_{2,3};-\frac{w_2 \rmQ^{-1}(\epsilon)}{\sqrt{[\bV]_{2,2}}},-\frac{(w_1+w_2)\rmQ^{-1}(\epsilon)}{\sqrt{[\bV]_{3,3}}}\right)=1-\epsilon.
\end{equation}
The correlation coefficients $\rho_{1,3}$ and $\rho_{2,3}$ are as defined in Theorem~\ref{thm:local_disp}.
\end{theorem}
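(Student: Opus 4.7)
The plan is to reduce the computation of $G(\epsilon;\alpha,\beta)$ to a two-dimensional convex optimization by invoking the global dispersion characterization in Theorem~\ref{thm:orr}. I will only treat the case $\alpha \ge \beta$; the case $\alpha < \beta$ follows by the same argument with the roles of the two sources exchanged, localizing instead near the other corner point $(H(X_1), H(X_2|X_1))$.

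Since $\alpha \ge \beta$, the asymptotic minimum $R_{\text{sum}}^*(\alpha,\beta) = \alpha H(X_1|X_2) + \beta H(X_2)$ is attained uniquely at the corner $(H(X_1|X_2), H(X_2))$ of the asymptotic Slepian--Wolf region. I parametrize near this corner by
\[
R_{1,n} = H(X_1|X_2) + w_1 \frac{\rmQ^{-1}(\epsilon)}{\sqrt{n}}, \qquad R_{2,n} = H(X_2) + w_2 \frac{\rmQ^{-1}(\epsilon)}{\sqrt{n}},
\]
so that $n [ (\alpha R_{1,n}+\beta R_{2,n} - R_{\text{sum}}^*(\alpha,\beta))/\rmQ^{-1}(\epsilon) ]^2 = (\alpha w_1 + \beta w_2)^2$, exactly matching the right-hand side of \eqref{eqn:G_sol}. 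Under this parametrization the shifted rate vector $\sqrt{n}(\bR_n - \bH)$ has components $(w_1 \rmQ^{-1}(\epsilon), \sqrt{n}\, I(X_1;X_2) + w_2 \rmQ^{-1}(\epsilon), (w_1+w_2) \rmQ^{-1}(\epsilon))$; Theorem~\ref{thm:orr} asserts that $(R_{1,n}, R_{2,n})$ is $(n,\epsilon)$-achievable (respectively, satisfies the outer bound) if and only if this vector lies in $\scS(\bV,\epsilon)$ up to an additive $O(\log n / \sqrt{n})$ perturbation in each coordinate.

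The critical simplification is that, by the standing assumption that the sources are dependent, $I(X_1;X_2) > 0$ and the middle coordinate tends to $+\infty$ with $n$. The constraint on $Z_2$ in the definition of $\scS(\bV,\epsilon)$ therefore becomes vacuous in the limit, and membership reduces to the bivariate condition $\rvP(Z_1 \le w_1 \rmQ^{-1}(\epsilon),\, Z_3 \le (w_1+w_2) \rmQ^{-1}(\epsilon)) \ge 1-\epsilon$. Standardizing $(Z_1, Z_3)$---a centered bivariate Gaussian with correlation $\rho_{1,3}$---this is exactly \eqref{eqn:G_constraint1} with $\ge$ in place of $=$. Achievability then follows by choosing a feasible $(w_1^\star, w_2^\star)$ with strict inequality in the constraint, applying the inner bound $\scR_{\mathrm{in}}(n,\epsilon)$, and observing that the $\nu\log n / n$ slack is absorbed for all sufficiently large $n$. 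The converse follows by taking a sequence of near-optimal $(n,\epsilon)$-codes, applying the outer bound $\scR_{\mathrm{out}}(n,\epsilon)$, and passing to a subsequential limit of the induced $(w_{1,n}, w_{2,n})$; continuity of $\Psi$ then produces a feasible limit in the constraint set. Minimization of the linear functional $\alpha w_1 + \beta w_2$ over the feasible set is attained on the boundary $\{\Psi(\cdot) = 1-\epsilon\}$, and because $\alpha \ge \beta \ge 0$ forces this minimum to have a definite sign (non-negative for $\epsilon < 1/2$ and non-positive for $\epsilon > 1/2$), squaring preserves the extremal value, matching \eqref{eqn:G_sol}.

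The main obstacles I anticipate are (i) verifying uniform boundedness of the converse sequence $(w_{1,n}, w_{2,n})$---arguing that $w_{1,n}$ and $w_{1,n}+w_{2,n}$ cannot drift to $-\infty$ without violating $\scS(\bV,\epsilon)$-membership, and cannot drift to $+\infty$ without exceeding the weighted sum-rate of any strictly interior asymptotic rate pair (here the assumption $\alpha \ge \beta \ge 0$ is used to control both coordinates simultaneously); and (ii) carefully propagating the $O(\log n / \sqrt{n})$ error terms so that the reduction from the trivariate $\scS(\bV,\epsilon)$ constraint to the bivariate $\Psi$-constraint is uniform over the relevant compact range of $(w_1, w_2)$. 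Once these technicalities are dispatched, the conclusion reduces to a continuity-plus-optimization argument on the explicitly described feasible set.
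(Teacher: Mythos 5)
Your proposal is correct and captures the same two essential ideas the paper uses: localize at the corner point $(H(X_1|X_2),H(X_2))$ (respectively, the other corner when $\alpha<\beta$), and then observe that because $I(X_1;X_2)>0$ the middle component of $\sqrt{n}(\bR_n-\bH)$ diverges to $+\infty$, so the $Z_2$-constraint in $\scS(\bV,\epsilon)$ is asymptotically vacuous and the problem reduces to the bivariate $\Psi$-constraint on $(Z_1,Z_3)$, after which one optimizes the linear cost $\alpha w_1+\beta w_2$ and squares.

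The paper differs in packaging and in how the lower bound is closed, and this is worth noting because it avoids the obstacle you flag in your item (i). The paper proves a shared Lemma~\ref{lemma:gen_disp} that handles both Theorems~\ref{thm:local_disp} and~\ref{thm:sum_rate_disp}, parametrizing the problem by a cost vector $\bc$ and matrix $\bA$. For the lower (converse) bound it does \emph{not} pass to a subsequential limit of the converse sequence $(w_{1,n},w_{2,n})$; instead it observes that any point feasible for the full trivariate $\scS(\bV,\epsilon)$-membership is automatically feasible for the relaxed problem that keeps only the constraints indexed by $\calN_\bs=\{1,3\}$, so the minimum of the relaxed problem lower-bounds $\bc^T\bs_n$ directly, with no compactness argument required. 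For the upper (achievability) bound it proceeds much as you describe, using a Chernoff estimate to show that the $j\in\calN_\bs^c$ failure probabilities decay exponentially and so are absorbed into the $o(1/\sqrt{n})$ slack. Your subsequential-limit converse would also work once the boundedness of the converse sequence is established, but the LP-relaxation route used in the paper sidesteps that issue entirely and is tidier to execute with the $O(\log n/n)$ perturbations from Theorem~\ref{thm:orr}.
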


\subsubsection{Discussion of Theorem~\ref{thm:orr}} \label{sec:sw_dis}
The direct part  of Theorem~\ref{thm:orr} is proved using the usual random binning argument~\cite{sw73, cover75} together with a multidimensional  Berry-Ess\`{e}en theorem \cite{Ben03}. The latter allows us to prove an important  vector rate redundancy theorem (Theorem~\ref{thm:vrrt}). This    theorem is a recurring proof technique---it is also used to prove the direct parts of the analogous results for the multiple-access and broadcast channels.  The decoder is a modification of a minimum empirical entropy \cite{Csi97} decoding rule. More precisely, we require  the three empirical entropies $\hatH ( X_1^n|X_2^n )$, $\hatH ( X_2^n|X_1^n )$ and $\hatH ( X_1^n,X_2^n  )$   to be {\em jointly}  smaller than some perturbed rate vector $\bR-\delta_n\bone$, where $\bR$ is in the inner bound and $\delta_n = O(\frac{\log n}{n})$. By Taylor's theorem, it can   be  seen that the empirical entropy vector behaves  like a multivariate Gaussian with mean $\bH$ and covariance $\frac{\bV}{n}$, explaining the presence of these terms in~\eqref{eqn:ach_sw} and~\eqref{eqn:con_sw}.      The converse is proved by leveraging on  an   information spectrum theorem for the SW problem by  Miyake and Kanaya~\cite{miyake}. Also see~\cite[Lemma~7.2.2]{Han10}.  Theorem \ref{thm:orr} extends naturally to the case where there are more than two senders.

\subsubsection{Comparison with Polygonal Region} \label{sec:comparison_poly}

We now focus on interpreting the results of Theorem~\ref{thm:local_disp}, which provide a different perspective on the rate region. In particular, we compare the rate region with that of another source coding problem. Consider the $(n,\epsilon)$-region for lossless source coding with side information  at encoders and decoder (SI-ED), also known as {\em cooperative} source coding.  Specifically, first consider the problem of source coding $X_1$ with $X_2$ available as (full non-coded) side information at the encoder and the decoder. Second, we swap the roles of $X_1$ and $X_2$. Third, we consider a single-user source coding problem for the pair  $(X_1,X_2)$. Up to   $O(\frac{\log n}{n})$ terms, this region  $\scR^*_{\mathrm{SI-ED}}(n,\epsilon) \subset\bbR^2$ is the set of rate pairs $(R_1, R_2)$ satisfying the three scalar constraints
\begin{subequations} \label{eqn:sied} 
\begin{align}
R_1&\ge H(X_1|X_2)+\sqrt{\frac{[\bV]_{1,1}}{n}}\, \rmQ^{-1}(\epsilon) \label{eqn:sieda} \\*
R_2&\ge H(X_2|X_1)+\sqrt{\frac{[\bV]_{2,2}}{n}}\, \rmQ^{-1}(\epsilon)  \label{eqn:siedb} \\*
R_1+R_2&\ge H(X_1,X_2)+\sqrt{\frac{[\bV]_{3,3}}{n}}\, \rmQ^{-1}(\epsilon)   . \label{eqn:siedc} 
\end{align}
\end{subequations}
The three {\em decoupled}   constraints in \eqref{eqn:sied}, which describe a {\em piecewise linear} region, represent three single-user simplifications of the problem and therefore three outer bounds to $\scR^*_{\mathrm{SW}}(n,\epsilon)$. The first two inequalities characterizing the region in \eqref{eqn:sied} can be derived in a straightforward manner using a side information (conditional) version of Strassen's original result~\cite{strassen} for hypothesis testing. The last inequality is simply one of Strassen's original results on source coding. Also see Problem 1.1.8 in  Csisz\'{a}r and   K\"{o}rner~\cite{Csi97} and Theorem 1 in Kontoyiannis~\cite{Kot97}.

It may appear that the piecewise linear region in \eqref{eqn:sied} is very different from that described by Theorem~\ref{thm:orr}. In fact, Theorem~\ref{thm:local_disp} asserts that these two regions differ \emph{only} at the two corner points. For example, consider a rate pair approaching a point $(R_1^*,R_2^*)$ on the vertical boundary of the asymptotic region (i.e. $R_1^*=H(X_1|X_2)$ and $R_2^*>H(X_2)$) as in Fig.~\ref{fig:slices}(a). For the region in \eqref{eqn:sied}, the only relevant constraint in the neighborhood of $(R_1^*,R_2^*)$ is the constraint on $R_1$ in~\eqref{eqn:sieda}. For the region for the full SW problem,  substituting \eqref{eqn:subplota} into \eqref{eqn:rate_approx}, we find that the best rates approaching $(R_1^*,R_2^*)$ from direction $\theta$ are given by the approximate relations
\begin{subequations} \label{eqn:approx_rates} 
\begin{align}
R_1&\approx H(X_1|X_2)+\sqrt{\frac{[\bV]_{1,1} }{n}}\, \rmQ^{-1}(\epsilon)  \label{eqn:rate1}\\
R_2&\approx R_2^*+\sqrt{\frac{[\bV]_{1,1} }{n}}\,(\tan\theta)\, \rmQ^{-1}(\epsilon). \label{eqn:rate2}
\end{align}
\end{subequations}
Notice that if $\theta\notin\{\pm\frac{\pi}{2}\}$ and $n$ is sufficiently large, $R_2$ can be made arbitrarily close to $R_2^*$, whereas the constraint on $R_1$ in \eqref{eqn:rate1} is identical to that in \eqref{eqn:sieda}. That is, up to $O(\frac{\log n}{n})$ terms, achievable rates are the same as if $X_2$ were known perfectly at the decoder. This makes intuitive sense, since $R_2^*>H(X_2)$, so we are in the large deviations regime  for the second source $X_2$ and the error probability for reconstructing $X_2$ vanishes much more quickly than that for $X_1$. In fact, it vanishes  exponentially fast  and the exponent is the almost-lossless source coding error exponent~\cite[Ch.~2]{Csi97}. Similarly, the SI-ED and SW regions do not differ when approaching the horizontal boundary or the sum-rate boundary as in Fig.~\ref{fig:slices}(b).

However, when approaching either of the corner points, as in Fig.~\ref{fig:slices}(c), the situation is more complicated. In particular, the \emph{scalar} perspective on dispersion illustrated in the region in \eqref{eqn:sied} is insufficient, because characterizing the dispersions at the corner points require \emph{off-diagonal terms} of the $\bV$ matrix, as stated in \eqref{eqn:corner_pt} and \eqref{eqn:subplotc}. Intuitively, this is because there are several forces at play---for the $(H(X_1),H(X_2|X_1))$ point, the contribution from the marginal dispersion $[\bV]_{2,2}$, the contribution from the  sum rate dispersion $[\bV]_{3,3}$ and also the  correlation coefficient $\rho_{2,3}$. These interact to give an local dispersion that can only be expressed implicitly as in \eqref{eqn:corner_pt}--\eqref{eqn:subplotc}. Note that now the dispersion depends   on the angle of approach and the correlation  coefficient of  $-\log p_{X_2|X_1}(X_2|X_1)$ and $-\log p_{X_1, X_2}(X_1, X_2)$ namely $\rho_{2,3}$. Hence, the off-diagonal elements of the dispersion matrix $[\bV]_{1,3}$ and $[\bV]_{2,3}$ are required to characterize the dispersion. However, the element $[\bV]_{1,2}$ never appears, because there is no point at the intersection of the vertical and horizontal boundaries of the optimal rate region (for dependent sources), so they are never simultaneously at play. Hence, even though $[\bV]_{1,2}$ is an element of the dispersion matrix, it has no impact on the local dispersion behavior.

It may at first appear that because this non-scalar dispersion behavior occurs at only two points, it is merely a curiosity, but Theorem~\ref{thm:sum_rate_disp} asserts that this is not the case. In particular, because the corner points are the extreme points of the optimal rate region, the behavior in their vicinity is vital to the behavior of the optimal weighted sum-rate. This is evident in the statement of Theorem~\ref{thm:sum_rate_disp}, that in order to characterize the weighted sum-rate dispersion requires off-diagonal terms of the $\bV$ matrix. However, the dispersion for certain pairs $(\alpha,\beta)$ reduces to that for the scalar case. In particular, if $\beta=0$, it is not hard to show (see Appendix~\ref{app:slice}) using Theorem~\ref{thm:sum_rate_disp} that
\begin{equation}\label{eqn:Gsimp1}
G(\epsilon;\alpha,0)=\alpha^2 [\bV]_{1,1}.
\end{equation}
Similarly, if $\alpha=0$, then
\begin{equation}\label{eqn:Gsimp2}
G(\epsilon;0,\beta)=\beta^2 [\bV]_{2,2}.
\end{equation}
Finally, if $\alpha=\beta$, then
\begin{equation}\label{eqn:Gsimp3}
G(\epsilon;\alpha,\alpha)=\alpha^2 [\bV]_{3,3}.
\end{equation}
Because these special cases are the only ones for which the weighted sum-rate over the asymptotic rate region is not uniquely minimized at a corner point, these results agree with the assessment that the SW region does not differ from the SI-ED region away from the corner points.

\begin{figure}
\centering
\includegraphics[width = .73\columnwidth]{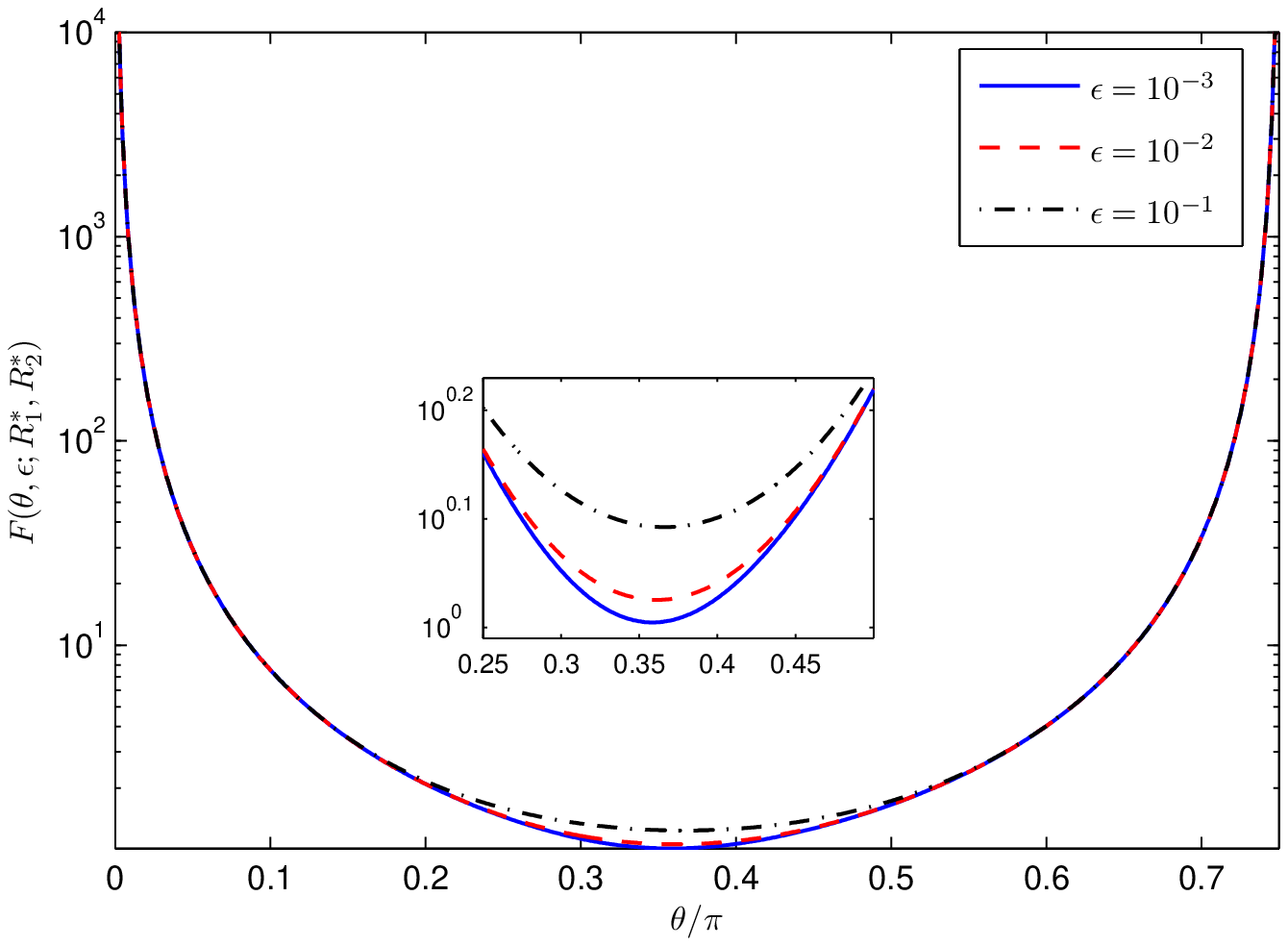}
\caption{Plots of  $F(\theta,\epsilon;R^*_1,R^*_2 )$  against $\theta \in (0, \frac{3\pi}{4})$ for different $\epsilon$'s. This plot  shows the local dispersion as we approach the corner point $(R^*_1,R^*_2)=(H(X_1) , H(X_2|X_1))$ from various angles.  See Fig.~\ref{fig:slices}(c) and the expression for $F(\theta,\epsilon,R^*_1,R^*_2)$ in \eqref{eqn:subplotc}. } 
\label{fig:angle}
\end{figure}

\subsubsection{Comments on Local Dispersion at Corner Points}\label{sec:local_disp_comments}

Interestingly, at corner points  the local dispersions  given by \eqref{eqn:corner_pt}--\eqref{eqn:subplotc}  depends  on $\epsilon$, unlike the corresponding local dispersions  for non-corner points in \eqref{eqn:subplota}--\eqref{eqn:subplotb}.  This is illustrated numerically in Fig.~\ref{fig:angle} for the source $p_{X_1, X_2}=[0.7 \, \, \,  0.1; 0.1\,\, \,  0.1]$ and the   corner point $(R^*_1,R^*_2)=(H(X_1) , H(X_2|X_1))$. Also, it can be seen that as $\theta \downarrow 0$,   $F(\theta,\epsilon;H(X_1),H(X_2|X_1))$ increases without bound. This
agrees with intuition because when $\theta$ is small, we are approaching the corner point   almost
parallel to the horizontal boundary of $\scR_{\mathrm{SW}}(n,\epsilon)$. When $\theta\uparrow\frac{3\pi}{4}$, similarly, we are almost parallel to
the sum rate boundary. 
On the other hand, when $\theta$ is moderate (say $\theta\approx\frac{3\pi}{8}$), the rate pair is further into the interior of $\scR_{\mathrm{SW}}(n,\epsilon)$,
hence the local dispersion is smaller. The constant $\frac{3\pi}{8}$ is in fact not arbitrary because the angle between
the horizontal boundary and the sum rate boundary of $\scR_{\mathrm{SW}}(n,\epsilon)$,  is exactly $\frac{3\pi}{4}$. Hence $\frac{3\pi}{8}$ is the half-angle,
which means that the rate pair is, in a sense, furthest away from either boundary. However, the smallest local dispersion does not
occur at exactly $\theta=\frac{3\pi}{8}$ because of some asymmetry between the entropy densities $-\log p_{X_1|X_2}(X_1|X_2)$ and
$-\log p_{X_1,X_2}(X_1,X_2)$.

The case of approaching a corner point parallel to a boundary line deserves further discussion. One may ask, for example, what trajectories of $(R_{1,n},R_{2,n})$ are achievable that approach the point $(H(X_1),H(X_2|X_1))$ parallel to the horizontal boundary. All we learn  from  Fig.~\ref{fig:angle} and from the characterization of the local dispersion is that $R_{1,n}$ must approach $H(X_1)$ with a perturbation term larger than $O(\frac{1}{\sqrt{n}})$. This can also be observed  for the case where we approach a non-corner point along the horizontal boundary of the SW region. See~\eqref{eqn:hori} with $\theta=0$ in which case $F(\theta,\epsilon; R_1^*, H(X_2|X_1))=\infty$ for any $R_1^*>H(X_1)$. Answering these types of questions would seem to require techniques from moderate deviations \cite{Tan12,sason12, altug10}, and as such it is beyond the scope of this paper. However, we believe that our characterization of the constants on all $O(\frac{1}{\sqrt{n}})$ terms provides a good---if in this sense incomplete---portrait of the second-order behavior, and we defer this more challenging question to future work.

\subsubsection{Singular Entropy Dispersion Matrices} \label{sec:sing}

What are the implications of the $(n,\epsilon)$-SW region (Theorem~\ref{thm:orr}) for singular $\bV$'s?  Note that Theorem~\ref{thm:orr} holds regardless of whether $\bV$ is singular or positive-definite (but not for the trivial case where $\bV=\bzero$ so  we assume throughout that $\rank(\bV)\ge 1$). Sources for which $\bV$ is singular include those which are (i) independent, i.e., $I(X_1; X_2)=0$, (ii) either $X_1$ or $X_2$ is uniform over their alphabets. It is easy to see why $I(X_1;X_2)=0$ results in a singular $\bV$ --- this is because the third entry in the entropy density vector is a linear combination of the first two. Thus $\bV$ loses rank. Case (ii) was analyzed by   Sarvotham et al.~\cite{Sar05} where $X_1,X_2 \in\bbF_2$,   $X_1\sim\Ber(\frac{1}{2})$, $X_2=X_1\oplus N$ with $N\sim\Ber(\zeta), \zeta\in (0, \frac{1}{2})$. The pair of random variables $(X_1, X_2)$ is the so-called {\em discrete symmetric  binary source} (DSBS) with crossover probability $\zeta$. For the DSBS,  Theorem 1 in~\cite{Sar05} asserts that the $(n,\epsilon)$-optimal rate region is (up to terms in $o(\frac{1}{\sqrt{n}})$)  
\begin{equation}
\bR\ge\bH + \sqrt{\frac{V_\zeta}{n}}\, \rmQ^{-1}(\epsilon)\bone , \label{eqn:sar_eq}
\end{equation}
where $V_\zeta$ is a scalar entropy dispersion [to be specified precisely in~\eqref{eqn:Vq}]. Thus, the three inequalities are {\em decoupled}. In contrast, in   Theorem~\ref{thm:orr}, we showed that  the $(n,\epsilon)$-optimal rate region for general DMMSes is such that the constraints on $R_1$, $R_2$ and $R_1+R_2$ are {\em coupled} through the set $\scS(\bV,\epsilon)$. 

Let us relate \eqref{eqn:sar_eq} to our Theorem~\ref{thm:orr}.  For the DSBS, it can be verified that $\rank(\bV)=1$ and  that $\bV$ is a scalar multiple of the all ones matrix, i.e.,  $\bV=V_\zeta \bone_{3\times 3}$ where $V_\zeta = \var(-\log p_{X_1|X_2}(X_1|X_2))=\var(-\log p_{X_2|X_1}(X_2|X_1))= \var(-\log p_{X_1,X_2}(X_1,X_2))$. Intuitively, this is because there is only one {\em degree of freedom} in a DSBS with crossover probability $\zeta$.  The parameter  $V_\zeta$ is exactly the scalar dispersion in~\eqref{eqn:sar_eq}. In fact, it can be calculated in closed-form for the DSBS with crossover probability $\zeta$ as 
\begin{equation}
V_\zeta = \zeta(1-\zeta) \left[ \log \left( \frac{1-\zeta}{\zeta} \right)\right]^2.  \label{eqn:Vq}
\end{equation}
For this source, since $\rank(\bV)=1$,  all the probability mass of the degenerate Gaussian $\calN(\bzero, \bV)$ lies in a subspace of dimension one.  Therefore,  it is easy to see that  $\scS(\bV, \epsilon) $ defined in \eqref{eqn:SVset} degenerates to the axis-aligned cuboid
\begin{equation}
\scS(\bV, \epsilon) = \big\{\bz\in \bbR^3: z_t\ge \sqrt{V_\zeta} \, \rmQ^{-1}(\epsilon),\forall\, 1\le t\le 3  \big\}. \label{eqn:SVset_rank1}
\end{equation}
The quantity $\sqrt{\frac{V_\zeta}{n}} \, \rmQ^{-1}(\epsilon)$ is approximately the    rate redundancy  \cite{Baron04, Sar05b,Sar05, He09} for fixed-length SW coding  for a DSBS. In this case,  the inner and outer bounds of the $(n,\epsilon)$-optimal rate region   degenerate  to \eqref{eqn:sar_eq}.   Thus the  fixed-length  results in~\cite{Baron04,Sar05b, Sar05, He09} are special cases of our general result. 
 This argument for singular dispersion matrices can be formalized and we do so in the latter half of the proof of Theorem~\ref{thm:vrrt}.

In fact, when the dispersion matrix $\bV$ is singular, the conclusions resulting from Theorems~\ref{thm:local_disp} and~\ref{thm:sum_rate_disp} become considerably simpler. Let us illustrate this on Theorem~\ref{thm:local_disp} with the DSBS with crossover probability $\zeta$ defined above. Clearly, for Theorem~\ref{thm:local_disp}, the conclusions in~\eqref{eqn:subplota}--\eqref{eqn:subplotb} stay the same. However, \eqref{eqn:corner_pt} and \eqref{eqn:subplotc} simplify to the following closed-form expressions:
\begin{align}
F(\theta,\epsilon; H(X_1|X_2), H(X_2) )  &= \frac{V_\zeta}{\cos^2 \theta}, \qquad\mbox{and} \label{eqn:singular1}\\*
F(\theta,\epsilon; H(X_1 ), H(X_2|X_1) )  &= \frac{V_\zeta}{\sin^2 \theta} .\label{eqn:singular2}
\end{align}
This is because $\rho_{1,3}=\rho_{2,3}=1$ and all elements of $\bV$ are identically equal to $V_\zeta$ so the two-dimensional analogue of the $\rmQ$ function, namely the $\Psi$ function  defined in~\eqref{eqn:Psi}, degenerates to $\rmQ$ function evaluated at the second argument. For example, the $\Psi$ function in~\eqref{eqn:corner_pt} becomes
\begin{equation}
\lim_{\rho\uparrow 1}\Psi \left(\rho;-\sqrt{\frac{F}{ V_\zeta }} \, (\cos\theta)\, \rmQ^{-1}(\epsilon), -\sqrt{\frac{F}{V_\zeta}} \, (\cos\theta+\sin\theta)\, \rmQ^{-1}(\epsilon) \right) =\rmQ\left( -\sqrt{\frac{F}{V_\zeta}} \, (\cos\theta)\right),
\end{equation}
which when equated to $1-\epsilon$ yields~\eqref{eqn:singular1}.

\section{Dispersion of  the Multiple-Access Channel} \label{sec:mac}
The multiple-access channel or MAC is the channel coding dual to the Slepian-Wolf problem described in Section~\ref{sec:sw} \cite[Sec.\ 3.2]{Csi97}. The MAC model has found numerous applications, especially in wireless communications where multiple parities would like to communicate to a single base station reliably.  For a MAC, there are two (or more)  independent messages $M_1 \in [2^{nR_1}]$ and $M_2 \in [2^{nR_2}]$. The two messages, which are uniformly distributed over their respective message sets, are {\em separately} encoded into sequence codewords $X_1^n \in \calX_1^n$ and $X_2^n \in \calX_2^n$ respectively. These codewords are  the inputs to a discrete memoryless multiple-access channel (DM-MAC) $W:\calX_1\times\calX_2 \to\calY$. The decoder receives $Y^n$ from the output of the DM-MAC and provides estimates of the messages $\hatM_1$ and $\hatM_2$ or declares that a decoding error has occurred. It is usually desired to send both messages {\em reliably}, that is, to ensure that the average probability of error 
\begin{equation}
\Pen := \rvP  ( \{\hatM_1 \ne M_1 \}\cup \{\hatM_2\ne M_2 \}  ) \label{eqn:err_mac}
\end{equation}
tends to zero as $n\to\infty$. The set of achievable rates, or the {\em capacity region} $\scC_{\mathrm{MAC}}^*$,  is given by 
\begin{align} 
R_1&\le I(X_1;Y|X_2, Q) \nn \\*
R_2&\le I(X_2;Y|X_1, Q) \nn \\*
R_1+R_2&\le I(X_1, X_2;Y| Q)  \label{eqn:mac_reg}
\end{align}
for some $p_Q$, $p_{X_1|Q}$ and  $p_{X_2|Q}$ and $|\calQ|\le 2$. This  asymptotic result was proved independently by Ahlswede~\cite{ahl71} and Liao~\cite{liao} and can be written in an alternative form which involves taking the convex hull instead of the introduction of the auxiliary time-sharing variable $Q$. See \cite{elgamal} for further discussions. A somewhat surprising result in the theory of MACs, which differs from point-to-point channel coding, is that the capacity region for average probability of error is strictly larger than that for maximal probability of error~\cite{dueck}.  We emphasize that we focus on the average  probability of error defined in \eqref{eqn:err_mac} throughout. Note that as with the SW case, we can consider  $\rvP( \hatM_1 \ne M_1)$, $\rvP( \hatM_2\ne M_2 )$ and $\Pen$ separately and place upper bounds on each of these constituent error probabilities but, for simplicity, we consider only $\Pen$  in \eqref{eqn:err_mac}.

In this section, we prove an inner bound  to $\scC_{\mathrm{MAC}}^*(n,\epsilon)$, the $(n,\epsilon)$-capacity region, when $n$ is large. This inner bound illustrates the second-order behavior of the   capacity region in~\eqref{eqn:mac_reg}. We propose a coding scheme for a block of length $n$ that satisfies  $\Pen\le \epsilon$ for $n$ sufficiently large.  Our encoding scheme is  the coded time-sharing procedure  by Han and Kobayashi~\cite{Han81}. The decoding scheme is similar to MMI decoding~\cite{Goppa}. However, the  error probability analysis is rather different.  The result we present here is a {\em global dispersion} one (in the sense of Theorem~\ref{thm:orr}) but one can define similar notions of local  (Theorem~\ref{thm:local_disp}) and sum-rate dispersion  (Theorem~\ref{thm:sum_rate_disp}) for this and the ABC problem in the following section. In a similar manner to SW, our results lead naturally to inner bounds. We do not pursue this for the MAC and ABC problems as the analysis turns out to be similar to Theorems~\ref{thm:local_disp} and~\ref{thm:sum_rate_disp}.


\subsection{Definitions} \label{sec:defmac}
Let $(\calX_1,\calX_2,W, \calY)$ be a DM-MAC, i.e., for any input  codeword sequences $x_1^n \in\calX_1^n$ and $x_2^n \in\calX_2^n$, 
\begin{equation}
W^n( y^n |x_1^n, x_2^n) =\prod_{k=1}^n W(y_k | x_{1k}, x_{2k} ).
\end{equation}

\begin{definition}
An {\em $(n, 2^{nR_1}, 2^{nR_2}, \epsilon)$-code for the DM-MAC}  $(\calX_1,\calX_2,W, \calY)$  consists of two encoders $f_{j,n}: \calM_{j} =[2^{nR_j}] \to \calX_j^n, j=1,2$,   and a decoder $\varphi_n:\calY^n\to\calM_{1}\times \calM_{2}$ such that the  average  error probability defined in \eqref{eqn:err_mac} does not exceed $\epsilon$. Note that the outputs of the encoders are $f_{j,n}(M_j),j=1,2$ and the output of the decoder are the estimates $(\hatM_1,\hatM_2) =\varphi_n(Y^n)$.   The {\em coding rates} are  defined in the usual way.
\end{definition}
\begin{definition} \label{def:neps_ach_mac}
A rate pair $(R_1 ,R_2 )$ is  {\em  $(n,\epsilon)$-achievable}  if there exists an   $(n, 2^{nR_1}, 2^{nR_2}, \epsilon)$-code for the DM-MAC   $(\calX_1,\calX_2,W, \calY)$.   The {\em  $(n,\epsilon)$-capacity region} $\scC_{\mathrm{MAC}}^*(n,\epsilon) \subset\bbR^2$ is the set of all  $(n,\epsilon)$-achievable  rate pairs. 
\end{definition}
In contrast to the asymptotic setting, it is not obvious that $\scC_{\mathrm{MAC}}^*(n,\epsilon)$ is convex. The usual Time Sharing argument~\cite[Lemma 3.2.2]{Csi97} --- that the juxtaposition of two good multiple-access codes leads to a good but longer code --- does not hold because the blocklength is constrained to be a fixed integer $n$ so juxtaposition is not allowed.   Fix  a triple of  distributions $p_Q(q)$, $p_{X_1|Q}(x_1|q)$ and  $p_{X_2|Q}(x_2|q)$.  Given the channel $W$, these distributions induce the  following output conditional distributions
\begin{align}
p_{Y|X_2,Q}(y|x_2,q) &:= \sum_{x_1 }  p_{X_1|Q}(x_1|q) W(y|x_1,x_2) \label{eqn:output1} \\*
p_{Y| Q}(y| q) &:= \sum_{x_1,x_2}  p_{X_1|Q}(x_1|q)p_{X_2|Q}(x_2|q)W(y|x_1,x_2). \label{eqn:output3}
\end{align}
The output conditional distribution $p_{Y|X_1,Q}$ is defined similarly to $p_{Y|X_2,Q}$ with $1$ replaced by $2$ and vice versa.

\begin{definition}
The {\em  information density vector} is defined as 
\begin{equation}
\bi( Q,X_1, X_2, Y ) := \begin{bmatrix} \log   [ W(Y|X_1, X_2 ) / p_{Y|X_2,Q}(Y|X_2,Q) ] \\ \log [  {W(Y|X_1, X_2)}/{p_{Y|X_1,Q}(Y|X_1,Q) }] \\ \log [ {W(Y|X_1, X_2)}/{p_{Y|Q}(Y| Q) }]  \end{bmatrix} . \label{eqn:info_dens}
\end{equation}
where the distributions $p_{Y|X_2,Q}, p_{Y|X_1,Q}$ and $p_{Y|Q}$ are defined in \eqref{eqn:output1}--\eqref{eqn:output3}. The random variables $(Q, X_1,X_2, Y)$ have joint distribution $p_{Q}p_{X_1|Q} p_{X_2|Q} W$. 
\end{definition}
Observe that the expectation of the information density vector with respect to $p_Q p_{X_1|Q} p_{X_2|Q} W$ is  the vector of mutual information quantities in~\eqref{eqn:mac_reg}, i.e., 
\begin{align}
\rvE [\bi( Q,X_1, X_2, Y ) ]& = \bI (  p_Q, p_{X_1|Q}, p_{X_2|Q}, W)   := \begin{bmatrix} I(X_1;Y|X_2,Q)\\I(X_2;Y|X_1,Q)\\ I(X_1, X_2;Y|Q) \end{bmatrix}. 
\end{align}
Let $I_t( p_Q, p_{X_1|Q}, p_{X_2|Q}, W) $ be the $t$-th entry of $ \bI (  p_Q, p_{X_1|Q}, p_{X_2|Q}, W)$.  As in \eqref{eqn:kappa_sw}, let  
\begin{equation}
\kappa:= \max_{1\le t\le 3} \left\|\nabla_p^2 I_t(p) \right\|_2 \label{eqn:kappa_mac}
\end{equation}
where $p:=p_{Q, X_1, X_2, Y} =p_Qp_{X_1|Q} p_{X_2|Q}W $. 
\begin{definition} 
 The {\em information  dispersion matrix}   $\bV(  p_Q, p_{X_1|Q}, p_{X_2|Q}, W )$ is the covariance matrix  of the random vector $\bi( Q,X_1, X_2, Y )$ i.e., 
 \begin{equation}
\bV(  p_Q, p_{X_1|Q}, p_{X_2|Q}, W ) = \cov( \bi( Q,X_1, X_2, Y )  ). \label{eqn:macv}
\end{equation}
\end{definition}
If there is no risk of confusion, we abbreviate the deterministic vector  $ \bI (  p_Q, p_{X_1|Q}, p_{X_2|Q}, W) \in \bbR^3$ and  the deterministic matrix $\bV(  p_Q, p_{X_1|Q}, p_{X_2|Q}, W  ) \succeq 0$ as $\bI$ and $\bV$ respectively.  We assume throughout that the channel and the input distributions are such that $\rank(\bV)\ge 1$, i.e., $\bV$ is not the all-zeros  matrix. Recall the definition of the rate vector $\bR = [R_1, R_2, R_1+R_2]^T$ in \eqref{eqn:rate_vec}. 

\begin{definition} \label{def:inner_bd_mac}
Given triple of input distributions  $ (p_Q, p_{X_1|Q}, p_{X_2|Q})$, define the region $\scR(n,\epsilon; p_Q,p_{X_1|Q}, p_{X_2|Q}  ) \subset\bbR^2$ to be  the set of rate pairs $(R_1,R_2)$ that satisfy
\begin{equation}
\bR \in \bI  - \frac{1}{\sqrt{n}}  \scS (\bV  ,  \epsilon) -  \frac{\nu\log n}{n} \bone ,\label{eqn:ach_mac}
\end{equation} 
where $\nu:= |\calQ||\calX_1| |\calX_2| |\calY|+ \kappa+\frac{3}{2}$. In~\eqref{eqn:ach_mac}, $\bI :=  \bI (  p_Q, p_{X_1|Q}, p_{X_2|Q}, W)$ and $\bV := \bV(  p_Q, p_{X_1|Q}, p_{X_2|Q}, W  )$ and   the set $\scS (\bV  ,  \epsilon) \subset\bbR^3$ is defined in  \eqref{eqn:SVset}.
\end{definition}
\subsection{Main Result and Interpretation} \label{sec:mac_interp}
 
\begin{theorem}[Global Dispersion   for DM-MAC] \label{thm:mac} Let $\epsilon\in (0,1)$. The $(n,\epsilon)$-capacity  region $\scC_{\mathrm{MAC}}^*(n,\epsilon)$  for the DM-MAC satisfies
\begin{equation}
\bigcup_{p_Q, p_{X_1|Q}, p_{X_2|Q}}  \scR(n,\epsilon; p_Q,p_{X_1|Q}, p_{X_2|Q}  )  \subset \scC_{\mathrm{MAC}}^*(n,\epsilon)  \label{eqn:rr_mac}
\end{equation}
for all $n$ sufficiently large. Furthermore,   to preserve $\bI$ and $\bV$, the union over $p_Q$ can be restricted to those discrete distributions with support $\calQ$ whose  cardinality $|\calQ|\le 9$.   The inner bound is also universally attainable.
\end{theorem}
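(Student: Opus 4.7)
\textbf{Proof plan for Theorem~\ref{thm:mac}.} The plan is to exhibit a random code together with a modified MMI decoder and apply the vector rate redundancy theorem (Theorem~\ref{thm:vrrt}) essentially as in the Slepian--Wolf proof, but with information densities in place of entropy densities. First, I fix an input triple $(p_Q,p_{X_1|Q},p_{X_2|Q})$ and choose a deterministic time-sharing sequence $q^n$ whose type $P_{q^n}$ is $n$-close to $p_Q$ in total variation; the number of available $n$-types forces only an $O(\frac{\log n}{n})$ penalty. Conditionally on $q^n$, I draw codewords $X_1^n(m_1)$ and $X_2^n(m_2)$ independently, each uniformly from the respective conditional type class induced by $p_{X_1|Q}(\cdot|\cdot)$ and $p_{X_2|Q}(\cdot|\cdot)$, across all $m_j\in[2^{nR_j}]$. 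This constant-composition ensemble is the standard Han--Kobayashi coded time-sharing scheme and makes all involved types $n$-close to the nominal distribution $p=p_Qp_{X_1|Q}p_{X_2|Q}W$.

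The decoder declares $(\hat m_1,\hat m_2)$ if it is the unique message pair for which the \emph{triple} of empirical mutual informations $\hatI(X_1^n(\hat m_1)\wedge Y^n\,|\,X_2^n(\hat m_2),q^n)$, $\hatI(X_2^n(\hat m_2)\wedge Y^n\,|\,X_1^n(\hat m_1),q^n)$ and $\hatI(X_1^n(\hat m_1),X_2^n(\hat m_2)\wedge Y^n\,|\,q^n)$ all exceed the corresponding components of $\bR+\delta_n\bone$, with $\delta_n=\Theta(\frac{\log n}{n})$. This is the channel-coding analogue of the joint minimum-empirical-entropy decoder used for SW; the key feature is that the three conditions are checked \emph{simultaneously}, which is what allows the Gaussian fluctuations to be captured by the joint set $\scS(\bV,\epsilon)$ rather than three decoupled $\rmQ^{-1}$ conditions. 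I split the error event into $E_0$ (the true pair fails the decoding test) and $E_{S}$ for $S\in\{\{1\},\{2\},\{1,2\}\}$ (some competing pair differing from the true one in the indices $S$ passes the test). Standard method-of-types packing bounds (Csisz\'ar--K\"orner type) show that $\rvP(E_S)$ is exponentially small whenever the relevant components of $\bR$ sit strictly inside the mutual information constraints, and in particular the $\frac{\nu\log n}{n}$ slack in \eqref{eqn:ach_mac} gives a polynomial multiplicity factor that is absorbed.

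The main step is then to bound $\rvP(E_0)$. Since $(X_1^n(M_1),X_2^n(M_2),Y^n,q^n)$ is close to i.i.d.\ $\sim p$, and the empirical mutual informations can be written, up to $O(\frac{\log n}{n})$ deterministic corrections coming from the $L^\infty$ closeness of $(P_{q^n},P_{X_1^n,X_2^n,Y^n|q^n})$ to the nominal distribution, as normalized sums of the information density vector $\bi(Q,X_1,X_2,Y)$ defined in \eqref{eqn:info_dens}, I Taylor-expand each $\hatI_t$ around the nominal $p$, with Hessian bounds controlled by $\kappa$ in \eqref{eqn:kappa_mac}. This reduces the probability of $E_0$ to a probability that an i.i.d.\ sum of the centered information density vector falls in a translate of $-\sqrt{n}\cdot\scS(\bV,\epsilon)$. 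Invoking Theorem~\ref{thm:vrrt} (the vector rate redundancy theorem) and the multidimensional Berry--Ess\'een bound of Bentkus, this probability is at most $\epsilon$. Combined with the packing bounds, $\Pen\le\epsilon$ for $n$ large and any rate pair in $\scR(n,\epsilon;p_Q,p_{X_1|Q},p_{X_2|Q})$. Averaging over the random code yields a deterministic code with the same property (with a negligible blow-up absorbed into $\nu$), and random expurgation is not needed because we already control the \emph{average} error probability.

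The cardinality bound $|\calQ|\le 9$ is the final ingredient and will be the cleanest step: both $\bI(p_Q,p_{X_1|Q},p_{X_2|Q},W)\in\bbR^3$ and the six-dimensional upper triangle of $\bV(p_Q,p_{X_1|Q},p_{X_2|Q},W)\in\bbR^{3\times 3}$ are continuous functions of $p_Q$ (for fixed conditional laws), so a Fenchel--Eggleston--Carath\'eodory argument on the connected image lets me replace $p_Q$ by a convex combination of at most $3+6=9$ mass points without changing $\bI$ or $\bV$, which by \eqref{eqn:ach_mac} leaves the inner bound $\scR(n,\epsilon;p_Q,p_{X_1|Q},p_{X_2|Q})$ unchanged. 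Universality of the inner bound is immediate from the use of the empirical-mutual-information decoder, which is defined without reference to $W$ or the input laws.

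The hard part is the joint Berry--Ess\'een step: one needs the remainder in the multidimensional CLT to vanish fast enough after inverting through the set $\scS(\bV,\epsilon)$ (whose boundary is smooth when $\bV\succ 0$ but only axis-aligned when $\bV$ is singular), and one needs the Taylor remainders for the empirical mutual informations to line up with the $\frac{\nu\log n}{n}$ backoff. Both issues are handled uniformly by Theorem~\ref{thm:vrrt}, which is precisely why that theorem is formulated as a \emph{vector} statement allowing for possibly degenerate covariance.
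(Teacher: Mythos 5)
Your overall architecture (modified MMI decoder checking three empirical mutual informations jointly, error events $E_0$ and $E_S$, packing bounds for the $E_S$'s, vector rate redundancy theorem for $E_0$, support lemma for $|\calQ|\le 9$, universality from the empirical decoding rule) matches the paper's proof. However, there is a genuine gap in the codebook construction that prevents your argument from proving the theorem \emph{as stated}.

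You fix a deterministic $q^n$ with type close to $p_Q$ and draw codewords uniformly from conditional type classes (constant composition). The paper instead draws $q^n\sim\prod_k p_Q$ and then $x_j^n(m_j)\sim\prod_k p_{X_j|Q}(\cdot|q_k)$, an i.i.d.\ construction. This is not an aesthetic difference: it is precisely what makes Theorem~\ref{thm:vrrt} applicable, since that theorem requires $X^n=(X_1,\dots,X_n)$ to be i.i.d.\ and its conclusion is phrased in terms of $\bV=\cov_X[\bg'(X)]$, the \emph{unconditional} covariance. Under your constant-composition ensemble the tuples $(q_k,x_{1k},x_{2k},Y_k)$ are \emph{independent but not identically distributed}, so the fluctuation of the joint type is governed by the conditional covariance $\bV'=\rvE[\cov(\bi(Q,X_1,X_2,Y)\mid Q,X_1,X_2)]$ rather than $\bV$. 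Your claim that the Taylor expansion ``reduces $\rvP(E_0)$ to a probability that an i.i.d.\ sum of the centered information density vector falls in a translate of $-\sqrt n\,\scS(\bV,\epsilon)$'' is therefore false; you would obtain $\scS(\bV',\epsilon)$, and you would need a non-i.i.d.\ multidimensional Berry--Ess\'een bound (e.g., G\"otze's) rather than Corollary~\ref{cor:be}. The paper explicitly flags this distinction in the discussion following Theorem~\ref{thm:bc}. Moreover, since $\bV'\preceq\bV$ does not imply $\scS(\bV,\epsilon)\subset\scS(\bV',\epsilon)$ in the multivariate case (no Anderson/Slepian-type inequality applies to the non-symmetric sets $\{\bZ\le\bz\}$), the inner region you would prove is not a priori a superset of the one in Definition~\ref{def:inner_bd_mac}, so the theorem does not follow from your version by containment. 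A smaller secondary issue: Lemma~\ref{lem:emil}, which you need for the $E_S$ packing bounds, is also stated for i.i.d.\ $(U^n,X^n,Y^n)$, and would need to be re-proved in the constant-composition setting.

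The fix is simple and brings you back to the paper's proof: generate $q^n$ at random i.i.d.\ from $p_Q$ and the codewords i.i.d.\ from $p_{X_j|Q}(\cdot|q_k)$, so that $(Q_k,X_{1k}(1),X_{2k}(1),Y_k)$ is genuinely i.i.d.\ with law $p_Qp_{X_1|Q}p_{X_2|Q}W$ and Theorem~\ref{thm:vrrt} applies verbatim, yielding exactly the matrix $\bV$ in \eqref{eqn:macv}. Your cardinality argument ($3$ mutual informations $+$ $6$ entries of the upper triangle of $\bV$, giving $|\calQ|\le 9$ via Eggleston/support lemma) is correct and matches the paper, though the functionals one preserves should be described as continuous in $p_{X_1,X_2|Q=q}$ rather than in $p_Q$.
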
 
This theorem is proved   in Section~\ref{sec:prf_mac}.  The bounds on cardinality can be proved using the support lemma~\cite[Theorem 3.4]{Csi97}. See Section~\ref{sec:card}. 
From~\eqref{eqn:ach_mac}, we see that the  inner bound to the $(n,\epsilon)$-capacity region $\scC_{\mathrm{MAC}}^*(n,\epsilon)$ approaches the usual MAC region~\eqref{eqn:mac_reg} at a rate of $O(\frac{1}{\sqrt{n}})$ for fixed input distributions. Unsurprisingly, this rate is a consequence of the multidimensional central limit theorem. The {\em redundancy set} $\frac{1}{\sqrt{n}}\scS(\bV,\epsilon)$  in~\eqref{eqn:ach_mac} is approximately the loss in rate to the three mutual information quantities in \eqref{eqn:mac_reg} one must incur when operating at blocklength $n$ and with average error probability $\epsilon$. 

For the proof of Theorem~\ref{thm:mac}, we use the coded time-sharing scheme introduced by Han and Kobayashi in their seminal work on interference channels \cite{Han81}. The decoding step, however, is novel and is a modification of the maximum mutual information (MMI) decoding rule~\cite{Goppa, Csi97}. This MMI-decoding step allows us to define a new notion of typicality for empirical mutual information quantities.   Interestingly, the error event that contributes to the $\epsilon$ probability of error is the one in which the transmitted pair of  codewords $x_1^n(m_1), x_2^n(m_2)$ is not  jointly typical (in a refined sense of typicality) with the output of the channel $y^n$ (and a time-sharing sequence $q^n$).   The probabilities of the other  error events --- that there exists another codeword jointly typical with the output --- can be shown to be vanishingly small relative to $\epsilon$. Intuitively, this is because we are operating close to  the boundaries of the rate region for given input distributions, i.e., at very high rates. The sphere-packing argument~\cite{Csi97, gallagerIT, Forney} implies that the dominant (typical) error events at high rates are of the form where a large number of incorrect codewords are jointly typical with the transmitted one, i.e., what Forney calls Type I error~\cite{Forney}. Thus, the probability of error is dominated by an atypically large noise  event and expurgation does not improve the exponents. 

\subsection{Difficulties In The Converse}
A converse (outer bound to $\scC_{\mathrm{MAC}}^*(n,\epsilon)$) has unfortunately remained elusive.  To the best of the authors' knowledge, there are three  strong converse proof techniques  for the average probability of error  of the DM-MAC. The first is by Han~\cite[Lemma 7.10.2]{Han10} \cite[Lemma~4]{Han98}  and is based on information spectrum ideas. Applying it is difficult because the specified input distributions   $p_{X_1^n}, p_{X_2^n}$ are the Fano-distributions on the codewords.\footnote{Given DM-MAC codebooks $\calC_j := \{x_j^n(m_j): m_j\in \calM_j\},j=1,2$,  the {\em Fano-distribution}  $p_{X_j^n}$ is   the uniform distribution over~$\calC_j$.} Since $p_{X_j^n}$ does not decompose into independent factors, the Berry-Ess\`{e}en theorem is not directly applicable.  The second is by  Dueck~\cite{dueck81} who used the blowing-up lemma~\cite[Sec.\ 1.5]{Csi97}. The third and   most promising  technique   is by  Ahlswede~\cite{Ahl82} who built  on Dueck's work~\cite{dueck81}. Ahlswede first  applies Augustin's strong converse for DMCs~\cite{augustin} to the so-called Fano$^*$-distribution\footnote{The {\em Fano$^*$-distribution} is  $p_{X_j^n} = \prod_{k=1}^n p_{X_{jk}}$ with $p_{X_{jk}} (a) := |\calM_j|^{-1}| \{ m_j  : x_{jk}(m_j) = a\}|$ for all $a\in\calX_j$. }  which factorizes. Then, he obtains a region that resembles the capacity region for the DM-MAC. Finally, he utilizes a  {\em wringing} technique to remove (or {\em wring out}) the dependence between $X_1$ and $X_2$. Unfortunately, it appears   that the use  of both the blowing-up lemma and the  wringing technique  results in estimates  of an outer bound that are too loose to match the $O(\frac{1}{\sqrt{n}})$ dispersion term in the inner bound in Theorem~\ref{thm:mac}. Another major obstacle to proving a global dispersion-style converse is the need to introduce the time-sharing variable $Q$ or the convex hull operation judiciously. Hence, we believe that genuinely new  strong converse techniques for the DM-MAC   (and other multi-user problems) have to be developed to prove a tight outer bound that matches (or approximately matches) our inner bound in Theorem~\ref{thm:mac}.


\section{Dispersion of  the Asymmetric Broadcast Channel} \label{sec:abc}
We now turn our attention to the broadcast channel \cite{cover}, which is another fundamental problem in network information theory. Despite more than 40 years of research, the capacity region has resisted attempts at proof.   One special instance in which the capacity is known is the so-called {\em asymmetric broadcast channel} or ABC \cite{ kor77}. The ABC   is also known as the broadcast channel with {\em degraded message sets}. 

In the ABC problem, there are two independent messages $M_1 \in [2^{nR_1}]$ and $M_2 \in [2^{nR_2}]$ at the sender. These two messages, which are uniformly distributed over their respective message sets, are encoded into a codeword $X^n \in \calX^n$. These codewords are then the inputs to a discrete memoryless asymmetric broadcast channel (DM-ABC)  $W: \calX \to\calY_1\times\calY_2$.  Decoder 1 receives $Y_1^n$ and estimates {\em both} messages $M_1$ and $M_2$, while decoder 2 receives $Y_2^n$ and estimates {\em only} $M_2$. Let the estimates of the messages at decoder 1 be denoted at $(\hatM_1,\hatM_2)$ and let the estimate of message 2 at decoder 2 be denoted as $\check{M}_2$.    The average error probability is  defined as 
\begin{align}
\Pen  :=   \rvP  ( \{\hatM_1 \ne M_1 \}\cup \{\hatM_2\ne M_2 \}\cup\{\check{M}_2\ne M_2\}  ), \label{eqn:bc1}  
\end{align}
Note that the error error event above corresponds to receiver 1 not decoding either message correctly {\em or} receiver 2 not decoding her intended message $M_2$ correctly. An alternative formulation, which turns out to be more challenging, would be to define average probabilities of error for receiver  1 and receiver  2 and to  put {\em different} upper bounds on these error probabilities. 

Returning to our setup,  it usually is desired to drive $\Pen$, defined in \eqref{eqn:bc1}, to zero as the blocklength $n\to\infty$. The set of achievable rate pairs $(R_1, R_2)$ first derived by K\"{o}rner and  Marton~\cite{kor77} is then given by the region
\begin{align}
R_1& \le I(X;Y_1|U) \nn\\*
R_2& \le I(U;Y_2) \nn\\*
R_1+R_2& \le I(X;Y_1) \label{eqn:abc_reg}
\end{align}
for some $p_{U, X}(u, x)$ where $|\calU|\le |\calX|+1$, i.e.,  $U-X-(Y_1, Y_2)$ form a Markov chain in that order. The proof for the direct part uses  the  superposition coding technique~\cite{cover}. The auxiliary variable $U$ basically plays the role of the cloud center while the input random variable $X$ plays the role of a satellite codeword centered at the cloud center $U$.  A weak converse can be proved using the Csisz\'{a}r-sum-identity~\cite{elgamal}. For a strong converse, see~\cite[Sec 3.3]{Csi97} or the  original work by K\"{o}rner and  Marton~\cite{kor77}.


We show in this section that the tools we have developed for SW coding and the MAC, such as  the vector rate redundancy theorem,  are versatile enough for us to provide  an inner bound to $\scC_{\mathrm{ABC}}^*(n,\epsilon)$, the $(n,\epsilon)$-capacity region, when $n$ is large. We again provide a global dispersion result that is analogous to Theorems~\ref{thm:orr} and \ref{thm:mac}.  Our coding scheme is based on superposition coding~\cite{cover} but the analysis is somewhat different and uses a variant of MMI-decoding. Like the DM-MAC, all three inequalities that characterize the capacity region  in~\eqref{eqn:abc_reg} are ``coupled'' through an   information dispersion matrix for a given input distribution $p_{U,X}$. Thus, the main result  in this section is  conceptually very similar to that for the DM-MAC. And as with the DM-MAC, we do not yet have an outer bound for this problem but we note that strong converses for this problem are available~\cite{kor77,bouch00}. We start with relevant definitions.

\subsection{Definitions} \label{sec:defbc}
Let $(\calX,W, \calY_1,\calY_2)$ be a 2-receiver DM-ABC. That is given an input codeword sequence $x^n \in\calX^n$, 
\begin{equation}
W^n (y_1^n, y_2^n |x^n ) = \prod_{k=1}^n W(y_{1k}, y_{2k} |x_k).
\end{equation}
We will use the notations $W_1$ and $W_2$ to denote the $\calY_1$- and $\calY_2$-marginal of $W$ respectively, i.e., $W_1(y_1|x) := \sum_{y_1} W(y_1, y_2|x)$ and similarly for $W_2(y_2|x)$.

\begin{definition}
An {\em $(n, 2^{nR_1}, 2^{nR_2}, \epsilon)$-code for the DM-ABC}  $(\calX, W, \calY_1, \calY_2)$  consists of one encoder $f_{n}: \calM_1\times\calM_2 =[2^{nR_1}]\times  [2^{nR_2}] \to \calX^n$,   and two decoders $\varphi_{1,n}:\calY_1^n\to\calM_{1}\times\calM_2$ and $\varphi_{2,n}:\calY_2^n\to\calM_2$ such that the  average  error probability defined in \eqref{eqn:bc1} does not exceed $\epsilon$. Note that the output  of the encoder is $f_{n}(M_1, M_2)$ and the output of the decoders are the estimates $(\hatM_1,\hatM_2) =\varphi_{1,n}(Y_1^n)$ and $\check{M}_2=\varphi_{2,n}(Y_2^n)$.  
\end{definition}

\begin{definition} \label{def:neps_ach_bc}
A rate pair $(R_1 ,R_2 )$ is  {\em  $(n,\epsilon)$-achievable}  if there exists an   $(n, 2^{nR_1}, 2^{nR_2}, \epsilon)$-code for the DM-ABC  $(\calX, W, \calY_1,\calY_2)$.   The {\em  $(n,\epsilon)$-capacity region} $\scC_{\mathrm{ABC}}^*(n,\epsilon) \subset\bbR^2$ is the set of all  $(n,\epsilon)$-achievable  rate pairs. 
\end{definition}

Fix  an input distribution $p_{U,X} \in \scP(\calU\times\calX)$ where  the auxiliary random variable $U$ takes values on some finite set $\calU$.  Given the channel $W$ and input distribution $p_{U,X}$, the  following   distributions are defined as:
\begin{align}
p_{Y_j|U}(y_j|u) &= \sum_{x} W_j(y_j|x) p_{X|U}(x|u) ,  \label{eqn:output1_bc} \\*
p_{Y_j} (y_j)&=\sum_x W_{j}(y_j|x)p_X(x), \qquad j = 1,2. \label{eqn:output3_bc}
\end{align}

\begin{definition}
The {\em information density vector}  for the ABC is defined as 
\begin{equation}
\bi( U,X,  Y_1, Y_2 ) := \begin{bmatrix} \log   [ W_1(Y_1|X) / p_{Y_1|U}(Y_1|U) ] \\ \log [{ p_{Y_2|U}(Y_2|U) }/{ p_{Y_2}(Y_2)}]  \\ \log [ W_1(Y_1|X_1) / p_{Y_1}(Y_1) ]    \end{bmatrix} . \label{eqn:info_dens_bc}
\end{equation}
where the distributions $p_{Y_1|U}, p_{Y_1}, p_{Y_2|U}, p_{Y_2} $ are defined in \eqref{eqn:output1_bc} and \eqref{eqn:output3_bc} respectively.  The random variables $(U, X, Y_1, Y_2)$ have joint distribution $p_{U,X} W$. 
\end{definition}

Observe that the expectation of the information density vector with respect to $p_{U,X}W$ is  the vector of mutual information quantities, i.e., 
\begin{equation}
\rvE [\bi( U,X,  Y_1, Y_2  ) ] = \bI (  p_{U,X}, W ) := \begin{bmatrix} I(X;Y_1|U )\\I(U;Y_2 )\\ I(X;Y_1) \end{bmatrix}. 
\end{equation}

\begin{definition}
 The {\em information  dispersion matrix}   $\bV( p_{U,X}, W )$ is the covariance matrix of the random vector $\bi( U,X,  Y_1, Y_2  )$ i.e., 
 \begin{equation}
\bV(  p_{U,X}, W ) = \cov( \bi( U,X,  Y_1, Y_2 )  ). \label{eqn:bcv}
\end{equation}
\end{definition}
As with the SW and MAC cases, we usually abbreviate $ \bI (  p_{U,X}, W )$ and $\bV( p_{U,X}, W )$ as $\bI$ and $\bV$ respectively.  We will again use the definition of the rate vector $\bR = [R_1, R_2, R_1+R_2]^T$ in \eqref{eqn:rate_vec}. 

\begin{definition} \label{def:reg_bc}
Given an input distribution $p_{U,X}$, define the region $\scR(n,\epsilon; p_{U,X}  ) \subset\bbR^2$ to be  the set of rate pairs $(R_1,R_2)$ that satisfy
\begin{equation}
\bR \in \bI  - \frac{1}{\sqrt{n}}  \scS (\bV  ,  \epsilon) -\frac{ \nu\log n}{n}\bone ,\label{eqn:ach_bc}
\end{equation} 
where $\nu:=|\calU ||\calX| \max\{ |\calY_1|, |\calY_2| \}+\kappa +\frac{3}{2}$ and $\kappa$ is defined similarly that  for the DM-MAC problem (see~\eqref{eqn:kappa_mac}). Here $\bI :=  \bI (   p_{U,X}, W )$ and $\bV := \bV(  p_{U,X}, W  )$ and   the set $\scS (\bV  ,  \epsilon) \subset\bbR^3$ is defined in  \eqref{eqn:SVset}.
\end{definition}
\subsection{Main Result and Interpretation}

\begin{theorem}[Global Dispersion   for the DM-ABC] \label{thm:bc} Let $\epsilon\in (0,1)$. The $(n,\epsilon)$-capacity  region $\scC_{\mathrm{ABC}}^*(n,\epsilon)$  for the DM-ABC satisfies
\begin{equation}
\bigcup_{p_{U }, p_{X|U} }  \scR(n,\epsilon; p_{U,X}  )  \subset \scC_{\mathrm{ABC}}^*(n,\epsilon)  \label{eqn:rr_bc}
\end{equation}
 for all $n$ sufficiently large. Furthermore, to preserve $\bI $ and $\bV$, the union over $p_{U}$ can be restricted to those discrete distributions with support $\calU$ whose  cardinality $|\calU|\le |\calX|+ 6$.  The inner bound is also universally attainable.
\end{theorem}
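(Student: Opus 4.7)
The plan is to combine superposition random coding on constant-composition codebooks with a universal maximum-mutual-information (MMI)-type decoder, yielding a scheme that depends on neither $W$ nor $p_{U,X}$. Fix a type $P_U \in \scP_n(\calU)$ and a conditional type $V_{X|U} \in \scV_n(\calX;P_U)$ within $O(\frac{1}{n})$ in $\ell_\infty$-distance of $p_U$ and $p_{X|U}$, respectively. Draw $2^{nR_2}$ cloud centers $\{U^n(m_2)\}$ independently and uniformly from the type class $\calT_{P_U}$; given each cloud center, draw $2^{nR_1}$ satellite codewords $\{X^n(m_1,m_2)\}$ independently and uniformly from the conditional type class $\calT_{V_{X|U}}(U^n(m_2))$. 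To transmit messages $(M_1,M_2)$, the encoder sends $X^n(M_1,M_2)$.

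Let $\delta_n := \frac{\nu \log n}{n}$ match the slack in Definition~\ref{def:reg_bc}. Decoder~1 outputs the unique $(\hatM_1, \hatM_2)$ such that the joint type of $(U^n(\hatM_2), X^n(\hatM_1,\hatM_2), Y_1^n)$ satisfies both $\hatI(X\wedge Y_1 | U) \ge R_1+\delta_n$ and $\hatI(X\wedge Y_1) \ge R_1+R_2+\delta_n$. Decoder~2 outputs the unique $\check{M}_2$ such that the joint type of $(U^n(\check{M}_2), Y_2^n)$ satisfies $\hatI(U \wedge Y_2) \ge R_2+\delta_n$. These three empirical constraints stand in one-to-one correspondence with the three entries of $\bi(U,X,Y_1,Y_2)$ in~\eqref{eqn:info_dens_bc} and constitute a superposition-coded refinement of the MMI rule used for the DM-MAC in Section~\ref{sec:prf_mac}.

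Assume $M_1=M_2=1$ by symmetry. The error event splits via a union bound into (A) the transmitted quadruple $(U^n(1), X^n(1,1), Y_1^n, Y_2^n)$ violates one of the three threshold inequalities, and (B) some incorrect message pair satisfies the respective decoding condition. For (A), the constant-composition construction implies that, conditioned on the transmitted pair, the outputs $(Y_{1k}, Y_{2k})$ are i.i.d., so the empirical information density vector is a sum of $n$ bounded i.i.d.\ random vectors with mean $\bI(p_{U,X}, W)+O(\frac{1}{n})$ and covariance $\bV(p_{U,X}, W)+O(\frac{1}{n})$. Applying the vector rate redundancy theorem (Theorem~\ref{thm:vrrt}) with $\bi(U,X,Y_1,Y_2)$ and the hypothesis $\bR \in \bI - \frac{1}{\sqrt n}\scS(\bV,\epsilon) - \delta_n \bone$ yields $\rvP(\text{A}) \le \epsilon - O(\frac{1}{\sqrt n})$ via the multidimensional Berry--Ess\'{e}en inequality. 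For (B), method-of-types estimates as in~\cite[Sec.~3.2]{Csi97} bound each of the three incorrect-codeword probabilities: a wrong satellite with the correct cloud center satisfies $\hatI(X\wedge Y_1|U)\ge R_1+\delta_n$ with probability at most $\poly(n)\cdot 2^{-n(R_1+\delta_n)}$ (since at most $\poly(n)\cdot 2^{n(H(X|U)-R_1-\delta_n)}$ sequences in $\calT_{V_{X|U}}(U^n(1))$ have this property), so the union bound over the $2^{nR_1}-1$ wrong satellites gives $\poly(n)\cdot 2^{-n\delta_n}$; analogous estimates bound the wrong cloud-center contributions at both decoders by $\poly(n)\cdot 2^{-n\delta_n}$. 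Choosing $\nu$ large enough makes each (B)-contribution $o(\frac{1}{\sqrt n})$, so combining with (A) gives $\Pen \le \epsilon$ for $n$ sufficiently large. Universality is immediate because the scheme references only $(P_U, V_{X|U})$ and empirical information quantities.

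The main obstacle is the careful interplay between the constant-composition randomization and the Berry--Ess\'{e}en step in (A): the joint type of $(U^n(1), X^n(1,1))$ is deterministic, so the analysis is conditional on this pair, and one must verify both that the induced per-letter mean and covariance of $\bi$ match $\bI$ and $\bV$ up to $O(\frac{1}{n})$, and that the resulting $O(\frac{1}{n})$ discrepancies translate into only a $O(\frac{\log n}{n})$ rate correction that is absorbed by $\delta_n$. Finally, the cardinality bound $|\calU|\le|\calX|+6$ follows from the support lemma~\cite[Thm.~3.4]{Csi97}, applied so as to preserve the marginal $p_X$ together with the entries of $\bI$ and $\bV$ that enter $\scR(n,\epsilon; p_{U,X})$.
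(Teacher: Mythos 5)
You have chosen constant-composition codebooks (uniform on type classes), whereas the paper's own proof uses i.i.d.\ codebook generation $u^n(m_2)\sim\prod_k p_U(u_k)$, $x^n(m_1,m_2)\sim\prod_k p_{X|U}(x_k|u_k(m_2))$, precisely so that the quadruple $(U^n,X^n,Y_1^n,Y_2^n)$ is an i.i.d.\ draw from $p_{U,X}W$ and the vector rate redundancy theorem (Theorem~\ref{thm:vrrt}) applies directly to the function $\bI$ of its type. Your route runs into a genuine gap at exactly the point you flag as the ``main obstacle.'' Conditioned on a fixed constant-composition pair $(u^n,x^n)$, the per-letter outputs $(Y_{1k},Y_{2k})\sim W(\cdot,\cdot|x_k)$ are independent but \emph{not} identically distributed: the distribution varies with $x_k$. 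Consequently the summands $\bi(u_k,x_k,Y_{1k},Y_{2k})$ are not an i.i.d.\ sequence, Theorem~\ref{thm:vrrt} (which is stated and proved for i.i.d.\ $X_k$) does not apply as invoked, and Corollary~\ref{cor:be} (Bentkus' i.i.d.\ Berry--Ess\'{e}en bound) cannot be used as written. The paper explicitly acknowledges this: the remark following Theorem~\ref{thm:bc} says that with constant-composition codebooks one would obtain the (smaller, conditional) dispersion matrix $\bV'(P_{U,X},W)=\bbE_{X,U}[\cov(\bi\,|\,X,U)]\preceq\bV$, but that this requires a Berry--Ess\'{e}en theorem for independent, non-identically distributed vectors (e.g., G\"{o}etze~\cite{Got91}), which the paper deliberately does not pursue. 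Relatedly, your claim that the per-letter summands have covariance $\bV(p_{U,X},W)+O(\frac1n)$ is not correct in the conditional/constant-composition setting; the relevant aggregate covariance would be $\bV'$, and establishing the needed Gaussian approximation for the independent-but-not-i.i.d.\ sum is precisely the missing step.

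The rest of the proposal is in line with the paper: the decoding rule is the same MMI-threshold rule on the empirical information vector $\hatbI$ of~\eqref{eqn:info_bc}, with the three constraints split between decoder~1 ($R_1$ and $R_1+R_2$) and decoder~2 ($R_2$), and the paper likewise groups the two ``correct-codeword atypicality'' events into one probability before applying Theorem~\ref{thm:vrrt}. Your method-of-types counting for the incorrect-codeword events is the constant-composition analogue of the paper's Lemma~\ref{lem:emil}, and that part is fine in spirit. Your cardinality argument via the support lemma matches the paper's Section~\ref{sec:card} style, though note the paper's count is $|\calX|-1$ elements to preserve $p_X$, plus $7$ more to preserve $I(X;Y_1|U)$, $I(U;Y_2)$, two variances and three covariances of $\bV$ (with $I(X;Y_1)$ and the third variance being automatic once $p_X$ is fixed), giving $|\calU|\le|\calX|+6$. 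A minor side issue: you set $\delta_n$ equal to the region slack $\frac{\nu\log n}{n}$, but in the paper's bookkeeping the decoding threshold $\delta_n$ is smaller than the region slack $a_n=\frac{\nu\log n}{n}$, and the difference $b_n=a_n-\delta_n$ is what feeds into Theorem~\ref{thm:vrrt}; conflating them would break the sign of the perturbation in~\eqref{eqn:rrt}. To repair your proof along the lines you sketched, you would either need to switch to i.i.d.\ codebooks (as the paper does), or explicitly import a non-i.i.d.\ multidimensional Berry--Ess\'{e}en theorem and verify its hypotheses for the conditional per-letter summands.
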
 
The proof of this result can be found in Section~\ref{sec:prf_abc}.

Conceptually, this result is very similar to that for the SW problem (Theorem~\ref{thm:orr}) and the DM-MAC (Theorem~\ref{thm:mac}). The reason for its inclusion in this paper is to demonstrate  that the proof techniques we have developed here are general and widely applicable to many network information theory problems, including problems whose capacity regions involve auxiliary random variables. One can also derive local dispersions and sum-rate dispersions. 

For the ABC, one can easily improve on the global dispersion result presented in Theorem~\ref{thm:bc} by using constant   composition codes, i.e., first generate the $U$ codewords (cloud centers) uniformly at random from some type class $\calT_{P_U}$ then generate the $X$ codewords (satellites) uniformly at random from some $P_{X|U}$-shell $\calT_{P_{X|U}}(u^n)$ centered at a cloud center $u^n \in\calT_{P_U}$. Then instead of the unconditional information dispersion matrix $\bV(p_{U,X}, W)$ in \eqref{eqn:bcv}, we see that the following conditional  information dispersion matrix is also achievable:
\begin{equation}
\bV'(P_{U,X}, W) = \bbE_{X,U}\left[ \cov \big( \bi(U,X, Y_1, Y_2) | X,U \big) \right]. 
\end{equation}
Note that  $\bV'(P_{U,X}, W)\preceq \bV (P_{U,X}, W)$ so the dispersion   is not increased using such constant   composition codes. We do not pursue this extension in detail here but note that instead of of i.i.d.\ version of the multi-dimensional Berry-Ess\'een theorem  (Corollary~\ref{cor:be}) we need a version that deals with independent but not necessarily identically distributed random vectors, e.g., the one provided by G\"{o}etze~\cite{Got91}.


\section{Discussion and Open Problems} \label{sec:concl}
To summarize,   we characterized the $(n,\epsilon)$-optimal rate region for the SW problem up to the $O(\frac{\log n}{n})$ term. We showed that this global dispersion result can be stated in terms of a new object which we call the dispersion matrix. We also provided similar inner bounds for the DM-MAC and DM-ABC problems. We unified our achievability proofs through an important theorem known as the vector rate redundancy theorem. We believe this general result would be useful in other network information theory problems. 

To gain better insight to the dispersion of network problems, we focused on the SW problem and considered the rate of convergence of the non-asymptotic rate region $\scR_{\mathrm{SW}}^*(n,\epsilon)$ to the boundary of the SW region. We defined and exactly characterized two operational dispersions, namely the local  and weighted sum-rate dispersions. One of the most interesting and novel results presented here is the following: When we approach a corner point, the scalar dispersions that have been prevalent in the recent literature~\cite{Hayashi09,PPV10, wang11, Kos11, ingber11} do not suffice. Rather, to characterize the local  and weighted sum-rate dispersions, we need to use the bivariate Gaussian as well as some off-diagonal elements of the dispersion matrix.

Clearly, it would be desirable to derive dispersion-type outer bounds for the $(n,\epsilon)$-capacity region of the DM-MAC and DM-ABC. We  have discussed the difficulties to obtaining such outer bounds. For the DM-MAC, it appears that generalizations of Polyanskiy et al.'s meta (or minimax) strong converse \cite[Theorem 26]{PPV10} or Augustin's strong converse~\cite{augustin} to multi-terminal settings are required.  For the MAC, it was mentioned in Section \ref{sec:mac_interp} that a sharpening of Ahlswede's wringing technique~\cite{Ahl82} seems necessary for a converse proof. For the ABC, appears that strengthening of  the information spectrum technique in~\cite{bouch00} or the entropy and image size characterizations technique~\cite[Ch.~15]{Csi97} are required for a dispersion-type outer bound.

%
\section{Proofs of Global Dispersions} \label{sec:proofs}
In this section, we provide the proofs  for the global dispersion results in the previous sections (Theorems~\ref{thm:orr}, \ref{thm:mac} and~\ref{thm:bc}).  We start in Section~\ref{sec:vrrt}  by stating and proving a preliminary but important result known as the vector rate redundancy theorem. This result is a generalization of the (scalar) rate redundancy theorem in \cite{ingber11, wang11}. We then  prove Theorems~\ref{thm:orr}, \ref{thm:mac} and~\ref{thm:bc} in Sections \ref{sec:prf_sw}, \ref{sec:prf_mac}, and \ref{sec:prf_abc} respectively. 
\subsection{A Preliminary Result} \label{sec:vrrt}
\begin{theorem}[Vector Rate Redundancy Theorem] \label{thm:vrrt}
Let $\bg : \scP(\calX)\to\bbR^d$ be   twice continuously differentiable. Let  
\begin{equation}
g'_t(x) := \frac{\partial g_t(q_X)}{\partial q_X(x)} \bigg|_{q_X=p_X },
\end{equation}
for $t=1,\ldots, d$  be the component-wise derivatives of $\bg$. Denote  the vector of derivatives (the gradient vector) as $\bg'(x) = [g'_1(x),\ldots, g_t'(x)]^T$. Let $\bV\in\bbR^{d\times d}$ be  the covariance matrix   of the random vector $\bg'(X)$, i.e.,
\begin{equation}\label{eqn:vdef}
\bV = \cov_X [\bg'(X) ] = \rvE [ (\bg'(X)   - \rvE[ \bg'(X) ] )(\bg'(X)   - \rvE[ \bg'(X) ] )^T].
\end{equation}  
 Assume that $\rank(\bV)\ge 1$ and $\xi:= \rvE [ \|\bg'(X) - \rvE[\bg'(X)] \|_2^3]<\infty$.  Furthermore, let $X^n = (X_1, \ldots, X_n)$ be an i.i.d.\ random vector with $X_k \sim  p_X(x)$. Define 
\begin{equation}
 \kappa :=  \max_{1\le t\le d}\left\| \nabla_{p_X}^2 g_t(p_X)\right\|_2, \label{eqn:def_beta} 
\end{equation} 
where $\nabla_{p_X}^2 g_t(p_X)$ denotes the Hessian matrix of $p_X\mapsto g_t(p_X)$.   Define the sequence  
 \begin{equation} \label{eqn:seq_an}  
 b_n =  \frac{(\kappa+1)\log n}{n}.
 \end{equation} 
  Then, for any vector $\bz\in\bbR^d$, we have
\begin{equation}
\rvP \left(\bg(P_{X^n}) \ge \bg(p_X) + \frac{\bz}{\sqrt{n}}-b_n \bone\right) \ge\rvP(\bZ\ge\bz)  +O\left(\frac{\log n}{\sqrt{n}}\right), \label{eqn:rrt}
\end{equation}
where $P_{X^n} \in\scP_n(\calX)$ is the (random)  type of the sequence $X^n$ and $\bZ\sim\calN(\bzero,\bV)$.
\end{theorem}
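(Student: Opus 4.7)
My plan is to reduce the statement to an application of the multidimensional Berry--Ess\'een theorem to i.i.d.\ random vectors derived from $\bg'$, with the Taylor remainder of $\bg$ absorbed into the slack term $b_n \bone$. First, I would Taylor-expand each coordinate $g_t$ at $p_X$:
\[
g_t(P_{X^n}) - g_t(p_X) = \sum_{x \in \calX} g_t'(x)\bigl(P_{X^n}(x) - p_X(x)\bigr) + R_{t,n},
\]
where the mean-value form of the remainder together with the definition of $\kappa$ in \eqref{eqn:def_beta} gives $|R_{t,n}| \le \frac{\kappa}{2}\|P_{X^n} - p_X\|_2^2$. Substituting $P_{X^n}(x) = \frac{1}{n}\sum_k \mathbb{I}\{X_k = x\}$ rewrites the linear part as the empirical mean of $\bY_k := \bg'(X_k) - \rvE[\bg'(X)]$, an i.i.d.\ sequence of zero-mean random vectors with covariance $\bV$ and finite third absolute moment $\xi$.

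Next, I would control $\bR_n := (R_{1,n}, \ldots, R_{d,n})$ via concentration of the empirical histogram. Hoeffding's inequality applied coordinate-wise, together with a union bound over $\calX$, gives $\|P_{X^n} - p_X\|_\infty \le \sqrt{C \log n /n}$ with probability $1 - O(|\calX|\, n^{-C/2})$ for any fixed $C$; consequently $\|\bR_n\|_\infty = O(\log n/n)$ on an event of overwhelming probability. The choice $b_n = (\kappa+1)\log n/n$ is made precisely so that $b_n$ dominates $|R_{t,n}|$ uniformly in $t$ on this good event, with failure probability negligible compared to $O(\log n/\sqrt{n})$. On this good event, $\bg(P_{X^n}) + b_n \bone \ge \bg(p_X) + \frac{1}{n}\sum_k \bY_k$, so the event $\{\frac{1}{\sqrt{n}}\sum_k \bY_k \ge \bz\}$ implies $\{\bg(P_{X^n}) \ge \bg(p_X) + \bz/\sqrt{n} - b_n \bone\}$.

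Assuming first that $\bV \succ 0$, I would apply the multidimensional Berry--Ess\'een bound of Bentkus \cite{Ben03} to the orthant $\{\by \in \bbR^d : \by \ge \bz\}$ (which is convex), yielding
\[
\rvP\Bigl(\tfrac{1}{\sqrt{n}}\textstyle\sum_{k=1}^n \bY_k \ge \bz\Bigr) \ge \rvP(\bZ \ge \bz) - O(1/\sqrt{n}).
\]
Combined with the preceding step and the fact that the $O(1/n)$ failure probability of the concentration event is absorbed into $O(\log n /\sqrt{n})$, this yields \eqref{eqn:rrt}.

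The main obstacle is the singular case $1 \le \rank(\bV) < d$, where Bentkus's estimate degenerates because its constant scales like $\lambda_{\min}(\bV)^{-3/2}$. I would handle this by diagonalizing: choose an orthogonal $\bU$ with $\bU^T \bV \bU = \diag(\lambda_1,\ldots,\lambda_r,0,\ldots,0)$, where $r = \rank(\bV)$ and $\lambda_1,\ldots,\lambda_r > 0$. Since the last $d-r$ coordinates of $\bU^T \bY_k$ have zero variance, they equal zero almost surely, so the problem collapses onto the first $r$ coordinates, where the covariance is positive-definite and Bentkus's bound applies. The image $\bU^T\{\by \ge \bz\}$ remains convex, and the degenerate coordinates yield deterministic constraints that match the almost-sure behavior of the corresponding components of the Gaussian limit $\bZ \sim \calN(\bzero, \bV)$ (either automatically satisfied with probability one, or violated with probability zero). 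This reduces the singular case to the non-singular case in dimension $r$ at the cost of a slightly worse implicit constant.
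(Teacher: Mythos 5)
Your overall strategy coincides with the paper's: Taylor-expand $\bg$ about $p_X$ coordinatewise, bound the second-order residual by concentration of the type, invoke Bentkus's multivariate Berry--Ess\'{e}en theorem on the linear part, and reduce the singular case to a full-rank problem of dimension $r=\rank(\bV)$ by eigendecomposition (the paper's row-space matrix $\bT$ is just a different parametrization of your $\bU^T$ rotation). So the skeleton is correct.

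The gap is in the concentration step. Passing from the coordinatewise Hoeffding bound $\|P_{X^n}-p_X\|_\infty \le \sqrt{C\log n/n}$ to the Euclidean norm via $\|\cdot\|_2^2 \le |\calX|\,\|\cdot\|_\infty^2$ gives a good-event residual bound $|R_{t,n}| \le \tfrac{\kappa|\calX|C}{2}\cdot\tfrac{\log n}{n}$, whose implied constant grows linearly with $|\calX|$, while keeping the union-bound failure probability $o(\log n/\sqrt{n})$ forces $C$ to stay bounded away from zero. Together these make $|R_{t,n}|$ exceed $b_n=(\kappa+1)\log n/n$ once $|\calX|$ is larger than a small numerical constant, so your claim that ``$b_n$ dominates $|R_{t,n}|$ uniformly in $t$ on this good event'' fails for general alphabets, and with it the implication $\{\tfrac{1}{\sqrt{n}}\sum_k\bY_k\ge\bz\}\Rightarrow\{\bg(P_{X^n})\ge\bg(p_X)+\bz/\sqrt{n}-b_n\bone\}$. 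The paper avoids this by using a sharp $\ell_1$-deviation bound for types (Weissman et al.\ \cite{Weiss03}) together with $\|\cdot\|_2\le\|\cdot\|_1$, which loses no factor of $|\calX|$: it gives $\rvP(\|\bDelta\|_\infty\ge\kappa\log n/n)\le 2^{|\calX|}/n$ so the threshold $c_n:=\kappa\log n/n$ suffices, and the leftover slack $\sqrt{n}(b_n-c_n)=\log n/\sqrt{n}$ is absorbed into the final $O(\log n/\sqrt{n})$ term by Taylor-expanding $\delta\mapsto\rvP(\bZ\ge\bz-\delta\bone)$. You can repair your argument either by switching to an $\ell_1$-concentration bound for the type, or by allowing a residual shift of order $\log n/\sqrt{n}$ in the Gaussian orthant probability and invoking the same Taylor expansion, rather than insisting that $b_n$ swallow the entire Taylor remainder.
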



Before we prove Theorem~\ref{thm:vrrt}, let us state Bentkus' version of the multidimensional Berry-Ess\'{e}en theorem. 
\begin{theorem}[Bentkus~\cite{Ben03}] \label{thm:bentkus}
Let $\bU_1,\ldots, \bU_n$ be normalized i.i.d.\ random vectors  in $\bbR^d$ with zero mean  and identity covariance matrix, i.e., $\rvE [\bU_1] =\bzero$ and  $\cov [ \bU_1]=\bI$. Let $\bS_n := \frac{1}{\sqrt{n}} (\bU_1+ \ldots+ \bU_n)$ and $\xi = \rvE [ \| \bU_1 \|_2^3]$. Let $\bZ\sim\calN(\bzero,\bI)$ be a standard Gaussian random vector  in $\bbR^d$.  Then,  for all $n\in\bbN$,
\begin{equation}
\sup_{\scC \in \frakC_d} | \rvP( \bS_n \in \scC )-\rvP(\bZ \in \scC) |\le\frac{400 d^{1/4}\xi}{\sqrt{n}} \label{eqn:bent}
\end{equation}
where $\frakC_d$ is the family of all convex, Borel measurable subsets of $\bbR^d$.
\end{theorem}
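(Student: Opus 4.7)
The plan is to follow the classical smoothing approach for proving multidimensional Berry--Ess\'een bounds over convex sets. The difficulty is that the indicator $\mathbf{1}_{\scC}$ of a convex set is discontinuous on its boundary, so one cannot directly Taylor-expand $\rvE[\mathbf{1}_{\scC}(\bS_n)]$ around $\rvE[\mathbf{1}_{\scC}(\bZ)]$. I would therefore replace $\mathbf{1}_{\scC}$ by a smooth surrogate obtained by convolution with a Gaussian kernel $\calN(\bzero,\epsilon^2 \bI)$, denoted $f_\epsilon$. Standard Gaussian mollification gives the sandwich bound $\mathbf{1}_{\scC}(\bx) - r_\epsilon(\bx) \le f_\epsilon(\bx) \le \mathbf{1}_{\scC}(\bx) + r_\epsilon(\bx)$, where $r_\epsilon$ is supported in the $O(\epsilon\sqrt{\log(1/\epsilon)})$-neighborhood of $\partial\scC$, together with analyticity estimates of the form $\|\partial^\alpha f_\epsilon\|_\infty = O(\epsilon^{-|\alpha|})$ that let us control its higher derivatives.

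First I would bound the smooth-functional swap error $|\rvE f_\epsilon(\bS_n) - \rvE f_\epsilon(\bZ)|$ by a Lindeberg-style argument: interpolate between $\bS_n$ and its Gaussian analogue $\frac{1}{\sqrt{n}}(\bZ_1+\cdots+\bZ_n)$ by swapping one $\bU_i$ at a time for an independent $\bZ_i \sim \calN(\bzero,\bI)$ and Taylor-expanding to third order. Because $\bU_i$ and $\bZ_i$ have matching first and second moments, only the cubic term survives, and summing $n$ swaps contributes $O(n \cdot n^{-3/2} \cdot \xi \cdot \|D^3 f_\epsilon\|_\infty)$. Here the Gaussian-convolution structure lets one integrate by parts so that the relevant operator norm of $D^3 f_\epsilon$ is controlled more tightly than the worst-case $\epsilon^{-3}$, yielding a swap error of order $\xi/\sqrt{n}$ times at most one inverse power of~$\epsilon$.

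Second I would bound the smoothing error $|\rvE f_\epsilon(\bZ) - \rvP(\bZ \in \scC)|$, which by construction is at most the Gaussian measure of the $\epsilon$-annulus around $\partial\scC$. This is the key step and the source of the $d^{1/4}$ exponent. I would invoke the sharp Gaussian surface-area bound for convex sets, originally due to Ball and refined by Nazarov,
\[
\gamma_d\!\left(\{\bx : \mathrm{dist}(\bx,\partial\scC) \le \epsilon\}\right) \le c\, d^{1/4}\, \epsilon,
\]
valid uniformly over convex $\scC \subset \bbR^d$. This estimate is strictly stronger than the naive $O(\sqrt{d}\,\epsilon)$ bound one would obtain from generic surface-area reasoning, and the improvement is precisely what delivers the $400\, d^{1/4}$ constant in the final statement.

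Finally, I would combine the two estimates and optimize over the smoothing scale $\epsilon > 0$, balancing the $\xi/(\sqrt{n}\,\epsilon^a)$ swap term against the $d^{1/4}\epsilon$ boundary term to obtain a bound of order $d^{1/4}\xi/\sqrt{n}$ uniformly in $\scC \in \frakC_d$. Taking the supremum over convex Borel sets preserves the estimate since all constants are geometry-free. The main obstacle is the convex-set surface-area estimate: the Lindeberg swap is routine once the mollifier is fixed, but establishing (or citing) the Ball--Nazarov isoperimetric inequality with the correct $d^{1/4}$ dependence is the heart of the matter and the reason the dimensional constant cannot be improved below $d^{1/4}$ by this method. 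Since the statement as given is cited from Bentkus~\cite{Ben03}, I would in practice defer to that reference for the delicate tracking of the constant $400$ and verify only that the outline above reproduces the correct scaling in $n$, $d$, and $\xi$.
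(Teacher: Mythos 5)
The paper does not prove this statement; it is imported verbatim from Bentkus~\cite{Ben03}, so there is no internal proof against which to compare your outline. Taking your sketch on its own merits: the overall architecture (smooth the indicator, Lindeberg swap, then invoke the Ball--Nazarov Gaussian surface-area bound $\gamma_d(\{\mathrm{dist}(\bx,\partial\scC)\le\epsilon\})\le c\,d^{1/4}\epsilon$ and optimize over $\epsilon$) is the right family of ideas, and you correctly identify the surface-area estimate as the source of the $d^{1/4}$. But there is a genuine gap in the middle step, and it is precisely the part you gloss over with ``the Gaussian-convolution structure lets one integrate by parts\dots yielding a swap error of order $\xi/\sqrt n$ times at most one inverse power of $\epsilon$.''

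For a mollifier $f_\epsilon = \mathbf{1}_\scC * \phi_\epsilon$ one actually has $\|D^3 f_\epsilon\|_\infty \asymp \epsilon^{-3}$, so the na\"ive Lindeberg telescope gives a swap error of order $\xi\,\epsilon^{-3}/\sqrt n$, and balancing against $d^{1/4}\epsilon$ produces a bound of order $d^{3/16}\xi^{1/4}n^{-1/8}$ --- nowhere near $n^{-1/2}$. The integration-by-parts device that rescues this does not use the mollifier at all: it moves derivatives onto the \emph{Gaussian component of the partial sum} $\frac{1}{\sqrt n}(\bU_1+\cdots+\bU_{k}+\bZ_{k+1}+\cdots+\bZ_n)$, whose variance $(1-k/n)$ provides intrinsic smoothing at scale $\Theta(1)$ for most $k$. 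This trick degrades in the final $O(1)$ swaps (where the residual Gaussian variance collapses), and closing that hole is exactly where Bentkus' argument becomes an induction on $n$ rather than a one-shot optimization over a fixed smoothing scale. Your outline thus reproduces the correct $d$-dependence but would not, as written, deliver the clean $n^{-1/2}$ rate; some explicit recursive or inductive bootstrap is required, not merely ``balancing the swap term against the boundary term.'' Since you acknowledge you would defer to~\cite{Ben03} for the constant tracking, I would also defer there for this structural ingredient --- it is not a detail, it is the mechanism that makes the rate correct.
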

Bentkus   remarks in~\cite{Ben03} that the constant $400$  in Theorem~\ref{thm:bentkus} can be ``considerably improved especially for large $d$''. For simplicity, we will simply use~\eqref{eqn:bent}.   Because we will frequently encounter random vectors with non-identity covariance matrices and ``whitening''  is not applicable, it is necessary to modify Theorem~\ref{thm:bentkus} as follows:
\begin{corollary} \label{cor:be}
Assume the  same setup as in Theorem~\ref{thm:bentkus} with the exception that $\cov[\bU_1] = \bV\succ 0$ and $\bZ\sim\calN(\bzero,\bV)$. Then \eqref{eqn:bent} becomes
\begin{equation}
\sup_{\scC \in  \frakC_d} | \rvP( \bS_n \in \scC )-\rvP(\bZ \in \scC) |\le\frac{400 d^{1/4}\xi}{\lambda_{\min}(\bV)^{3/2}\sqrt{n}} .  \label{eqn:bent2}
\end{equation}
\end{corollary}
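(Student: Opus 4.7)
The plan is to reduce to the identity-covariance case of Bentkus's theorem (Theorem~\ref{thm:bentkus}) by a whitening transformation, since we are assuming $\bV\succ 0$ so $\bV^{-1/2}$ is well-defined. Set $\tilbU_k:=\bV^{-1/2}\bU_k$ for $k=1,\ldots,n$; these are i.i.d.\ with zero mean and identity covariance. Then $\tilbS_n:=\frac{1}{\sqrt{n}}\sum_{k=1}^n \tilbU_k=\bV^{-1/2}\bS_n$ and $\tilbZ:=\bV^{-1/2}\bZ\sim\calN(\bzero,\bI)$.

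Next I would transport the probability statement across the whitening map. For any $\scC\in\frakC_d$, the preimage $\tilscC:=\bV^{1/2}\scC^{-1}=\{\bu:\bV^{1/2}\bu\in\scC\}$ is again convex and Borel measurable, because $\bV^{1/2}$ is a linear bijection of $\bbR^d$. Moreover $\{\bS_n\in\scC\}=\{\tilbS_n\in\tilscC\}$ and similarly for $\bZ$ and $\tilbZ$, so
\begin{equation}
\sup_{\scC\in\frakC_d}\bigl|\rvP(\bS_n\in\scC)-\rvP(\bZ\in\scC)\bigr|
=\sup_{\tilscC\in\frakC_d}\bigl|\rvP(\tilbS_n\in\tilscC)-\rvP(\tilbZ\in\tilscC)\bigr|.
\end{equation}
Applying Theorem~\ref{thm:bentkus} to the normalized vectors $\tilbU_k$ bounds the right-hand side by $400 d^{1/4}\tilde{\xi}/\sqrt{n}$ with $\tilde{\xi}:=\rvE[\|\tilbU_1\|_2^3]$.

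The remaining task is to compare $\tilde{\xi}$ to the given $\xi=\rvE[\|\bU_1\|_2^3]$. Since $\|\tilbU_1\|_2^2=\bU_1^T\bV^{-1}\bU_1\le\lambda_{\max}(\bV^{-1})\|\bU_1\|_2^2=\lambda_{\min}(\bV)^{-1}\|\bU_1\|_2^2$, we obtain $\|\tilbU_1\|_2^3\le\lambda_{\min}(\bV)^{-3/2}\|\bU_1\|_2^3$, hence $\tilde{\xi}\le\lambda_{\min}(\bV)^{-3/2}\xi$. Plugging this back gives exactly the bound \eqref{eqn:bent2}. No step here is a genuine obstacle; the only thing to check carefully is that $\bV^{1/2}$ and its inverse preserve convexity and Borel measurability, which is immediate because they are continuous linear bijections. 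This reduction also makes clear why the corollary requires $\bV\succ 0$ rather than merely $\bV\succeq 0$: the factor $\lambda_{\min}(\bV)^{-3/2}$ blows up in the singular case, reflecting the fact that the density of $\bZ$ degenerates onto a proper subspace and Bentkus's convex-set bound no longer controls the discrepancy on sets transverse to that subspace.
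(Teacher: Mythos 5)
Your proposal is correct and follows essentially the same route as the paper's own proof: a linear change of coordinates that whitens the covariance, combined with the observation that $\frakC_d$ is preserved under invertible linear maps and a $\lambda_{\min}(\bV)^{-3/2}$ bound on the induced change in the third moment. The only cosmetic difference is that you whiten with the symmetric square root $\bV^{-1/2}$ whereas the paper uses the Cholesky factor $\bL$ with $\bV=\bL\bL^T$; since $\|\bL^{-1}\bx\|_2^2=\bx^T\bV^{-1}\bx=\|\bV^{-1/2}\bx\|_2^2$, the two choices yield identical estimates.
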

The proof of the corollary is by simple linear algebra and is presented in Appendix~\ref{prf:be}. We are now ready to prove the important vector rate redundancy theorem. 
\begin{proof}
First we assume that $\lambda_{\min}(\bV)>0$. In the latter part of the proof, we relax this assumption.  By Taylor's theorem applied component-wise, we can rewrite $\bg(P_{X^n})$   as 
\begin{equation}
\bg(P_{X^n}) = \bg(p_X) + \sum_{x\in\calX} \bg'(x)  [P_{X^n}(x)-p_X(x)]+ \bDelta.\label{eqn:taylor_rrt}
\end{equation}
Recall  that  $\bg$ is  twice  continuously differentiable and the probability simplex $\scP(\calX)$ is compact. As such, we can conclude that  each entry of the second-order residual term in \eqref{eqn:taylor_rrt} can be bounded above as 
\begin{align}
|\Delta_t| \le  \frac{1}{2}\left\| \nabla_{p_X}^2 g_t(p_X) \right\|_2 \|P_{X^n}-p_X\|_2^2. \label{eqn:bound_delta}
\end{align}
Using the definition of $\kappa$ in \eqref{eqn:def_beta} yields,
\begin{equation}
\|\bDelta\|_\infty \le  \frac{\kappa}{2}\|P_{X^n}-p_X\|_2^2 . \label{eqn:bDelta}
\end{equation}
We now evaluate the probability that $\|\bDelta\|_\infty$ exceeds $c_n>0$: 
\begin{align}
\rvP ( \|\bDelta\|_{\infty} \ge c_n  )  & \le  \rvP \left( \frac{\kappa}{2}\| P_{X^n} -p_{X}  \|_2^2 \ge c_n \right) \label{eqn:use_delta_bd}   \\
& \le \rvP \left(  \| P_{X^n} -p_{X}  \|_1^2 \ge \frac{ 2c_n}{\kappa} \right)  \label{eqn:l1l2} \\
&\le  2^{|\calX|} 2^{-n  c_n / \kappa  },   \label{eqn:sanov} 
\end{align}
where~\eqref{eqn:use_delta_bd} uses the bound on $\|\bDelta\|_\infty$ in~\eqref{eqn:bDelta},  \eqref{eqn:l1l2} follows because the $\ell_2$-norm dominates the $\ell_1$-norm for finite-dimensional vectors, and finally \eqref{eqn:sanov} follows from a sharpened bound on the $\ell_1$-deviation of the type from the generating distribution by Weissman et al.~\cite{Weiss03}. Setting 
\begin{equation}
\label{eqn:cn}
c_n:=\frac{ { \kappa}\log n}{n}
\end{equation}
    establishes that 
\begin{equation}
\rvP(  \|\bDelta\|_{\infty} \ge c_n ) \le \frac{ 2^{|\calX|}}{ n }. \label{eqn:res_sanov}
\end{equation}
For convenience, let us denote the  left-hand-side (LHS) of \eqref{eqn:rrt} as  $q_n$. Then,  using~\eqref{eqn:taylor_rrt}, 
\begin{align}
q_n &= \rvP\left(\sum_{x\in\calX} \bg'(x) [P_{X^n}(x)-p_X(x)] +\bDelta\ge\frac{\bz}{\sqrt{n}}-b_n\bone   \right). \label{eqn:apply_tay}
\end{align}
Now, we note the following fact which is proved in Appendix~\ref{app:simple_pr}.
\begin{lemma} \label{lem:simple_pr}
Let $\bG$ and $\bDelta$ be random vectors in $\bbR^d$. Let $\bv$ be a vector in $\bbR^d$. Then for any $\phi\ge 0$, 
\begin{equation}
\rvP(\bG+\bDelta\ge\bv)\ge\rvP(\bG\ge \bv+  \phi  \bone ) - \rvP( \|\bDelta\|_{\infty}\ge  \phi )   . \label{eqn:simple_prob1}
\end{equation}
\end{lemma}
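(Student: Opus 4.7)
The plan is to reduce the lemma to a simple containment of events and then invoke the union bound. Define the three events $\calA := \{\bG + \bDelta \ge \bv\}$, $\calB := \{\bG \ge \bv + \phi \bone\}$, and $\calC := \{\|\bDelta\|_\infty \ge \phi\}$. The conclusion to be proved then reads $\rvP(\calA) \ge \rvP(\calB) - \rvP(\calC)$, and since $\rvP(\calA) \ge \rvP(\calA \cap \calC^c)$, it suffices to show the pointwise inclusion $\calB \cap \calC^c \subset \calA$. From there, the estimate follows because $\rvP(\calB \cap \calC^c) \ge \rvP(\calB) - \rvP(\calC)$ by the elementary bound $\rvP(\calB) = \rvP(\calB \cap \calC^c) + \rvP(\calB \cap \calC) \le \rvP(\calB \cap \calC^c) + \rvP(\calC)$.

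The key containment $\calB \cap \calC^c \subset \calA$ is verified componentwise. On the event $\calC^c$, we have $\|\bDelta\|_\infty < \phi$, which componentwise says $\Delta_t > -\phi$ for every $t = 1,\ldots,d$, i.e., $\bDelta > -\phi \bone$. Combining with the defining inequality of $\calB$, namely $G_t \ge v_t + \phi$ for each $t$, yields
\begin{equation}
G_t + \Delta_t > (v_t + \phi) + (-\phi) = v_t
\end{equation}
for every $t$, so $\bG + \bDelta \ge \bv$ holds (in fact strictly), which is precisely the event $\calA$.

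The main obstacle here is essentially trivial — one just has to be careful that $\|\bDelta\|_\infty < \phi$ gives a two-sided bound on each component of $\bDelta$, and only the lower bound $\Delta_t > -\phi$ is needed to cancel the upward perturbation $+\phi$ built into the definition of $\calB$. No nontriviality in measurability arises since the events are defined by finitely many coordinatewise inequalities on random vectors in $\bbR^d$. The assumption $\phi \ge 0$ is used only insofar as the bound $\rvP(\|\bDelta\|_\infty \ge \phi) \le 1$ remains informative; the inclusion argument itself works for any real $\phi$.
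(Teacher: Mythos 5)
Your proof is correct and takes essentially the same route as the paper: both establish the containment of the event $\{\bG \ge \bv + \phi\bone\}$ intersected with a good event for $\bDelta$ inside $\{\bG + \bDelta \ge \bv\}$, then apply the elementary bound $\rvP(\calB \cap \calC^c) \ge \rvP(\calB) - \rvP(\calC)$. The only cosmetic difference is that the paper routes the argument through the auxiliary event $\calG := \{\bDelta > -\phi\bone\}$ and then bounds $\rvP(\calG^c) \le \rvP(\|\bDelta\|_\infty \ge \phi)$, whereas you work directly with $\calC^c = \{\|\bDelta\|_\infty < \phi\}$; both are instances of the same componentwise cancellation.
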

Using the identifications $\bG\leftarrow\sum_x \bg'(x) [ P_{X^n}(x) -p_X(x) ]$, $\phi\leftarrow c_n$, $\bDelta\leftarrow \bDelta$ and  $\bv \leftarrow  \frac{\bz}{\sqrt{n}}-b_n\bone$, we can lower bound the right hand side of  \eqref{eqn:apply_tay} as follows,
\begin{align}
q_n&\ge \rvP\left(\sum_{x\in\calX} \bg'(x) [P_{X^n}(x)-p_X(x)]\ge\frac{\bz}{\sqrt{n}}-b_n\bone  +c_n\bone \right) -\rvP(\|\bDelta\|_{\infty} \ge c_n ) \label{eqn:prob_ineq}\\
&\ge \rvP\left(\sum_{x\in\calX} \bg'(x) [P_{X^n}(x)-p_X(x)]\ge\frac{\bz}{\sqrt{n}}-b_n \bone +c_n\bone \right) - \frac{ 2^{|\calX|}}{ n }.\label{eqn:use_app}
\end{align}
In the last inequality, we used the result in~\eqref{eqn:res_sanov} for  the chosen $c_n$. Because the type $ P_{X^n}$ puts a probability mass of $\frac{1}{n}$ on each sample $X_k$,
\begin{equation}
\sum_{x\in\calX} \bg'(x) P_{X^n}(x) = \frac{1}{n}\sum_{k=1}^n \bg'(X_k). \label{eqn:type_place}
\end{equation}
By definition of the expectation, we also have 
\begin{equation}
\sum_{x\in\calX} \bg'(x) p_X(x)=\rvE [\bg'(X)] .\label{eqn:expect_def}
\end{equation}
The substitution of~\eqref{eqn:type_place} and \eqref{eqn:expect_def} in~\eqref{eqn:use_app} yields
\begin{align}
q_n &\ge \rvP\left( \frac{1}{n}\sum_{k=1}^n  ( \bg'(X_k)-  \rvE [ \bg'(X) ] )\ge \frac{\bz}{\sqrt{n}}-b_n\bone+c_n\bone\right) - \frac{ 2^{|\calX|}}{ n } \\*
& = \rvP\left( \frac{1}{\sqrt{n}}\sum_{k=1}^n  ( \bg'(X_k)-  \rvE [ \bg'(X) ] )\ge  {\bz}-\sqrt{n}(b_n-c_n)\bone\right) - \frac{ 2^{|\calX|}}{ n } . \label{eqn:iid_rv}
\end{align}
Now note that the random vectors $\{ \bg'(X_k)-  \rvE [ \bg'(X) ]\}_{k=1}^n$ are i.i.d.\ and have zero-mean and covariance $\bV$ defined in~\eqref{eqn:vdef}. In addition, the set $\{\bg \in \bbR^d:\bg\ge\bz'\}$ is convex so it belongs to $\frakC_d$. Using the multidimensional Berry-Ess\'{e}en theorem in~\eqref{eqn:bent2} to further lower bound~\eqref{eqn:iid_rv} yields
\begin{align}
q_n\ge \rvP \left( \bZ\ge \bz-\sqrt{n}(b_n-c_n) \bone\right)  - \frac{400d^{1/4} \xi}{\lambda_{\min}(\bV) \sqrt{n}}- \frac{ 2^{|\calX|}}{ n }  \label{eqn:apply_berry}
\end{align}
where  the third moment $\xi = \rvE [\|  \bg'(X )-  \rvE [ \bg'(X) ]  \|_2^3]<\infty$ by assumption. In addition,   we assumed that $\lambda_{\min}(\bV)>0$  so the second term  is finite.   Now, note that the sequence $\sqrt{n} (b_n-c_n) = \frac{\log n}{\sqrt{n}}$  from the definition of $b_n$ in \eqref{eqn:seq_an}  and $c_n$ in \eqref{eqn:cn}. Since $\delta\mapsto\rvP(\bZ\ge\bz-\delta\bone)$ is continuously differentiable and monotonically increasing, we have \begin{equation}\label{eqn:taylorapp}\rvP(\bZ\ge\bz-\delta\bone)=\rvP(\bZ\ge\bz)+O(\delta)\end{equation} by Taylor's approximation theorem. Applying \eqref{eqn:taylorapp} to \eqref{eqn:apply_berry} with $\delta=\sqrt{n}(b_n-c_n)=\frac{\log n}{\sqrt{n}}$ yields the lower bound
\begin{align}
q_n  &\ge \rvP \left( \bZ\ge \bz \right)  + O\left(\frac{\log n}{\sqrt{n}} \right)- \frac{400d^{1/4} \xi}{\lambda_{\min}(\bV) \sqrt{n}}- \frac{ 2^{|\calX|}}{ n } , \label{eqn:end_vrrt}
\end{align}
whence the desired result  follows for the case $\bV\succ 0$. 

Now we consider the case where $\bV$ is singular but  recall that we assume $\rank(\bV)\ge 1$.  The only step in which we have to modify  in the proof for the case $\bV\succ 0$ is in the application of the multidimensional  Berry-Ess\'{e}en theorem in~\eqref{eqn:apply_berry}. This is because we would be dividing by $\lambda_{\min}(\bV)=0$. To fix this, we reduce the problem to the non-singular case. Assume that $\rank(\bV)=r<d$ and define the zero-mean i.i.d.\ random vectors $\bA_k:=\bg'(X_k) - \rvE [\bg'(X)]$. There exists a $d\times r$ matrix $\bT$ such that $\bA_k = \bT\bB_k$ where $\bB_k\in\bbR^r$ are i.i.d.\ random vectors  with positive-definite covariance matrix $\tilde{\bV}$. The matrix $\bT$ can be taken to be composed of the $r$ eigenvectors corresponding to the non-zero eigenvalues of $\bV$.   We can now replace the $\bA_k$ vectors in~\eqref{eqn:iid_rv} with $\bT\bB_k$ and apply the  multidimensional Berry-Ess\'{e}en theorem~\cite{Ben03} to $\bB_1,\ldots, \bB_n$. The theorem clearly applies since the set $\{\mathbf{b} \in \bbR^r:\bT \mathbf{b}\ge\bz'\}$ is   convex. This gives the same conclusion as in~\eqref{eqn:end_vrrt} with $r$ in place of $d$ and $\lambda_{\min}(\tilde{\bV })$ in place of $\lambda_{\min}(\bV) $.
\end{proof}

\subsection{Proof  of the Global Dispersion for the SW Problem (Theorem~\ref{thm:orr})} \label{sec:prf_sw}

We now present the  proof of Theorem~\ref{thm:orr} on the $(n,\epsilon)$-optimal rate region for distributed lossless source coding. We present the achievability proof in Section~\ref{sec:ach} and the converse proof in Section~\ref{sec:conv}.   We will see that the achievability procedure (coding scheme) is universal. In Section~\ref{sec:universal}, we discuss the implications of choosing not to use a universal decoding rule but a rule that is akin to maximum-a posteriori decoding~\cite{Gal76}. 

\subsubsection{Achievability}\label{sec:ach} 
\textcolor{white}{..}  
\begin{proof}
Let $(R_1, R_2)$ be a rate pair  in the inner bound $\scR_{\mathrm{in}}(n,\epsilon)$ defined in \eqref{eqn:ach_sw}. \\
{\em Codebook Generation}: For $j=1,2$, randomly and independently assign an index $f_{1,n}(x_j^n) \in [2^{nR_j}]$ to each sequence $x_j^n\in\calX_j^n$ according to a uniform probability mass function. The sequences of the same index form a {\em bin}, i.e., $\calB_j(m_j) := \{x_j^n\in\calX_j^n : f_{1,n}(x_j^n)  = m_j\}$. Note that $\calB_j(m_j),  m_j  \in [2^{nR_j}]$ are random subsets of $\calX_j^n$. The bin assignments are revealed to all parties. In particular, the decoder knows the bin rates $R_j$. \\
{\em Encoding}: Given $x_j^n\in\calX_j^n$, encoder $j$  transmits the bin index  $f_{j,n}(x_j^n)$. Hence, for length-$n$ sequence, the rates of   $m_1$ and $m_2$ are $R_1$ and $R_2$ respectively.  \\
{\em Decoding}: The decoder, upon receipt of the bin indices $(m_1, m_2)$ finds the unique sequence pair $(\hatx_1^n, \hatx_2^n) \in \calB_1(m_1)\times \calB_2(m_2)$ such that the empirical entropy vector 
\begin{equation}
\hat{\bH}(\hatx_1^n, \hatx_2^n) :=  \begin{bmatrix}\hatH(\hatx_1^n|\hatx_2^n) \\ \hatH(\hatx_2^n|\hatx_1^n)  \\ \hatH(\hatx_1^n,\hatx_2^n)  \end{bmatrix}  \le\bR-\delta_n\bone , \label{eqn:decoding_rule_sw}
\end{equation}
where the thresholding sequence $\delta_n$ is defined as 
\begin{equation}
\delta_n:= \left(|\calX_1||\calX_2|+\frac{1}{2} \right)\frac{\log (n+1)}{n}.\label{eqn:delta_n}
\end{equation}   Define   $\scT(\bR, \delta_n):= \{ \bz\in\bbR^3:\bz\le \bR -\delta_n\bone \}$ to be the {\em typical empirical entropy set}. Then, \eqref{eqn:decoding_rule_sw} is equivalent to $\hat{\bH}(\hatx_1^n, \hatx_2^n)\in\scT(\bR,\delta_n)$.  If there is more than one pair or no such pair in $\calB_1(m_1)\times \calB_2(m_2)$, declare a decoding error. Note that our decoding scheme is {\em universal} \cite{Csi97}, i.e., the decoder does not depend on knowledge of the true distribution $p_{X_1, X_2}$.  It does depend on the rate pair which is known to the decoder since the codebook (bin assignments) is known to all parties. 
\\ {\em Analysis of error probability}: Let the sequences sent by the two users be $(X_1^n, X_2^n)$ and let their corresponding  bin indices be $(M_1, M_2)$. We bound the probability of error averaged over the random code construction. Clearly, the ensemble probability of error   is bounded above by the sum of the probabilities of the following four events:
\begin{align}
\calE_1 &:= \{ \hatbH (X_1^n, X_2^n) \notin   \scT(\bR,\delta_n)  \}   \\
\calE_2 &:= \{\exists \, \tilx_1^n\in \calB_1(M_1) \setminus\{X_1^n\} :    \hatbH (\tilx_1^n, X_2^n) \in   \scT(\bR,\delta_n)  \}  \\
\calE_3 &:= \{\exists \,  \tilx_2^n\in \calB_2(M_2) \setminus\{X_2^n\} :    \hatbH (X_1^n, \tilx_2^n)   \in   \scT(\bR,\delta_n)  \} \\
\calE_4 &:= \{\exists  \, \tilx_1^n\in \calB_1(M_1) \setminus\{X_1^n\}, \tilx_2^n\in \calB_2(M_2) \setminus\{X_2^n\} :  \nn\\*
&\qquad  \qquad\qquad  \hatbH (\tilx_1^n, \tilx_2^n)  \in   \scT(\bR,\delta_n)   \} 
\end{align}
We bound the probabilities of these events in turn. Consider
\begin{align}
\rvP(\calE_1)  &= 1-\rvP(  \hatbH (P_{X_1^n, X_2^n}) \in \scT(\bR,\delta_n) )  \label{eqn:depend_type} \\
&= 1-\rvP(  \hatbH (P_{X_1^n, X_2^n}) \le \bR - \delta_n\bone)  \label{eqn:depend_type2}\\
&= 1-\rvP \left(  \hatbH (P_{X_1^n, X_2^n}) \le \bH(p_{X_1, X_2})+\frac{\tilbz}{\sqrt{n}} + (a_n - \delta_n )\bone\right)  \label{eqn:depend_type3}
\end{align}
where  we made the dependence of the empirical entropy vector on the type explicit in \eqref{eqn:depend_type}.   In \eqref{eqn:depend_type2}, we invoked the definition of $\scT(\bR,\delta_n)$. In \eqref{eqn:depend_type3}, we used the fact that $\bR = \bH(p_{X_1, X_2}) + \frac{\tilbz}{\sqrt{n}} +a_n$ for some  vector $\tilbz\in\bbR^3$ that satisfies $\rvP(\bZ\le\tilbz)\ge 1-\epsilon$ where $\bZ\sim\calN(\bzero,\bV)$ and $a_n = \frac{\nu\log n}{n}$ for $\nu = |\calX_1| |\calX_2| +\frac{3}{2} +   \kappa$, where $\kappa$ was defined in \eqref{eqn:kappa_sw}. Note that $\kappa<\infty$ because we assumed that $p_{X_1 , X_2}(x_1, x_2)>0$ for all $(x_1, x_2)$. 

We now   bound the probability in~\eqref{eqn:depend_type3} using the vector rate redundancy theorem with the following identifications:  random variable  $X\leftarrow (X_1, X_2)$, smooth function $\bg(p_{X_1, X_2}) \leftarrow-\bH(p_{X_1, X_2})$, evaluation vector $\bz \leftarrow -\tilbz$ and sequence $b_n\leftarrow a_n-\delta_n$. The function $p_{X_1, X_2}\mapsto-\bH(p_{X_1, X_2})$ is twice continuously differentiable because $p_{X_1 , X_2}(x_1, x_2)>0$ for all $(x_1, x_2)$.  Note that setting the coefficient of $a_n$, namely $\nu$,  to be $|\calX_1||\calX_2|+1/2+( \kappa+1)$ results in $b_n = ( \kappa+1)\frac{\log n}{n}$  as required by Theorem~\ref{thm:vrrt}.  This has been  ensured with the choice of $\nu$ in Definition~\ref{def:inner_sw}. Also, the third moment is uniformly bound as stated    in Appendix~\ref{app:finite_third} .

With the above identifications   and the realization that the matrix  $\bV$ in the vector rate redundancy theorem   equals $\cov( \bh(X_1, X_2))$ (by direct differentiation of entropy functionals),  
\begin{align}
\rvP( \calE_1^c)&\ge \rvP(\bZ\ge-\tilbz)+O\left(\frac{\log n}{\sqrt{n}} \right)  \\*
&=\rvP(\bZ\le\tilbz)+O\left(\frac{\log n}{\sqrt{n}} \right)  \label{eqn:use_zm} \\*
&\ge 1-\epsilon+O\left(\frac{\log n}{\sqrt{n}} \right) ,  \label{eqn:pe1c}
\end{align}
where in \eqref{eqn:use_zm} we used the fact that $ \rvP(\bZ\ge-\tilbz)=\rvP(\bZ\le\tilbz)$ because $\bZ$ has zero mean.  Consequently,
\begin{equation}
\rvP( \calE_1 ) \le \epsilon-O\left(\frac{\log n}{\sqrt{n}} \right). \label{eqn:pe1}
\end{equation}
For the second event,  by symmetry and uniformity, $\rvP(\calE_2)=\rvP(\calE_2|X_1^n\in \calB_1(1))$.  For ease of notation, let $p:= p_{X_1^n, X_2^n}$. Now consider the chain of inequalities:
\begin{align}
&\!\! \rvP(\calE_2|X_1^n\in \calB_1(1))  \nn \\*
&=  \sum_{x_1^n, x_2^n} p(x_1^n, x_2^n)   \rvP\Big[\exists \, \tilx_1^n\in \calB_1(1)\setminus\{X_1^n\}:  \nn\\*
& \qquad\hatbH ( \tilx_1^n, x_2^n ) \in   \scT(\bR,\delta_n)    \Big| (X_1^n, X_2^n)=(x_1^n,x_2^n), X_1^n\in\calB_1(1)\Big]  \label{eqn:events_indep}    \\
&\le \sum_{x_1^n, x_2^n} p(x_1^n, x_2^n)   \sum_{\tilx_1^n \ne x_1^n: \hatbH( \tilx_1^n,x_2^n) \in\scT(\bR,\delta_n)}\rvP \left(\tilx_1^n \in \calB_1(1)  \right)    \label{eqn:ub_sw}     \\
&\le \sum_{x_1^n, x_2^n} p(x_1^n, x_2^n)   \sum_{\tilx_1^n \ne x_1^n: \hatH( \tilx_1^n|x_2^n) \le R_1-\delta_n  }\rvP \left(\tilx_1^n \in \calB_1(1)  \right) \label{eqn:inclu}       \\
&= \sum_{x_1^n, x_2^n} p(x_1^n, x_2^n)  \sum_{\tilx_1^n \ne x_1^n: \hatH( \tilx_1^n|x_2^n)  \le R_1-\delta_n  } \frac{1}{\ceil{2^{nR_1}}} \label{eqn:unif}     \\
&\le  \sum_{Q \in \scP_n( \calX_2)}  \sum_{  x_2^n  \in \calT_{Q} } p(  x_2^n)  \sum_{ \substack{V \in \scV_n (\calX_1;Q) : \\ H(V| P_{x_2^n})\le R_1-\delta_n}}\sum_{\tilx_1^n\in    \calT_V(x_2^n)} 2^{-nR_1} \label{eqn:parti}   \\
&\le  \sum_{Q \in \scP_n( \calX_2)}  \sum_{  x_2^n  \in \calT_{Q} } p(  x_2^n)   \sum_{ \substack{V \in \scV_n(\calX_1;Q) : \\ H(V| P_{x_2^n})\le R_1-\delta_n}}    2^{nH(V |P_{x_2^n} )} 2^{-nR_1} \label{eqn:card_vshell}   \\
&\le   \sum_{    x_2^n   } p(  x_2^n)    (n+1)^{|\calX_1| |\calX_2|} 2^{n (R_1-\delta_n)} 2^{-nR_1} \label{eqn:pe2}\\
&=   \  (n+1)^{|\calX_1| |\calX_2|} 2^{n (R_1-\delta_n)} 2^{-nR_1} \label{eqn:pe22}
\end{align}
where   \eqref{eqn:events_indep} follows from the definition of $\calE_2$, \eqref{eqn:ub_sw} follows  from the union bound and because  for $\tilx_1^n\ne x_1^n$, the events $\{x_1^n\in\calB_1(1)\}$, $\{\tilx_1^n\in\calB_1(1)\}$ and $\{(X_1^n, X_2^n)=(x_1^n, x_2^n)\}$  are mutually independent,  and \eqref{eqn:inclu}  follows from  the inclusion $\{\tilx_1^n: \hatbH ( \tilx_1^n, x_2^n )  \in   \scT(\bR,\delta_n)\}\subset \{\tilx_1^n:  \hatH( \tilx_1^n|x_2^n)  \le R_1-\delta_n \}$.  Equality~\eqref{eqn:unif}  follows from the uniformity in the random binning. In \eqref{eqn:parti}, we  first dropped the constraint $\tilx_1^n\ne x_1^n$ and marginalized over $x_1^n$. Then, we partitioned the sum over $x_2^n$ into disjoint  type classes indexed by      $Q  \in \scP_n( \calX_2)$ and we partitioned the sum over   $\tilx_1^n \in\calX_1^n$ into sums over stochastic matrices  $V\in\scV_n  (\calX_1;Q)$ (for notation see Section~\ref{sec:notation}). In \eqref{eqn:card_vshell},  we  upper bounded  the cardinality of the $V$-shell as $|\calT_V( x_2^n)|\le 2^{n H(V|P_{ x_2^n})}$~\cite[Lem.~1.2.5]{Csi97}. In \eqref{eqn:pe2}, we used the  Type Counting Lemma~\cite[Eq.~(2.5.1)]{Csi97}. By the choice of  $\delta_n$ in~\eqref{eqn:delta_n},  inequality~\eqref{eqn:pe22} reduces to
\begin{equation}
\rvP(\calE_2)\le \frac{1}{\sqrt{n+1}}.
\end{equation} Similarly $\rvP(\calE_3)\le \frac{1}{\sqrt{n+1}}$ and $\rvP(\calE_4)\le \frac{1}{\sqrt{n+1}}$. 

Together with~\eqref{eqn:pe1}, we conclude that the  error probability defined in \eqref{eqn:perr_sw} averaged over the random binning  is upper bounded as 
\begin{equation}
\rvP(\calE)\le \sum_{i=1}^4\rvP(\calE_i)\le \epsilon ,
\end{equation}
for all $n$ sufficiently large. Hence, there is a deterministic code whose error probability in~\eqref{eqn:perr_sw} is no greater than $\epsilon$ if the rate pair  $(R_1,  R_2)$ belongs to $\scR_{\mathrm{in}}(n,\epsilon)$.  \end{proof}
 \subsubsection{Converse} \label{sec:conv}
 \textcolor{white}{..} 
 \begin{proof} 
To prove the outer bound,  we use  Lemma 7.2.2.\ in Han~\cite{Han10}  (which was originally proved by Miyake and Kanaya~\cite{miyake}) which asserts that every $(n, 2^{nR_1}, 2^{nR_2}, \epsilon)$-SW code must satisfy 
\begin{align}
\epsilon&\ge \rvP\Bigg[   \frac{1}{n}\log  \frac{1}{p_{X_1^n|X_2^n}(X_1^n|X_2^n)}\ge R_1 +\gamma  \nn\\*
& \quad\mbox{ or } \frac{1}{n}\log\frac{1}{ p_{X_2^n|X_1^n}(X_2^n|X_1^n)}\ge R_2 +\gamma  \nn\\*
&\quad\mbox{ or } \frac{1}{n}\log \frac{1}{p_{X_1^n,X_2^n}(X_1^n,X_2^n)}\ge R_1+R_2 +\gamma  \Bigg] - 3( 2^{-n\gamma} ) \label{eqn:hansw} \\
& =  1-\rvP\left[  \frac{1}{n} \bh(X_1^n, X_2^n) \le \bR + \gamma \bone  \right] - 3( 2^{-n\gamma}  ), \label{eqn:han_sw}
\end{align}
for any $\gamma>0$.  This result is typically used  for proving strong converses for general (non-stationary, non-ergodic) sources but as we will see it is also very useful for proving a dispersion-type converse. Recall that $\bh(X_1^n, X_2^n)$ is the entropy density vector in~\eqref{eqn:ent_dens} evaluated at $(X_1^n,X_2^n)$. By the memorylessness of the source, it can be written as a sum of i.i.d.\ random vectors $\{\bh(X_{1k}, X_{2k})\}_{k=1}^{n}$. 

We assume that $\bV\succ 0$. The case where $\bV$ is singular can be handled in exactly the same way as we did in the proof of the vector rate redundancy theorem. See discussion after \eqref{eqn:end_vrrt}. Fix  $\gamma:=\frac{\log n}{2n}$ and define  $\tilbz:=\sqrt{n} ( \bR-\bH + \frac{\log n}{n}\bone )$.   Now consider  the probability in \eqref{eqn:han_sw}, denoted as $s_n$:
\begin{align}
s_n &=\rvP\left[ \frac{1}{{n}} \sum_{k=1}^n    \bh(X_{1k},  X_{2k}) \le \bH+\frac{\tilbz}{\sqrt{n}}-\frac{\log n}{n}\bone+ \gamma\bone\right]\\*
& =    \rvP\left[  \frac{1}{\sqrt{n}} \sum_{k=1}^n   ( \bh(X_{1k},  X_{2k})  -  \bH   )  \le  \tilbz- \frac{\log n}{2\sqrt{n}}     \bone  \right]   \label{eqn:def_tilbz} 
\end{align}
We are now ready to use the  multidimensional Berry-Ess\'{e}en theorem. We can easily verify that  the third moment $\xi_{\mathrm{SW}} = \rvE [ \|\bh(X_1, X_2) - \bH(p_{X_1, X_2}) \|_2^3]$ is uniformly bounded. See  Appendix~\ref{app:finite_third}. As such, using~\eqref{eqn:bent2} we can upper bound $s_n$ as follows: 
\begin{align}
s_n & \le    \rvP \left[ \bZ\le \tilbz-\frac{\log n}{2\sqrt{n}}  \bone \right] +  \frac{400 (3^{1/4})\xi_{\mathrm{SW}}}{\lambda_{\min}(\bV)^{3/2}\sqrt{n}}    \label{eqn:be_conv}   \\  
& =   \rvP \left( \bZ\le \tilbz \right) - O\left(\frac{\log n}{\sqrt{n}}  \right) .  \label{eqn:taylor_sw}
\end{align}
The last step follows  by Taylor's approximation theorem. See~\eqref{eqn:taylorapp}. On account of~\eqref{eqn:han_sw} and~\eqref{eqn:taylor_sw}, 
\begin{equation}
\epsilon \ge 1-\rvP(\bZ\le\tilbz) +O\left(\frac{\log n}{\sqrt{n}} \right)-  \frac{3}{\sqrt{n} } 
\end{equation}
which, upon rearrangement,  means that $\tilbz\in \scS(\bV,\epsilon-O(\frac{\log n}{\sqrt{n}}))$. Since $\scS(\bV,\epsilon')\subset \scS(\bV,\epsilon  )$ if $\epsilon'\le\epsilon$, the vector $\tilbz\in\scS(\bV,\epsilon)$. This implies that $ (R_1, R_2)\in\scR_{\mathrm{out}} (n,\epsilon)$ from the definition of $\tilbz$. 
\end{proof}


\subsubsection{Comments on the proof and Universal Decoding} \label{sec:universal}
In place of the universal decoding rule in~\eqref{eqn:decoding_rule_sw}, one could use a non-universal one  by comparing the  normalized entropy density vector    (instead of the empirical entropy vector) evaluated at $(\hatx_1^n,\hatx_2^n)$ with the rate vector, i.e., 
\begin{equation}
-\frac{1}{n} \begin{bmatrix}  \log p_{X_1^n|X_2^n}(\hatx_1^n|\hatx_2^n) \\ \log p_{X_2^n|X_1^n}(\hatx_2^n|\hatx_1^n)  \\ \log p_{X_1^n,X_2^n}(\hatx_1^n,\hatx_2^n) \end{bmatrix}  \le\bR-\delta_n\bone .\label{eqn:decoding_rule_sw2}
\end{equation}
In this case, Taylor expansion as in the proof of the vector rate redundancy theorem [cf.~\eqref{eqn:taylor_rrt}]  would not be required because the above criterion can be written  a  normalized sum of i.i.d.\ random vectors. The multidimensional Berry-Ess\'{e}en theorem can  thus be applied  directly. Under the decoding strategy in~\eqref{eqn:decoding_rule_sw2}, close examination of the proofs shows that there is symmetry between the error probability  bounds in the direct  and converse parts as in \cite[Lemmas 7.2.1-2]{Han10}. In~\cite{Sar05b}, the authors also suggested a universal strategy for finite blocklength SW coding. They suggested the use of feedback to estimate the source statistics, whereas we use the empirical entropy here, cf.~\eqref{eqn:decoding_rule_sw}.

\subsection{Proof  of the Global Dispersion   for the DM-MAC  (Theorem~\ref{thm:mac})} \label{sec:prf_mac}

We now present the  proof of Theorem~\ref{thm:mac} on the $(n,\epsilon)$-capacity region for the DM-MAC.  We present the proof of the inner bound in Section~\ref{sec:mac_ach} and the proof that the cardinality of $Q$ can be restricted to $9$ in Section~\ref{sec:card}.  In Section~\ref{sec:universal_mac}, we comment on how the proof and the statement of the result can be modified if the input and output alphabets of the MAC are not discrete but are arbitrary.

\subsubsection{Achievability}   \label{sec:mac_ach} 
\textcolor{white}{..} 
\begin{proof}
Fix a finite  alphabet $\calQ$ and a tuple of input distributions $(p_Q, p_{X_1|Q}, p_{X_2|Q})$. Fix a  pair of $(n,\epsilon)$-achievable  rates  $(R_1, R_2) \in \scR(n,\epsilon; p_Q,p_{X_1|Q}, p_{X_2|Q}  )$. See definitions in Section~\ref{sec:defmac}.\\ 
{\em Codebook Generation}: Randomly generate a sequence $q^n\sim\prod_{k=1}^n p_Q(q_k)$. For $j=1,2$, randomly and conditionally independently generate  codewords $x_j^n(m_j) \sim \prod_{k=1}^n p_{X_j|Q} (x_{jk}|q_k)$ where $m_j \in [2^{nR_j}]$.  The codebook consisting of $q^n$, $x_1^n(m_1), m_1 \in [2^{nR_1}]$, and  $x_2^n(m_2), m_2 \in [2^{nR_2}]$  is revealed to all parties.  \\
{\em Encoding}: For $j = 1,2, $ given $m_j \in [2^{nR_j}]$, encoder $j$ sends   codeword $x_j^n(m_j) \in \calX_j^n$.  \\
{\em Decoding}: The decoder, upon receipt of the output of the DM-MAC $y^n \in\calY^n$ finds the unique message pair $(\hatm_1, \hatm_2) \in[2^{nR_1}]\times [2^{nR_2}]$ such that the empirical mutual information vector 
\begin{equation}
\hat{\bI}(q^n,x_1^n(\hatm_1),  x_2^n(\hatm_2), y^n) :=  \begin{bmatrix}\hatI(x_1^n(\hatm_1)\wedge y^n |x_2^n(\hatm_2), q^n) \\ \hatI(x_2^n(\hatm_2)\wedge y^n |x_1^n(\hatm_1), q^n)  \\\hatI(x_1^n(\hatm_1), x_2^n(\hatm_2) \wedge y^n | q^n)  \end{bmatrix}  \ge\bR+\delta_n\bone , \label{eqn:decoding_rule_mac}
\end{equation}
where $\delta_n:=  (|\calQ| |\calX_1||\calX_2||\calY|+ \frac{1}{2} )\frac{\log (n+1)}{n}.$ If there is no such message pair  or there is not a unique message pair, declare a decoding error. We remind the reader that $\hatI(x_1^n(\hatm_1)\wedge y^n |x_2^n(\hatm_2), q^n)$ is the conditional mutual information $I(\tilX_1; \tilY|\tilX_2, \tilQ)$ where the dummy random variable $(\tilQ, \tilX_1,\tilX_2,\tilY)$ has distribution, an $n$-type, $P_{q^n,x_1^n(\hatm_1),  x_2^n(\hatm_2), y^n}$.  Let $\scT(\bR, \delta_n) := \{\bz\in\bbR^d: \bz \ge \bR+\delta_n\bone\}$ be the {\em typical empirical mutual information} set. Then the criterion in \eqref{eqn:decoding_rule_mac} is can be written compactly as $\hat{\bI}(q^n,x_1^n(\hatm_1), \hatx_2^n(\hatm_2), y^n) \in \scT(\bR,\delta_n)$. Note that, unlike typicality set decoding~\cite{elgamal} or maximum-likelihood decoding~\cite{gallagerIT}, the decoding rule in  \eqref{eqn:decoding_rule_mac} is {\em universal}, i.e., the decoder does not need to be given knowledge of the channel statistics $W$.  \\
 {\em Analysis of error probability}: By the uniformity of the messages $M_1$ and  $M_2$ and the random code construction, we can assume that $(M_1, M_2)=(1,1)$. The average ensemble error probability is upper bounded by the sum of the probabilities of the following four events:
 \begin{align}
 \calE_1 &:= \{ \hat{\bI}(Q^n,X_1^n(1),  X_2^n(1), Y^n)  \notin\scT(\bR,\delta_n) \} \\
  \calE_2&:= \{ \exists \, \tilm_1\ne 1: \hat{\bI}(Q^n,X_1^n(\tilm_1),  X_2^n(1), Y^n)  \in\scT(\bR,\delta_n) \} \\
    \calE_3&:= \{ \exists\, \tilm_2\ne 1: \hat{\bI}(Q^n,X_1^n(1),  X_2^n(\tilm_2), Y^n)  \in\scT(\bR,\delta_n) \} \\
        \calE_4&:= \{ \exists\, \tilm_1\ne 1, \tilm_2\ne 1: \hat{\bI}(Q^n,X_1^n(\tilm_1),  X_2^n(\tilm_2), Y^n)  \in\scT(\bR,\delta_n) \} 
 \end{align}
We   use the definition of    $\scR(n,\epsilon; p_Q,p_{X_1|Q}, p_{X_2|Q}  )$ in~\eqref{eqn:ach_mac} to express  $\rvP(\calE_1)$ as follows:
\begin{align}
\rvP(\calE_1)  &= 1- \rvP \left( \hat{\bI}(Q^n,X_1^n(1),  X_2^n(1), Y^n)  \in \scT(\bR,\delta_n) \right) \\
&= 1- \rvP \left( \hat{\bI}(Q^n,X_1^n(1),  X_2^n(1), Y^n) \ge \bR+\delta_n \bone\right) \label{eqn:useTmac} \\
& = 1- \rvP \left( \hat{\bI}(Q^n,X_1^n(1),  X_2^n(1), Y^n) \ge  \bI(p_Q, p_{X_1|Q}, p_{X_2|Q}, W) +\frac{\bz}{\sqrt{n}} -a_n \bone +\delta_n\bone \right), \label{eqn:def_rates_mac}
\end{align}
where  \eqref{eqn:useTmac} follows from the definition  of $\scT(\bR,\delta_n)$. In \eqref{eqn:def_rates_mac}, we used the definition of $\scS(\bV,\epsilon)$ to assert that  $\bz\in\bbR^3$    is a vector satisfying $\rvP(\bZ\ge\bz)\ge 1-\epsilon$  for $\bZ\sim\calN(\bzero,\bV)$. Also, the sequence $a_n =  \frac{\nu\log n}{n}$ where $\nu$ is given in \eqref{eqn:kappa_mac}.

Now we use the vector rate redundancy theorem   with the following identifications:  random variable $X\leftarrow (Q, X_1, X_2, Y)$, smooth function $\bg(p_Q p_{X_1|Q}p_{X_2|Q}W) \leftarrow  \bI(p_Q, p_{X_1|Q}, p_{X_2|Q}, W)$, evaluation vector $\bz \leftarrow\bz$ and sequence $b_n \leftarrow a_n-\delta_n$. 
As such, the probability in~\eqref{eqn:def_rates_mac} satisfies
\begin{align} 
&\rvP \left( \hat{\bI}(Q^n,X_1^n(1),  X_2^n(1), Y^n) \ge  \bI(p_Q, p_{X_1|Q}, p_{X_2|Q}, W) +\frac{\bz}{\sqrt{n}} -a_n \bone +\delta_n\bone \right) \nn\\*
&\qquad \ge\rvP(\bZ\ge\bz)+ O\left(\frac{\log n}{\sqrt{n}}\right)  \label{eqn:apply_vrrt0} \\*
&\qquad\ge  1-\epsilon +O\left(\frac{\log n}{\sqrt{n}}\right), \label{eqn:apply_vrrt}
\end{align}
where  in  the first inequality, we   used the fact that the $\bV$ in the vector rate redundancy theorem coincides with the information  dispersion matrix    $\bV(  p_Q, p_{X_1|Q}, p_{X_2|Q}, W )$. This can easily be verified by direct differentiation of (conditional) mutual information quantities with respect to the joint distribution $p_{Q, X_1, X_2, Y}:=p_Qp_{X_1|Q} p_{X_2|Q}W$.     Combining~\eqref{eqn:def_rates_mac} and~\eqref{eqn:apply_vrrt} yields
\begin{equation}
\rvP(\calE_1) \le \epsilon-O\left(\frac{\log n}{\sqrt{n}}\right). \label{eqn:e1_mac}
\end{equation}
To bound the probabilities of $\calE_2,\calE_3$ and $\calE_4$, we use the following lemma whose proof is relegated to Appendix~\ref{prf:mi}. This result is a types-based  analogue of the {\em (conditional) joint typicality lemma} used extensively  for channel coding problems in~\cite{elgamal}. 
\begin{lemma}[Atypicality of Empirical Mutual Information] \label{lem:emil}
Fix a joint distribution $p_{U,X,Y} = p_U p_{X|U} p_{Y|U}$, i.e., $X-U-Y$ form a Markov chain in that order. Let $(U^n, X^n, Y^n) \sim\prod_{k=1}^n p_{U,X,Y} (u_k, x_k, y_k)$ so $X^n - U^n - Y^n$. Then for any $t>0$ and any $n\in\bbN$, the empirical mutual information $\hatI( X^n\wedge Y^n|U^n)$ satisfies
\begin{equation}
\rvP( \hatI( X^n\wedge Y^n|U^n) \ge t)\le (n+1)^{ |\calX| |\calY| |\calU| } 2^{-nt} .
\end{equation}
\end{lemma}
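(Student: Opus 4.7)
The natural approach is the method of types. The key observation is that $\hatI(X^n\wedge Y^n|U^n)$ depends on $(U^n,X^n,Y^n)$ only through the joint type $P_{U^nX^nY^n}\in\scP_n(\calU\times\calX\times\calY)$, so I would first partition the event according to this joint type:
\begin{equation*}
\rvP\bigl(\hatI(X^n\wedge Y^n|U^n)\ge t\bigr)=\sum_{P:\, I_P(X;Y|U)\ge t}\rvP\bigl((U^n,X^n,Y^n)\in\calT_P\bigr),
\end{equation*}
where $I_P(X;Y|U)$ denotes the conditional mutual information evaluated under $P$. The number of types is at most $(n+1)^{|\calU||\calX||\calY|}$ by the type counting lemma, so it suffices to show that each summand is bounded by $2^{-nt}$.

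For the per-type bound I would invoke the standard method-of-types estimate $Q^n(\calT_P)\le 2^{-nD(P\Vert Q)}$ applied with $Q=p_U p_{X|U}p_{Y|U}$ (the Markov law by assumption). This gives
\begin{equation*}
\rvP\bigl((U^n,X^n,Y^n)\in\calT_P\bigr)\le 2^{-nD(P\,\Vert\, p_Up_{X|U}p_{Y|U})},
\end{equation*}
so the only remaining task is to lower bound this divergence by $I_P(X;Y|U)$.

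The heart of the argument is therefore the following algebraic decomposition, which is routine but should be stated carefully: writing $P_U,P_{X|U},P_{Y|U}$ for the marginals and conditionals of the joint type $P$, I would add and subtract $\log[P_U P_{X|U}P_{Y|U}]$ inside the divergence to obtain
\begin{equation*}
D(P\,\Vert\, p_Up_{X|U}p_{Y|U})=I_P(X;Y|U)+D(P_U\Vert p_U)+D(P_{X|U}\Vert p_{X|U}\,|\,P_U)+D(P_{Y|U}\Vert p_{Y|U}\,|\,P_U).
\end{equation*}
All four terms on the right are nonnegative, so $D(P\,\Vert\, p_Up_{X|U}p_{Y|U})\ge I_P(X;Y|U)\ge t$ on the summation domain, which yields the desired $2^{-nt}$ factor. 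Combining with the type-count bound completes the proof.

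The only subtlety, and the step I would pay most attention to, is that the Markov decomposition of the divergence crucially uses the product structure $p_{X|U}p_{Y|U}$ in the reference measure; without the Markov hypothesis in the lemma, the first term would not simplify to the conditional mutual information and the bound would fail. Everything else is bookkeeping via the standard method-of-types toolbox from \cite{Csi97}.
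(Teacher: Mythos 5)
Your argument is correct and takes essentially the same approach as the paper's: both use the method of types, identify the key divergence decomposition that exploits the Markov structure $X - U - Y$ to relate $I_P(X;Y|U)$ to a relative entropy whose remaining companion terms are nonnegative and can be dropped, and finish with the type-counting lemma. The only cosmetic difference is that you partition directly by the joint type $P$ and invoke the joint type-class bound $q^n(\calT_P)\le 2^{-nD(P\|q)}$ with $q=p_Up_{X|U}p_{Y|U}$, whereas the paper partitions first by the type $Q$ of $U^n$ and then by the conditional type $V$ of $(X^n,Y^n)$ given $u^n$, using the $V$-shell bound $W^n(\calT_V(u^n)|u^n)\le 2^{-nD(V\|W|Q)}$; the two bookkeeping schemes are equivalent by the chain rule for relative entropy and yield the same $(n+1)^{|\calU||\calX||\calY|}2^{-nt}$ estimate.
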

Now we use this lemma to bound $\rvP(\calE_2)$. By the union bound and the symmetry in the generation of the codewords,
\begin{align}
\rvP(\calE_2) & \le \sum_{\tilm_2\ne 1} \rvP(  \hat{\bI}(Q^n,X_1^n(\tilm_2),  X_2^n(1), Y^n)\in\scT(\bR,\delta_n)) \label{eqn:union_bd_mac} \\
&=  (\ceil{2^{nR_1}}-1) \rvP(  \hat{\bI}(Q^n,X_1^n(2),  X_2^n(1), Y^n)\in\scT(\bR,\delta_n)) \\
&\le  2^{nR_1} \rvP( \hatI(X_1^n(2)\wedge  Y^n    |X_2^n(1), Q^n) \ge R_1+\delta_n )\label{eqn:incl} \\
&\le 2^{nR_1} \rvP( \hatI(X_1^n(2)\wedge (X_2^n(1),  Y^n)    |Q^n) \ge R_1+\delta_n ) \label{eqn:cond_mi} \\
&\le(n+1)^{|\calQ| |\calX_1| |\calX_2| |\calY|  }   2^{nR_1}   2^{-n(R_1+\delta_n)} \label{eqn:apply_aemi}
\end{align}
where~\eqref{eqn:incl} follows from the inclusion $\{\hat{\bI}(Q^n,X_1^n(2),  X_2^n(1), Y^n)  \in\scT(\bR,\delta_n)\}\subset \{\hatI(X_1^n(2)\wedge  Y^n    |X_2^n(1), Q^n) \ge R_1+\delta_n\}$ and $\ceil{t}-1\le t$,    \eqref{eqn:cond_mi} follows from the fact that $I(\tilX_1 ; \tilY| \tilX_2,\tilQ) \le I(\tilX_1 ; \tilX_2,\tilY| \tilQ) $ for any four  random variables $\tilQ,\tilX_1 , \tilX_2,\tilY$.  For \eqref{eqn:apply_aemi}, we applied the atypicality of empirical mutual information lemma with the  following identifications: $t\leftarrow R_1+\delta_n$, $U\leftarrow Q$, $X \leftarrow X_1$ and $Y\leftarrow (X_2, Y)$. Note that for $\tilm_1\ne 1$, $X_1^n(\tilm_2)$ is conditionally independent of $(X_2^n(1) ,Y^n)$ given $Q^n$ so the lemma   applies.  Using the definition of $\delta_n$, we have 
\begin{equation}
\rvP(\calE_2)\le \frac{1}{\sqrt{n+1}}.\label{eqn:e2_mac}
\end{equation}
Similarly, $\rvP(\calE_3)\le \frac{1}{\sqrt{n+1}}$ and $\rvP(\calE_3)\le\frac{1}{\sqrt{n+1}}$. Uniting~\eqref{eqn:e1_mac} and \eqref{eqn:e2_mac} reveals that the average probability of error of the random code ensemble is bounded above as $\rvP(\calE) \le\sum_{i=1}^4 \rvP(\calE_i)\le \epsilon$. Therefore, there must exist a code whose average probability of error for the DM-MAC $W$ is bounded above by $\epsilon$ as desired. \end{proof}

\subsubsection{Cardinality Bounds} \label{sec:card}
\textcolor{white}{..} 
\begin{proof}
We now argue that $|\calQ|$ can be restricted to be no greater than $9$. The following $9$ functionals  are continuous in $p_{X_1, X_2|Q} :=  p_{X_1|Q}  p_{X_2|Q}$: Three mutual information quantities $I(X_1;Y|X_2,Q)$, $I(X_2;Y|X_1,Q)$ and $I(X_1, X_2;Y|Q)$, three variances  on the diagonals of $\bV(  p_Q, p_{X_1|Q}, p_{X_2|Q}, W  )$ and three covariances in the strict upper triangular part  of  $\bV(  p_Q, p_{X_1|Q}, p_{X_2|Q}, W  )$.    By the support lemma~\cite[Lemma 3.4]{Csi97} (or Eggleston's theorem), there exists a discrete random variable $Q'$, whose support has cardinality $|\calQ'|\le 9$, that preserves these $9$ continuous functionals in $p_{X_1, X_2|Q}$. Thus, the inner bound is preserved if the auxiliary time-sharing random variable $\calQ$ is restricted to have cardinality $9$.
 \end{proof}

\subsubsection{Extension to Arbitrary Alphabets} \label{sec:universal_mac}
In place of the universal decoding rule in~\eqref{eqn:decoding_rule_mac}, one could use a non-universal one  by comparing the  normalized information density vector    (instead of the empirical mutual information vector)    with the rate vector, i.e., 
\begin{equation}
 \frac{1}{n} \begin{bmatrix}    i(x_1^n(\hatm_1);y^n|x_2^n(\hatm_2),q^n) \\i(x_2^n(\hatm_2);y^n|x_1^n(\hatm_1),q^n) \\ i(x_1^n(\hatm_1), x_2^n(\hatm_2);y^n|q^n) \end{bmatrix}  \ge\bR+\delta_n\bone ,\label{eqn:decoding_rule_mac2}
\end{equation}
where $ i(x_1^n(\hatm_1);y^n|x_2^n(\hatm_2),q^n) := \log [ W^n(y^n|x_1^n(\hatm_1), x_2^n(\hatm_2)) / p_{Y^n|X_2^n,Q^n}(y^n| x_2^n(\hatm_2), q^n)]$ and similarly for the other two information densities. For this non-universal decoding strategy, Taylor expansion as in the proof of the vector rate redundancy theorem [cf.~\eqref{eqn:taylor_rrt}]  would not be required because the above criterion can be written as a normalized sum of i.i.d.\ random vectors. One can verify that a simpler version of the vector rate redundancy theorem can be proved for the decoding rule in \eqref{eqn:decoding_rule_mac2} if the channel and input distributions are such that the third moment is bounded. In addition, we   need to generalize the atypicality of empirical mutual information lemma for the steps in \eqref{eqn:union_bd_mac}--\eqref{eqn:e2_mac} to hold. This can be done using standard Chernoff bounding techniques.  Indeed, if $X- U-Y$ form a Markov chain and $(U^n, X^n, Y^n)\sim \prod_{k=1}^n p_{U,X,Y}(u_k, x_k, y_k)$, then 
\begin{equation}
\rvP\left( \frac{1}{n}\log\frac{p_{Y^n|X^n }(Y^n|X^n )}{p_{Y^n|U^n}(Y^n|U^n)}\ge t\right)\le 2^{-nt},
\end{equation}
for every $t\ge 0$. This is the analogue of Lemma~\ref{lem:emil}. Finally, note that we have used i.i.d.\ codebooks for simplicity. For the    AWGN-MAC, a codebook containing codewords of \emph{exact} power may result in a smaller dispersion. See~\cite{PPV10, Hayashi09} for the single-user case.

\subsection{Proof  of the Global Dispersion   for the DM-ABC  (Theorem~\ref{thm:bc})} \label{sec:prf_abc}
We now present the  proof of Theorem~\ref{thm:bc} on the $(n,\epsilon)$-capacity region for the DM-ABC. Conceptually, it is simple --- it uses the superposition coding technique~\cite{cover} and the vector rate redundancy theorem.  

\subsubsection{Achievability} 
\textcolor{white}{..} 
\begin{proof} Fix  an input alphabet $\calU$ and also an input distribution $p_{U,X} \in\scP(\calU\times\calX)$. This input distribution induces the distributions $p_U$ and $p_{X|U}$.  Also fix a pair of achievable rates $(R_1, R_2)$ belonging  to the region $\scR(n,\epsilon; p_{U,X}  ) $ (Definition~\ref{def:reg_bc}). \\
{\em Codebook Generation} Randomly and independently generate $2^{nR_2}$ cloud centers $u^n( m_2) \sim \prod_{k=1}^n p_U(u_k), m_2 \in [2^{nR_2}]$. For every $m_2$, randomly and conditionally independently generate $2^{nR_1}$  satellite codewords $x^n(m_1, m_2) \sim \prod_{k=1}^n p_{X|U} (x_k | u_k(m_2)), m_1 \in [2^{nR_1}]$. The codebooks consisting of the $u^n$ and $x^n$ codewords are revealed to the encoder and the two decoders. \\
{\em Encoding}: Given $(m_1, m_2) \in [2^{nR_1}]\times [2^{nR_2}]$, the encoder transmits $x^n(m_1, m_2)$. \\
{\em Decoding}: Decoder 2 only has to decode the common message $m_2$. When it receives $y_2^n \in\calY_2^n$, it finds the unique $\check{m}_2 \in [2^{nR_2}]$ such that 
\begin{equation}
\hatI ( u^n(\check{m}_2 ) \wedge y_2^n) ) \ge R_2 +\delta_n, \label{eqn:dec2}
\end{equation}
where  the sequence $\delta_n :=(|\calU||\calX|  \max \{ |\calY_1| , |\calY_2| \} +\frac{1}{2})\frac{\log (n+1)}{n}$.  If there is no such message or there is not a unique one, declare a decoding error. Decoder 1 has to decode both the common message $m_2$ and its own message $m_1$. When it receives $y_1^n\in\calY_1^n$, it finds the unique pair $( \hatm_1,\hatm_2) \in [2^{nR_1}]  \times [2^{nR_2}] $ such that 
\begin{equation}
\hatbJ ( u^n (\hatm_2) , x^n (\hatm_1, \hatm_2) , y_1^n) := \begin{bmatrix}  \hatI ( x^n (\hatm_1, \hatm_2) \wedge  y_1^n| u^n (\hatm_2)  ) \\ \hatI ( x^n (\hatm_1, \hatm_2) \wedge  y_1^n) \end{bmatrix} \ge \begin{bmatrix}  R_1 \\R_1+R_2 \end{bmatrix}+\delta_n\bone. \label{eqn:dec1}
\end{equation}
If there is no such message pair  or there is not a unique one, again declare a decoding error. For convenience in stating the error events, we use the notation $\scT(R_1, R_2 ,\delta_n) := \{ \bz \in \bbR^2: z_1 \ge R_1+\delta_n, z_2\ge R_1+R_2 +\delta_n\}$.  We remind the reader that the notation $ \hatI ( x^n (\hatm_1, \hatm_2) \wedge  y_1^n| u^n (\hatm_2)  )$ denotes the conditional mutual information $I( \tilX;\tilY|\tilU)$ where $(\tilU, \tilX,\tilY)$ is a dummy random variable with distribution, an $n$-type, $P_{u^n (\hatm_2) ,  x^n (\hatm_1, \hatm_2) ,  y_1^n}$. \\
{\em Analysis of Error Probability}: By symmetry and the random codebook generation, we can assume that $(M_1,M_2)=(1,1)$. The error event at decoder 2, namely $\calE_2:= \{\check{M}_2 \ne M_2\}$,  can be decomposed into the following 2 events:
\begin{align}
\calE_{2,1} &:= \{ \hatI ( U^n(1 ) \wedge Y_2^n)    \le R_2 +\delta_n\} \label{eqn:E21}\\
\calE_{2,2} &:= \{\exists \, \tilde{m}_2\ne 1 :\hatI ( U^n(\tilde{m}_2 ) \wedge Y_2^n  )  \ge  R_2 +\delta_n\} 
\end{align}
Decoder 1's error event, namely $\calE_1:= \{ \hatM_1 \ne M_1\}\cup \{ \hatM_2 \ne M_2\}$, can be decomposed into the following $3$ events:
\begin{align}
\calE_{1,1} &:= \{ \hatbJ ( U^n (1) , X^n (1, 1) , Y_1^n)  \notin\scT(R_1, R_2,\delta_n)\} \label{eqn:E12} \\
\calE_{1,2} &:= \{\exists \, \tilde{m}_1\ne 1 :  \hatbJ ( U^n (1)  , X^n (\tilde{m}_1  , 1   )  , Y_1^n)   \in\scT(R_1, R_2,\delta_n) \} \\
\calE_{1,3} &:= \{\exists \,\tilde{m}_1\ne 1 ,  \tilde{m}_2\ne 1 :\hatbJ ( U^n (\tilde{m}_2) , X^n (\tilde{m}_1, \tilde{m}_2  ) , Y_1^n)    \in\scT(R_1, R_2,\delta_n) \} 
\end{align}
The vector $\hatbJ  (u^n, x^n , y_1^n )$ is defined in~\eqref{eqn:dec1}.  Clearly the average error probability  for the ABC defined in~\eqref{eqn:bc1}  can be bounded above as 
\begin{equation}
\Pen\le \rvP(\calE_{2,1} \cup \calE_{1,1} ) + \rvP(\calE_{2,2})+ \rvP( \calE_{1,2}) + \rvP(\calE_{1,3} ). \label{eqn:sum-probs}
\end{equation}
Note that in contrast to the DM-MAC, we bound the probability of the union $\calE_{2,1} \cup \calE_{1,1}$ instead of bounding the probabilities of the  constituent events separately. This is an important distinction. By doing so, we can use the  vector rate redundancy theorem on an empirical mutual  information vector of length-$3$. See \eqref{eqn:info_bc} below.    We   bound the first term in \eqref{eqn:sum-probs}, which can be written as 
\begin{equation}
\rvP(\calE_{2,1} \cup \calE_{1,1} ) = 1-\rvP(  \hatbI ( U^n (1) , X^n (1,1) , Y_1^n, Y_2^n)   \ge \bR +\delta_n\bone) ,\label{eqn:union_pr}
\end{equation}
where the length-$3$ empirical mutual information vector  is defined as
\begin{equation} 
\hatbI ( U^n   , X^n   , Y_1^n, Y_2^n)  := \begin{bmatrix} \hatI ( X^n\wedge  Y_1^n| U^n  ) \\ \hatI ( U^n  \wedge Y_2^n)    \\ \hatI ( X^n\wedge  Y_1^n   ) \end{bmatrix}  . \label{eqn:info_bc}
\end{equation}
Using the fact that $(R_1, R_2) \in \scR(n,\epsilon; p_{U,X}  )$,  we can rewrite \eqref{eqn:union_pr} as
\begin{equation}
\rvP( (\calE_{2,1} \cup \calE_{1,1} )^c) = \rvP \left(  \hatbI ( U^n (1) , X^n (1,1) , Y_1^n, Y_2^n)   \ge \bI( p_{U,X} , W)+\frac{\bz}{\sqrt{n}}-a_n\bone +\delta_n\bone \right), 
\end{equation}
where from the definition of $\scS(\bV, \epsilon)$ in \eqref{eqn:SVset}, $\bz \in \bbR^3$ is a vector satisfying $\rvP(\bZ\ge \bz)\ge 1-\epsilon$ and $\bZ \sim \calN(\bzero, \bV)$. The sequence $a_n  = \frac{\nu\log n}{n}$ with $\nu$ given in Definition~\ref{def:reg_bc}. Now we again invoke the vector rate redundancy theorem (Theorem~\ref{thm:vrrt}) with the following identifications: random variable $X \leftarrow (U, X, Y_1, Y_2)$,  smooth function $\bg ( p_{U, X} W)\leftarrow \bI ( p_{U,X} , W)$, evaluation vector   $\bz\leftarrow \bz$ and  sequence $b_n\leftarrow a_n-\delta_n$. 
 Hence, going through the same argument as for the MAC (see \eqref{eqn:apply_vrrt0}--\eqref{eqn:apply_vrrt}),
\begin{equation}
\rvP( (\calE_{2,1} \cup \calE_{1,1} )^c) \ge 1-\epsilon+O\left( \frac{\log n}{\sqrt{n}}\right). \label{eqn:apply_vrrt_bc}
\end{equation}
The rest of the error events can be bounded using the atypicality of empirical mutual information lemma (Lemma~\ref{lem:emil}). Since the calculations are similar, we focus solely on $\calE_{1,2}$. For this event, we have 
\begin{align}
\rvP(\calE_{1,2} )  &\le \sum_{\tilde{m}_1 \ne 1} \rvP( \hatbJ ( U^n (1 )  , X^n (  \tilde{m}_1  , 1) , Y_1^n)   \in\scT(R_1, R_2,\delta_n) ) \\
&\le (\ceil{2^{nR_2}}-1) \rvP( \hatbJ ( U^n (1 )  , X^n (2, 1 ), Y_1^n)   \in\scT(R_1, R_2,\delta_n) ) \\
&\le 2^{nR_2} \rvP( \hatI (   X^n (2, 1 ) \wedge Y_1^n | U^n(1) )   \ge R_1+\delta_n) \label{eqn:bc_apply_lem1} \\
&\le (n+1)^{|\calU| |\calX| |\calY_1| }  2^{nR_2}  2^{-n(R_2+\delta_n)} .\label{eqn:bc_apply_lem2}
\end{align}
 The reasoning for each of these steps is similar to that for the DM-MAC. See steps \eqref{eqn:union_bd_mac} to \eqref{eqn:apply_aemi}. The crucial realization to get from \eqref{eqn:bc_apply_lem1}  to \eqref{eqn:bc_apply_lem2} via the use of the atypicality of empirical mutual information lemma is that for $\tilde{m}_1\ne 1$, the satellite codeword $X^n (  \tilde{m}_1  , 1)$ is conditionally independent of $Y_1^n$ given  the cloud center $U^n(1)$.  By the choice of $\delta_n$ introduced at the decoding step, we have 
\begin{equation}
 \rvP(\calE_{1,2} ) \le \frac{1}{\sqrt{  n+1}}.
\end{equation} 
Similarly, $ \rvP(\calE_{2,2} ) \le \frac{1}{\sqrt{n+1}}$ and $ \rvP(\calE_{2,3} ) \le \frac{1}{\sqrt{n+1}}$. This, combined with~\eqref{eqn:sum-probs} and~\eqref{eqn:apply_vrrt_bc}, shows that the average error probability for the DM-ABC, defined in \eqref{eqn:bc1}, is no greater than $\epsilon$. Hence, there exists a deterministic code whose average error probability is no greater than $\epsilon$ as desired. \end{proof}
\subsubsection{Cardinality Bounds}
\textcolor{white}{..} 
\begin{proof}
The bound on $|\calU|$ can be argued in the same way as we did for the DM-MAC in Section~\ref{sec:card}. We need $|\calX|-1$ elements to preserve $p_{X }(x ), x \in\{0,\ldots, |\calX|-2\}$ and $7$ additional elements to preserve the {\em two} mutual information quantities  $I(U;Y_2)$ and $I(X;Y_1|U)$, {\em two} variances along the diagonals of  $\bV(p_{U,X}, W)$, i.e.,  $\var(\log [ W_1(Y_1|X)/p_{Y_1|U}(Y_1|U)])$ and $\var(\log [ p_{Y_2|U}(Y_2|U)/p_{Y_2}(Y_2)])$  and  three covariances in the off-diagonal positions in $\bV(p_{U,X}, W)$.  Note that $I(X;Y_1)$ and $\var( \log [ W_1(Y_1|X)/ p_X(X)])$ are automatically preserved given that we have preserved $p_X(x)$ and they do not depend on $U$. Hence,    $|\calU|\le |\calX|+ 6$. 
\end{proof}

\appendices
\numberwithin{equation}{section}
\section{Proofs of the  Dispersions for  Slepian-Wolf} \label{app:slice}

Theorems~\ref{thm:local_disp} and~\ref{thm:sum_rate_disp} are both consequences of the following general-purpose Lemma. Their proofs will follow the proof of the Lemma. For any $\bv\in\bbR^k$ and  subset  $\calT\subset [k]$, $\bv_{\calT}$ denotes the subvector with elements indexed by $\calT$.
\begin{lemma}\label{lemma:gen_disp}
Fix an integer $k$, a non-negative column  vector $\bc\in\bbR^k$ (i.e.,  $\bc\ge \bzero$), a matrix $\bA\in\bbR^{2\times k}$, and constants $\bar{R}_1$ and $\bar{R}_2$. Denote the rows of $\bA$ by $\ba_1$ and $\ba_2$. Let 
\begin{equation}
\begin{array}{ll}
f^*:=\text{minimize}_{\bs} & \bc^T\bs\\
\qquad\,\,\,\text{subject to} & (\bar{R}_1+\ba_1\bs,\bar{R}_2+\ba_2\bs)\in\scR_{\mathrm{SW}}^*  
\end{array}
\end{equation}
where $\scR_{\mathrm{SW}}^*$  is the  (asymptotic) SW rate region  given in~\eqref{eqn:sw_reg}. Let $\scD \subset\bbR^k$ be the set of asymptotically achievable vectors $\bs$ with $\bc^T\bs=f^*$. For every $\bs\in\scD $, define
\begin{equation}\label{eqn:Nset_def}
\calN_{\bs}:=\big\{j\in\{1,2,3\}:\bar{R}_j+\ba_j\bs=H_j\big\}
\end{equation}
where $\bar{R}_3:=\bar{R}_1+\bar{R}_2$, $\ba_3:=\ba_1+\ba_2$, and $\bH= [H_1, H_2, H_3]^T$ as defined in \eqref{eqn:entropy_vec}. Let  $\bZ:=(Z_1,Z_2, Z_3)\sim\calN(\bzero,\bV)$. Let $\bu_{\bs}\in\bbR^k$ be a solution to
\begin{equation}\label{eqn:Udef}
\begin{array}{ll}
\text{minimize}_{\bu  } & \bc^T\bu\\
\text{subject to} & \rvP(\bZ_{\calN_{\bs}}\le \bA_{\calN_{\bs}} \bu)=1-\epsilon.
\end{array}
\end{equation}
Let $\bs^*_n\in\bbR^k$ be a vector minimizing $\bc^T \bs^*_n$ subject to $(R_{1,n},R_{2,n})$ being $(n,\epsilon)$-achievable for some $(R_{1,n},R_{2,n})$ satisfying
\begin{align}
R_{1,n}&\le \bar{R}_1+\ba_1\bs^*_n\\
R_{2,n}&\le \bar{R}_2+\ba_2\bs^*_n.
\end{align}
For any $\bs\in\scD $,
\begin{equation}\label{eqn:Sstar}
\bc^T\bs^*_n=f^*+\frac{\bc^T \bu_{\bs}}{\sqrt{n}}+O\left(\frac{\log n}{n}\right).
\end{equation}
\end{lemma}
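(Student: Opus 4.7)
The plan is to prove Lemma~\ref{lemma:gen_disp} by combining the global dispersion result (Theorem~\ref{thm:orr}) with a limiting analysis of the linear program defining $\bc^T \bs^*_n$, treating the achievability (upper bound) and converse (lower bound) parts separately. Throughout, I use the row-augmented matrix $\bA_3 := [\ba_1^T \ \ba_2^T \ (\ba_1+\ba_2)^T]^T \in \bbR^{3\times k}$ and $\bar{\bR} := [\bar{R}_1, \bar{R}_2, \bar{R}_1+\bar{R}_2]^T$, so that the feasibility condition $(\bar R_1 + \ba_1\bs, \bar R_2+\ba_2\bs) \in \scR_{\mathrm{SW}}^*(n,\epsilon)$ reads, after applying Theorem~\ref{thm:orr}, as $\bar{\bR} + \bA_3 \bs - \bH \in \frac{1}{\sqrt n}\scS(\bV,\epsilon) \pm O(\log n/n)\bone$.

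For the achievability (upper bound) direction, fix any $\bs \in \scD$ and let $\bu_{\bs}$ solve~\eqref{eqn:Udef}. The plan is to propose $\bs_n := \bs + \bu_{\bs}/\sqrt{n} + \gamma_n \bone$ with $\gamma_n = \Theta(\log n/n)$ to be fixed, and to check that the corresponding rate pair lies in the inner bound $\scR_{\mathrm{in}}(n,\epsilon)$. Setting $\bz_n := \sqrt{n}\bigl(\bar{\bR} + \bA_3 \bs_n - \bH\bigr) - (\nu\log n/\sqrt{n})\bone$, I need to show $\bz_n \in \scS(\bV,\epsilon)$. For $j \in \calN_{\bs}$, the definition~\eqref{eqn:Nset_def} gives $(\bar{\bR}+\bA_3\bs-\bH)_j = 0$, so $z_{n,j} = \ba_j \bu_{\bs} + (\sqrt{n}\sum_i a_{ji})\gamma_n - \nu\log n/\sqrt{n}$, and choosing $\gamma_n$ of the correct sign and magnitude makes $z_{n,j} \ge \ba_j\bu_{\bs} + c\log n/\sqrt{n}$ for a positive constant $c$; for $j \notin \calN_{\bs}$, $(\bar{\bR}+\bA_3\bs-\bH)_j > 0$ and hence $z_{n,j} \to +\infty$. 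By the continuity of the multivariate Gaussian CDF and the defining condition $\rvP(\bZ_{\calN_{\bs}} \le \bA_{\calN_{\bs}}\bu_{\bs}) = 1-\epsilon$, these two facts together give $\rvP(\bZ \le \bz_n) \ge 1-\epsilon$ for all sufficiently large $n$, yielding $\bc^T\bs^*_n \le \bc^T\bs_n = f^* + \bc^T\bu_{\bs}/\sqrt{n} + O(\log n/n)$.

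For the converse (lower bound), let $\bs^*_n$ be any minimizing sequence and pass to a convergent subsequence (which exists after restricting $\bs$ to a compact set, since the objective and constraints grow otherwise). Call the limit $\bs_\infty$; the upper bound already established, together with the outer bound in Theorem~\ref{thm:orr}, forces $\bc^T\bs_\infty = f^*$, so $\bs_\infty \in \scD$. Write $\bs^*_n = \bs_\infty + \bv_n/\sqrt{n} + O(\log n/n)$ and apply the outer bound to extract a $\bz^*_n \in \scS(\bV,\epsilon)$ with $z^*_{n,j} \le \ba_j\bv_n + O(\log n/\sqrt{n})$ for $j \in \calN_{\bs_\infty}$ and $z^*_{n,j} \to +\infty$ for $j \notin \calN_{\bs_\infty}$. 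Passing again to a convergent subsequence of $\bv_n$ (if $\bv_n$ is unbounded then $\bc^T \bv_n \to \infty$ along some direction, which is stronger than the desired bound), the limit $\bv_\infty$ satisfies $\rvP(\bZ_{\calN_{\bs_\infty}} \le \bA_{\calN_{\bs_\infty}} \bv_\infty) \ge 1-\epsilon$, so $\bc^T \bv_\infty \ge \bc^T \bu_{\bs_\infty}$ by the optimality of $\bu_{\bs_\infty}$. Combining with the achievability bound proved above (applied separately to every $\bs \in \scD$) gives $\bc^T \bu_{\bs} = \bc^T \bu_{\bs_\infty}$ for all $\bs \in \scD$, and hence the two-sided estimate~\eqref{eqn:Sstar}.

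The main obstacle will be the careful bookkeeping required to reduce the three-dimensional Gaussian probability constraint to the $|\calN_{\bs}|$-dimensional one in the limit, and in quantifying the $O(\log n / \sqrt{n})$ slack terms so that they combine with the $O(\log n / n)$ prefactors from Theorem~\ref{thm:orr} to yield exactly the $O(\log n/n)$ error on the right-hand side of~\eqref{eqn:Sstar}. A secondary subtlety is that $\bu_{\bs}$ itself need not be unique (only its inner product with $\bc$ is), and that $\scD$ may contain multiple $\bs$ with different active sets $\calN_{\bs}$; the invariance of $\bc^T \bu_{\bs}$ across $\scD$ is not postulated but emerges as a consistency property of the two-sided dispersion characterization.
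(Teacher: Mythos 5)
Your skeleton — use Theorem~\ref{thm:orr} to sandwich, then quantify the $O(1/\sqrt{n})$ correction to an asymptotic optimizer $\bs\in\scD$ — is the same as the paper's, but the paper's execution avoids two gaps in your proposal. The paper first isolates the $O(\log n/n)$ slack from Theorem~\ref{thm:orr} by introducing an \emph{intermediate} minimizer $\bs_n$ over the uncorrected (``middle'') region $\bR-\bH\in\frac{1}{\sqrt n}\scS(\bV,\epsilon)$, noting $|\bc^T\bs^*_n-\bc^T\bs_n|=O(\log n/n)$, and then bounds $\bc^T\bs_n$ with \emph{exact} or exponentially small error; all remaining errors are deterministic and absorb into the desired $O(\log n/n)$.

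The converse is where your proposal has a genuine gap. Your subsequence argument only produces the lower bound for the specific subsequential limit $\bs_\infty$ — combining that with the achievability applied to other $\bs\in\scD$ gives only $\bc^T\bu_{\bs}\ge\bc^T\bu_{\bs_\infty}$, not the equality you assert ``emerges,'' so the claimed $\bs$-invariance of $\bc^T\bu_{\bs}$ is unsupported. Worse, a soft-analysis limit argument gives error $o(1/\sqrt n)$, not the required $O(\log n/n)$, and the unbounded-$\bv_n$ case is not safe when $\bc$ has zero entries. The paper instead observes that for \emph{every} fixed $\bs\in\scD$, feasibility $\rvP(\cap_j\calA_{j,n})\ge 1-\epsilon$ implies $\rvP(\cap_{j\in\calN_{\bs}}\calA_{j,n})\ge 1-\epsilon$ by dropping constraints, which after the change of variables $\bu=\sqrt n(\bs_n-\bs)$ gives the lower bound $\bc^T\bs_n\ge f^*+\bc^T\bu_{\bs}/\sqrt n$ with no error and no subsequences; this directly establishes the statement for arbitrary $\bs$ and proves the constancy of $\bc^T\bu_{\bs}$ over $\scD$ rather than postulating it. On the achievability side, your proposed perturbation $\gamma_n\bone$ in $\bs$-space shifts $z_{n,j}$ by $\sqrt n\,\gamma_n(\ba_j\bone)$, whose sign depends on $j$ and on $\bA$, so a single scalar $\gamma_n$ need not lift every active coordinate above $\ba_j\bu_{\bs}$; the paper instead perturbs the \emph{target probability} from $1-\epsilon$ to $1-\epsilon+\tau_n$ with exponentially small $\tau_n$, which simultaneously absorbs the exponentially small failure probability of the inactive constraints $j\notin\calN_{\bs}$ (via the Chernoff bound) and needs no sign assumption on $\ba_j\bone$.
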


\begin{IEEEproof}
Choose an arbitrary $\bs\in\scD$. We will show \eqref{eqn:Sstar} for this $\bs$. Let $\bs_n$ minimize $\bc^T \bs_n$ subject to 
\begin{equation}\label{eqn:Sn_def}
\begin{bmatrix} \bar{R}_1+\ba_1 \bs_n-H(X_1|X_2)\\ \bar{R}_2+\ba_2 \bs_n-H(X_2|X_1)\\ \bar{R}_3+\ba_3\bs_n-H(X_1,X_2)\end{bmatrix}\in \frac{1}{\sqrt{n}}\scS(\bV,\epsilon).
\end{equation}
By Theorem~\ref{thm:orr},
\begin{equation}\label{eqn:lognn_diff}
\left|\bc^T\bs^*_n-\bc^T\bs_n \right|\in O\left(\frac{\log n}{n}\right).
\end{equation}
First we find a lower bound on $\bc^T\bs_n$. From the definition of $\scS(\bV,\epsilon)$, \eqref{eqn:Sn_def} is equivalent to
\begin{equation}
\rvP\left(\begin{bmatrix} Z_1\\Z_2\\Z_3\end{bmatrix}   \le \sqrt{n}\begin{bmatrix} \bar{R}_1+\ba_1 \bs_n-H(X_1|X_2)\\ \bar{R}_2+\ba_2 \bs_n-H(X_2|X_1)\\ \bar{R}_3+\ba_3\bs_n-H(X_1,X_2)\end{bmatrix} \right)\ge 1-\epsilon \label{eqn:condF}
\end{equation}
The condition \eqref{eqn:condF} can be rewritten
\begin{equation}
\rvP(\calA_{1,n} \cap\calA_{2,n}\cap\calA_{3,n})\ge 1-\epsilon,\label{eqn:intersect3}
\end{equation}
where we have defined events
\begin{equation}
\calA_{j,n} :=\big\{Z_j \le \sqrt{n}(\bar{R}_j+\ba_j\bs_n-H_j)\big\}
\end{equation}
for $j=1,2,3$. Contiuing from \eqref{eqn:intersect3}, 
\begin{align}
1-\epsilon
&\le \rvP(\calA_{1,n}\cap\calA_{2,n}\cap\calA_{3,n})\\
&\le\rvP\left(\bigcap_{j\in\calN_\bs}\calA_{j,n}\right)\label{eqn:A_intersect}\\
&=\rvP\big(\bZ_{\calN_\bs}\le \sqrt{n}(\bar{\bR}_{\calN_\bs}+\bA_{\calN_\bs} \bs_{n}-\bH_{\calN_\bs})\big)\\
&=\rvP\big(\bZ_{\calN_\bs}\le \sqrt{n}\bA_{\calN_\bs}(\bs_n-\bs)\big)\label{eqn:prob_rewrite}
\end{align}
where \eqref{eqn:prob_rewrite} holds by the definition of $\calN_\bs$. Let $\underline{\bs}_n$ be a solution to
\begin{equation}\label{eqn:barS_def}
\begin{array}{ll}
\text{minimize}_{\bs'}& \bc^T\bs' \\
\text{subject to}& \rvP\big(\bZ_{\calN_\bs}\le \sqrt{n}\bA_{\calN_\bs}(\bs'-\bs)\big)\ge 1-\epsilon
\end{array}
\end{equation}
Note that we may assume equality in the constraint in \eqref{eqn:barS_def} because the probability is nondecreasing in each element of $\bs'$, and $\bc\ge\bzero$. Hence
\begin{align}
\bc^T\bs_n&\ge\bc^T\underline{\bs}_n\label{eqn:lower_bd_1}\\
&=\bc^T\left(\bs+\frac{\bu_\bs}{\sqrt{n}}\right)\label{eqn:lower_bd_2}\\
&=f^*+\frac{\bc^T\bu_{\bs}}{\sqrt{n}}\label{eqn:lower_bd_3}
\end{align}
where \eqref{eqn:lower_bd_1} follows from \eqref{eqn:prob_rewrite} and the definition of $\bs_n$, \eqref{eqn:lower_bd_2} follows from the definition of $\bu_\bs$ in \eqref{eqn:Udef}, and \eqref{eqn:lower_bd_3} holds because $\bc^T\bs=f^*$ for all $\bs\in\scD$.

Now we upper bound $\bc^T\bs_n$. For $j\in\{1,2,3\}$, let
\begin{equation}
\delta_j:=\ba_j\bs+\bar{R}_j-H_j.
\end{equation}
Since $(\bar{R}_1+\ba_1\bs,\bar{R}_2+\ba_2\bs)\in\scR_{\mathrm{SW}}^*$, $\delta_j\ge 0$ for all $j$. Moreover, $\delta_j>0$ for $j\in\calN_\bs^c$. With hindsight, we define the following exponentially decaying sequences:
\begin{align}
\tau_{j,n}&:=\frac{1}{2}\exp\left( - \frac{n}{2 [\bV]_{j,j}} (\delta_j/2)^2\right),\quad\text{for }j\in\calN_\bs^c\\
\tau_n&:=\sum_{j\in\calN_\bs^c}\tau_{j,n}.
\end{align}
Now let
\begin{equation}\label{eqn:sn_bar_def}
\bar{\bs}_n:=\bs+\frac{\bar{\bu}}{\sqrt{n}}
\end{equation}
where $\bar{\bu}$ is a solution to
\begin{equation}\label{eqn:U_bar_def}
\begin{array}{ll}
\text{minimize}_{\bu} & \bc^T\bu\\
\text{subject to} & \rvP(\bZ_{\calN_\bs}\le \bA_{\calN_\bs} \bu)=1-\epsilon+\tau_n.
\end{array}
\end{equation}
 Note that by continuity and differentiability of the Gaussian cumulative density function, $\bc^T\bar{\bu}$ and $\bc^T\bu_{\bs}$ differ by an exponentially decaying sequence that we denote $\tau'_n$. We claim that $\bar{\bs}_n$ satisfies condition \eqref{eqn:Sn_def}. For all $j\in\calN_\bs^c$ and sufficiently large $n$,
\begin{equation}\label{eqn:bar_bs_j}
\ba_j\bar{\bs}_n+\bar{R}_j-H_j\ge\delta_j/2.
\end{equation}
Define the events
\begin{equation}
\bar{\calA}_{j,n} :=\big\{Z_j \le \sqrt{n}(\bar{R}_j+\ba_j\bar{\bs}_n-H_j)\big\}
\end{equation}
for $j=1,2,3$. We claim that $\rvP(\bar{\calA}_{j,n}^c)$ is exponentially decaying for $j\in\calN_\bs^c$. Indeed,
\begin{align}
\rvP(\bar{\calA}_{j,n}^c)
&=\rmQ\left(\sqrt{\frac{n}{[\bV]_{j,j}}}(\bar{R}_j+\ba_j\bar{\bs}_{n}-H_j)\right)\\
&\le \frac{1}{2}\exp\left( - \frac{n}{2 [\bV]_{j,j}} (\bar{R}_j+\ba_j\bar{\bs}_n-H_j)^2\right)\label{eqn:chernoff}\\
&\le \frac{1}{2}\exp\left( - \frac{n}{2 [\bV]_{j,j}} (\delta_j/2)^2\right)\label{eqn:chernoff2}\\
&=\tau_{j,n}\label{eqn:chernoff3}
\end{align}
where the inequality in \eqref{eqn:chernoff} is due to the Chernoff bound for the $\rmQ$-function, i.e., $\rmQ(t)\le \frac{1}{2}\exp(- \frac{t^2}{2})$ for all $t\ge 0$, and \eqref{eqn:chernoff2} holds by \eqref{eqn:bar_bs_j} for sufficiently large $n$. Now we have
\begin{align}
\rvP(\bar{\calA}_{1,n}\cap\bar{\calA}_{2,n}\cap\bar{\calA}_{3,n})
&\ge \rvP\left(\bigcap_{j\in\calN_\bs}\bar{\calA}_{j,n}\right)-\sum_{j\in\calN_\bs^c}\rvP(\bar{\calA}_{j,n}^c)\\
&\ge \rvP\left(\bigcap_{j\in\calN_\bs}\bar{\calA}_{j,n}\right)-\tau_n\label{eqn:Abar_intersect2}\\
&=\rvP\big(\bZ_{\calN_\bs}\le \sqrt{n}\bA_{\calN_\bs}(\bar{\bs}_n-\bs)\big)-\tau_n\label{eqn:Abar_intersect3}\\
&=\rvP(\bZ_{\calN_\bs}\le\bA_{\calN_\bs}\bar{\bu})-\tau_n\label{eqn:Abar_intersect4}\\
&=1-\epsilon\label{eqn:Abar_intersect5}
\end{align}
where \eqref{eqn:Abar_intersect2} follows from \eqref{eqn:chernoff3} and the definition of $\tau_n$, \eqref{eqn:Abar_intersect3} follows by the same reasoning as \eqref{eqn:A_intersect}--\eqref{eqn:prob_rewrite}, \eqref{eqn:Abar_intersect4} follows from \eqref{eqn:sn_bar_def}, and \eqref{eqn:Abar_intersect5} follows from \eqref{eqn:U_bar_def}. Therefore $\bar{\bs}_n$ satisfies \eqref{eqn:intersect3} and equivalently \eqref{eqn:Sn_def}, so for sufficiently large $n$
\begin{equation}\label{eqn:bar_bound}
\bc^T\bs_n\le \bc^T\bar{\bs}_n=f^*+\frac{\bc^T \bar{\bu}}{\sqrt{n}}
\le f^*+\frac{\bc^T\bu_{\bs}}{\sqrt{n}}+\frac{\tau'_n}{\sqrt{n}}
\end{equation}
and recall $\tau'_n$ is exponentially decaying. Combining \eqref{eqn:lower_bd_3} and \eqref{eqn:bar_bound} with \eqref{eqn:lognn_diff} yields \eqref{eqn:Sstar}.
\end{IEEEproof}

\begin{IEEEproof}[Proof of Theorem~\ref{thm:local_disp}]
Fix $(R_1^*,R_2^*)$ and $\theta$. We particularize Lemma~\ref{lemma:gen_disp} by setting $k=1$, scalar $c=1$, matrix $\bA=[\cos\theta,\sin\theta]^T$, $\bar{R}_1=R^*_1$, and $\bar{R}_2=R^*_2$. It is clear that $f^*=0$ and $\scD =\{0\}$ (i.e., the only solution is $s=0$). The set $\calN_{s}$ defined in \eqref{eqn:Nset_def} will depend on which of the five cases $(R_1^*,R_2^*)$ falls into. Consider the first case: i.e. $R^*_1=H(X_1|X_2)$ and $R^*_2>H(X_2)$. Then $\calN_{s}=\{1\}$. By \eqref{eqn:Udef}, the scalar $u_s$ satisfies
\begin{equation}
\rvP(Z_1\le (\cos\theta) u_s)=1-\epsilon.
\end{equation}
This may also be written as
\begin{equation}
1-\rmQ\left(\frac{(\cos\theta) u_s}{\sqrt{[\bV]_{1,1}}}\right)=1-\epsilon.
\end{equation}
Hence
\begin{equation}
u_s=\frac{\sqrt{[\bV]_{1,1}}}{\cos\theta}\rmQ^{-1}(\epsilon).
\end{equation}
We now apply Lemma~\ref{lemma:gen_disp} to conclude
\begin{equation}
s^*_n=\sqrt{ \frac{[\bV]_{1,1}}{n}}\cdot\frac{1}{\cos\theta}\cdot\rmQ^{-1}(\epsilon)+O\left(\frac{\log n}{n}\right).
\end{equation}
That the dispersion is given by \eqref{eqn:subplota} is an immediate consequence given the assumption that $-\frac{\pi}{2}<\theta <\frac{\pi}{2}$ so $\cos\theta>0$. The local dispersions in \eqref{eqn:hori}--\eqref{eqn:subplotb} follow similarly.

We now consider a corner point; in particular, take $R^*_1=H(X_1|X_2)$ and $R^*_2=H(X_2)$. Then $\calN_s=\{1,3\}$, meaning $u_s$ satisfies
\begin{equation}
\rvP\big(Z_1\le (\cos\theta)u_s,\ Z_3\le(\cos\theta+\sin\theta)u_s\big)=1-\epsilon.
\end{equation}
This may be written
\begin{equation}\label{eqn:rho13_rewrite}
\Psi\left(\rho_{1,3};-\frac{(\cos\theta) u_s}{\sqrt{[\bV]_{1,1}}},-\frac{(\cos\theta+\sin\theta)u_s}{\sqrt{[\bV]_{3,3}}}\right)=1-\epsilon.
\end{equation}
The dispersion in \eqref{eqn:corner_pt} follows from applying Lemma~\ref{lemma:gen_disp}. The local  dispersion for the other corner point in \eqref{eqn:subplotc} follows similarly.
\end{IEEEproof}

\begin{IEEEproof}[Proof of Theorem~\ref{thm:sum_rate_disp}]
Fix $(\alpha,\beta)$. We particularize Lemma~\ref{lemma:gen_disp} be setting $k=2$, $\bc=[\alpha,\beta]^T$, $\bA=\bI_2$, and $\bar{R}_1=\bar{R}_2=0$. We have that $f^*=R^*_{\text{sum}}(\alpha,\beta)$ as given in \eqref{eqn:opt_sum_rate}. Consider first the case that $\alpha\ge\beta$. In this case, one asymptotic optimum is the corner point $\bs=(H(X_1|X_2),H(X_2))^T$. (This will not be the unique optimum if $\beta=0$ or $\alpha=\beta$, but Lemma~\ref{lemma:gen_disp} still applies.) Hence $\calN_{\bs}=\{1,3\}$, so $\bu_{\bs} =(u_1,u_2)$ is the solution to (rewriting the probability as in \eqref{eqn:rho13_rewrite})
\begin{equation}
\begin{array}{ll}
\text{minimize}_{ u_1, u_2}& \alpha u_1+\beta u_2\\
\text{subject to}& \displaystyle\Psi\left(\rho_{1,3};-\frac{u_1}{\sqrt{[\bV]_{1,1}}},-\frac{ u_1+u_2}{\sqrt{[\bV]_{3,3}}}\right)=1-\epsilon.
\end{array}
\end{equation}
Applying Lemma~\ref{lemma:gen_disp} gives \eqref{eqn:G_sol}--\eqref{eqn:G_constraint1}. For the case that $\beta\ge\alpha$ an identical argument leads to \eqref{eqn:G_constraint2}.

In the special cases that $\beta=0$, $\alpha=0$, or $\alpha=\beta$, there will be non-unique asymptotic optima; i.e. $\scD $ contains more than one element. For these cases, alternate choices for ${\bs}$ yield single-element $\calN_{\bs}$ sets. Application of Lemma~\ref{lemma:gen_disp} with this choice leads to the simpler expressions \eqref{eqn:Gsimp1}--\eqref{eqn:Gsimp3}. Still, Lemma~\ref{lemma:gen_disp} asserts that the resulting dispersions are the same as those given in \eqref{eqn:G_sol}--\eqref{eqn:G_constraint2}.
\end{IEEEproof}

\section{Proof of Corollary~\ref{cor:be}} \label{prf:be}

\begin{proof}
We use Theorem~\ref{thm:bentkus} to prove Corollary~\ref{cor:be}. Let $\bV=\bL\bL^T$ be the Cholesky decomposition of the matrix $\bV$, defined in \eqref{eqn:vdef}. The lower-triangular matrix $\bL \in \bbR^{d\times d}$ is the left Cholesky factor of $\bV$. Define the change of coordinates $\tilbU_k :=\bL \bU_k \in\bbR^d$ for all $k=1,\ldots, n$. Then, $\cov( \tilbU_k ) = \rvE [ (\bL \bU_k)(\bL \bU_k)^T ] = \bL \rvE [ \bU_k\bU_k^T ] \bL^T   =\bV$ because $\rvE [ \bU_k\bU_k^T ]=\bI$ by assumption. Substituting this  into \eqref{eqn:bent} yields 
\begin{equation}
\sup_{\scC \in   \frakC_d} \left| \rvP \left( \frac{1}{\sqrt{n}}\sum_{k=1}^n \tilbU_k \in \bL\scC  \right)-\rvP(\bL\bZ \in\bL \scC) \right|\le\frac{400 d^{1/4}\xi}{\sqrt{n}} . \label{eqn:bentL}
\end{equation}
Clearly, the family of convex, Borel   subsets  in $\bbR^d$, namely $\frakC_d$, remains closed under matrix multiplication, i.e., $\frakC_d = \bL\frakC_d$. Thus, \eqref{eqn:bentL} can be rewritten as
\begin{equation}
\sup_{\tilde{\scC} \in  \frakC_d} \left| \rvP\left( \frac{1}{\sqrt{n}}\sum_{k=1}^n \tilbU_k \in \tilde{\scC} \right)-\rvP(\tilbZ \in \tilde{\scC}) \right|\le\frac{400 d^{1/4}\xi}{\sqrt{n}} \label{eqn:bentL2} , 
\end{equation}
where $\tilde{\scC} = \bL\scC$ and $\tilbZ \sim\calN(\bzero,  \bV)$. Now, recall that $\xi = \rvE [ \| \bU_1 \|_2^3]$. We upper bound this quantity as follows: Replacing $\bU_1$ by $\bL^{-1} \tilbU_1$ yields
\begin{align}
 \xi &= \rvE \left[  \|\bL^{-1} \tilbU_1 \|_2^3\right] \\
 &= \rvE \left[ (\tilbU_1^T  \bL^{-T}\bL^{-1} \tilbU_1 )^{3/2}\right] \\
  &= \rvE \left[ (\tilbU_1^T  \bV^{-1} \tilbU_1 )^{3/2}\right] \\
    &\le  \lambda_{\max}(\bV^{-1})^{3/2} \rvE  \left[  (\tilbU_1^T  \tilbU_1 )^{3 /2}\right] \label{eqn:cf} \\
        &= \frac{1}{\lambda_{\min}(\bV)^{3/2}}  \rvE \left[  \|\tilbU_1\|^{3}_2 \right] ,  \label{eqn:final_be}
\end{align}
where \eqref{eqn:cf} is because $\by^T\bA\by\le \lambda_{\max}(\bA) \|\by\|_2^2$ for all  $\by$ and all $\bA\succ 0$.  The proof  is completed upon  the substitution of the upper bound in~\eqref{eqn:final_be} into~\eqref{eqn:bentL2} and the identification of the third moment  of $\tilbU_1$ namely, $\tilde{\xi} :=\rvE [  \|\tilbU_1\|^{3}_2 ]$.
\end{proof}
\section{Proof of Lemma~\ref{lem:simple_pr} }  \label{app:simple_pr}

\begin{proof}
Define the events $\calF := \{ \bG\ge \bv+  \phi   \bone  \}$ and $\calG := \{\bDelta > -  \phi  \bone\}$. Then,   $\calF\cap \calG\subset\{\bG+\bDelta\ge\bv\}$. As such 
\begin{align}
\rvP(\bG+\bDelta\ge\bv)&\ge \rvP(\calF\cap\calG)   \\
&=\rvP(\calF\setminus (\calF\cap\calG^c)) \\
&= \rvP(\calF)-\rvP( \calF\cap \calG^c)  \label{eqn:simple_pr4} \\
&\ge  \rvP(\calF)-\rvP(\calG^c)  .  \label{eqn:simple_prob2}
\end{align}
In addition, we have
\begin{equation}
\rvP(\calG^c)  = \rvP(\bDelta \le  -  \phi   \bone) \le \rvP(\|\bDelta\|_{\infty} \ge \phi ) .\label{eqn:simple_prob3}
\end{equation}
The combination of \eqref{eqn:simple_prob2} and \eqref{eqn:simple_prob3} yields \eqref{eqn:simple_prob1} as desired. \end{proof}

\section{Finiteness of Third Moments} \label{app:finite_third} 
In this appendix, we prove that the third moments are finite. For notation, see Sections~\ref{sec:defs_sw},  \ref{sec:defmac} and~\ref{sec:defbc}.
\begin{lemma} \label{lem:finite_third} 
For the SW, MAC and ABC problems, let the {\em third moments} be  defined as 
\begin{align}
\xi_{\mathrm{SW}} &:= \rvE \left[ \| \bh(X_1, X_2) -\bH(p_{X_1, X_2})\|_2^3 \right] \\
\xi_{\mathrm{MAC}}&:= \rvE\left[ \|\bi(Q, X_1, X_2, Y) - \bI (  p_Q, p_{X_1|Q}, p_{X_2|Q}, W) \|_2^3\right]\\
\xi_{\mathrm{ABC}}&:= \rvE\left[\|\bi( U,X,  Y_1, Y_2  ) - \bI (   p_{U,X}, W ) \|_2^3 \right] .
\end{align}
Then, all three quantities are uniformly bounded in terms of the cardinalities of the alphabets.
\end{lemma}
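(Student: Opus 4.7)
The strategy is to reduce the third moment of the centered vector to third moments of its individual scalar components, and then to uniformly bound quantities of the form $\bbE[|\log r(Z)|^{3}]$ where $r$ is a probability mass function on a finite set. First, for any random vector $\bW$ in $\bbR^{3}$, the equivalence of norms $\|\bw\|_{2}\le \sqrt{3}\,\|\bw\|_{\infty}$ together with $(a+b)^{3}\le 4(a^{3}+b^{3})$ yields the crude but adequate bound
\begin{equation}
\bbE\big[\|\bW-\bbE[\bW]\|_{2}^{3}\big]\;\le\; 3^{3/2}\cdot 4\sum_{t=1}^{3}\bbE\big[|W_{t}|^{3}\big],
\end{equation}
where Jensen's inequality has been used to absorb $|\bbE[W_{t}]|^{3}\le\bbE[|W_{t}|^{3}]$ into the sum. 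Thus it is enough to bound each scalar third moment $\bbE[|W_{t}|^{3}]$ for each component of the three density vectors $\bh$, $\bi_{\mathrm{MAC}}$, $\bi_{\mathrm{ABC}}$.

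The core ingredient is the following elementary fact: the single-variable function $t\mapsto t|\log t|^{3}$ on $[0,1]$ is bounded by a universal constant $K_{3}$ (maximized at $t=e^{-3}$, giving $K_{3}=(3/(e\ln 2))^{3}$ in bits). Consequently, for \emph{any} probability mass function $r$ on a finite set $\calA$,
\begin{equation}
\sum_{a\in\calA}r(a)\,|\log r(a)|^{3}\;\le\; |\calA|\cdot K_{3}. \label{eqn:log3bd}
\end{equation}
Each component of $\bh$, $\bi_{\mathrm{MAC}}$, $\bi_{\mathrm{ABC}}$ is a signed sum of at most two terms of the form $\log r(\cdot)$, where $r$ is either a (conditional) distribution from the source/input, a channel transition $W$, or a marginal/conditional of the output. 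Applying $(a+b)^{3}\le 4(a^{3}+b^{3})$ once more inside each scalar third moment and invoking \eqref{eqn:log3bd} for each of the resulting expectations bounds every $\bbE[|W_{t}|^{3}]$ by a constant multiple of $K_{3}$ times the relevant alphabet cardinality.

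Specifically, for SW, one gets $\bbE[(-\log p_{X_{1}|X_{2}}(X_{1}|X_{2}))^{3}]=\bbE_{X_{2}}\bbE_{X_{1}|X_{2}}[(-\log p_{X_{1}|X_{2}})^{3}]\le |\calX_{1}|K_{3}$ by applying \eqref{eqn:log3bd} to the conditional $p_{X_{1}|X_{2}}(\cdot|X_{2})$ and averaging; the other two components are handled identically, yielding $\xi_{\mathrm{SW}}\le C\bigl(|\calX_{1}|+|\calX_{2}|+|\calX_{1}||\calX_{2}|\bigr)K_{3}$ for an absolute constant $C$. For the MAC, each component of $\bi_{\mathrm{MAC}}$ is $\log W - \log p_{Y|\cdot,Q}$, so two applications of \eqref{eqn:log3bd}---once to the channel row $W(\cdot|x_{1},x_{2})$ (averaged against $p_{X_{1},X_{2}|Q}$) and once to the appropriate output-marginal row---give $\xi_{\mathrm{MAC}}\le C'|\calY|K_{3}$. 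The ABC case is analogous, with the roles of $Y$, $Y_{1}$, $Y_{2}$ and the cloud-center/satellite structure substituted in; one obtains $\xi_{\mathrm{ABC}}\le C''\max\{|\calY_{1}|,|\calY_{2}|\}K_{3}$. In all three cases the bound depends only on the alphabet cardinalities, as claimed.

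There is no substantive obstacle: the proof is essentially bookkeeping around the single convex-calculus fact that $t|\log t|^{3}$ is bounded on $[0,1]$. The only mild subtlety worth flagging is that the bounds hold \emph{without} any assumption that the underlying distributions are bounded away from zero---the function $t|\log t|^{3}$ vanishes at $t=0$, so small probabilities contribute vanishingly little and the estimate in \eqref{eqn:log3bd} is uniform over the entire probability simplex on $\calA$. Consequently the three third moments are finite and bounded by constants depending only on $|\calX_{1}|,|\calX_{2}|,|\calY|,|\calY_{1}|,|\calY_{2}|,|\calU|,|\calQ|$ as appropriate.
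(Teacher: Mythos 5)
Your proposal is correct and follows essentially the same route as the paper: both reduce the vector third moment to scalar third moments of the components (you via $\|\cdot\|_2\le\sqrt{3}\|\cdot\|_\infty$ and $(a+b)^3\le 4(a^3+b^3)$, the paper via convexity of $a\mapsto a^{3/2}$ and the $\ell_3$-norm triangle inequality), and both rely on the same pivotal elementary fact that $t\,|\log t|^3$ is uniformly bounded on $(0,1]$ (the paper's ``$x\log^3\tfrac1x\le 4.1$'' is exactly your $K_3$). The only cosmetic difference is that the paper bounds the mean term $I(X_1;Y\mid X_2,Q)$ directly by $\log|\calY|$ rather than by Jensen, and only writes out the MAC case; your constant $3^{3/2}\cdot 4$ should strictly be $3^{3/2}\cdot 8$ after Jensen, but this is immaterial.
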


\begin{proof}
We will only prove the second assertion for $\xi_{\mathrm{MAC}}$. The other two assertions for the SW and ABC  follow {\em mutatis mutandis} and  essentially leverage on the fact that the ranges of the random variables are finite. The proof is based on \cite[Lemma~46]{PPV10}. 

For brevity, let $A_1$, $A_2$ and $A_3$  be the components of the random vector $\bi(Q, X_1, X_2, Y)$ defined in~\eqref{eqn:info_dens}. So for  example,  $A_1 := \log [W(Y|X_1, X_2)/p_{Y|X_2,Q} (Y|X_2,Q)]-I(X_1;Y|X_2, Q)$.   Because $a\mapsto a^{3/2}$ is convex,
\begin{align}
\xi_{\mathrm{MAC}}  &= \rvE\left[ (A_1^2+A_2^2+A_3^2)^{3/2}\right]   \\*
&\le\frac{1}{3}\sum_{t=1}^3\rvE \left[ \left(3A_t^2 \right)^{3/2}\right] \\*
&=\sqrt{3} \,  \sum_{t=1}^3  \rvE\left[ |A_t|^3\right] \label{eqn:cube_convex}
\end{align}
Subsequently, we   simplify notation by dropping the subscripts on the distributions, e.g., $p(y|x_2,q) :=p_{Y|X_2,Q}(y|x_2,q)$ [see \eqref{eqn:output1}]. Also we define the $\ell_q$-norm $\|A\|_q =  \rvE\left[ |A|^q \right]^{1/q}$ for any random variable $A$ and any $q\ge 1$.  We focus on the first term in the sum in~\eqref{eqn:cube_convex},  namely $\rvE [ |A_1|^3 ]=\|A_1\|_3^3$. The $\ell_3$-norm can be bounded as
\begin{align}
\|A_1\|_3 &= \left\| \log \frac{W(Y|X_1, X_2) }{p(Y|X_2, Q)}-I(X_1;Y|X_2,Q)\right\|_3\\
&\le \left\| \log \frac{W(Y|X_1, X_2) }{p(Y|X_2, Q)}\right\|_3+ I(X_1;Y|X_2,Q)\\
&\le \left\|   \log\frac{1}{W(Y|X_1, X_2) } \right\|_3 + \left\|\log\frac{1}{p(Y|X_2, Q)}\right\|_3+ \log |\calY|\\
&\le 2(4.1 |\calY| )^{1/3}+\log |\calY|, \label{eqn:apply_e_bd}
\end{align}
where \eqref{eqn:apply_e_bd} follows from the fact that $x\log^3\frac{1}{x}\le (3e^{-1}\log e)^3\le 4.1$ for all $x>0$. 
All the other terms can be bounded similarly.   This completes the proof.
\end{proof}
\section{Proof of Lemma~\ref{lem:emil}}  \label{prf:mi}

\begin{proof}  For convenience, we introduce dummy random variables $(\tilU,\tilX, \tilY)$ distributed according to $P_{U^n, X^n, Y^n}$, the type of $(U^n ,X^n, Y^n)$. This means that $p_{\tilU,\tilX, \tilY}  = P_{U^n, X^n, Y^n}$.  Then, note that 
\begin{equation}
I(\tilX;\tilY|\tilU) = I(\tilX;\tilY|\tilU)-\rvE_{p_{\tilU,\tilX, \tilY}} \left[ \log \frac{p_{ X,Y|U}(X, Y|U)}{p_{ X|U}(X|U) p_{ Y|U}(Y|U)}\right] \label{eqn:subtract_emp}
\end{equation}
since $X-U-Y$ form a Markov chain in that order so  $p_{ X,Y|U}(x,y|u)/ (p_{ X|U}(x|u) p_{ Y|U}(y|u))=1$ for all $(x,y,u)\in\calX\times\calY\times\calU$.   Let $p_{\tilX, \tilY|\tilU} := p_{\tilU,\tilX, \tilY} / p_{\tilU} $ be the conditional type and  let $p_{\tilX |\tilU} $  and $p_{\tilY|\tilU} $ be the $\calX$- and $\calY$-marginals of $p_{\tilX, \tilY|\tilU}$ respectively. Now,  by expressing the mutual information $I(\tilX;\tilY|\tilU)$ as an expectation, we readily see that~\eqref{eqn:subtract_emp} simplifies as 
\begin{align}
I(\tilX;\tilY|\tilU) & = D(p_{\tilX, \tilY|\tilU}  || p_{X,Y|U} |  p_{\tilU})    - D(p_{\tilX|\tilU} || p_{X|U} |  p_{\tilU})- D( p_{ \tilY|\tilU}  || p_{Y|U} |   p_{\tilU} )  . 
  \label{eqn:cond_re} 
  \end{align} 
  Because conditional relative entropies in \eqref{eqn:cond_re}  are non-negative, 
  \begin{align}
I(\tilX;\tilY|\tilU) \le D(p_{\tilX, \tilY|\tilU}  || p_{X,Y|U} |  p_{\tilU}).\label{eqn:div_nonng}
\end{align}
To simplify notation, let $W:= p_{X,Y|U}$.  Fix $t>0$.  Now consider
\begin{align}
\rvP( I(\tilX;\tilY|\tilU)\ge t) &\le \rvP( D(p_{\tilX, \tilY|\tilU} || W | p_{\tilU})\ge t )\label{eqn:use_div_nonneg} \\
 & = \sum_{Q\in \scP_n(\calU)}\sum_{u^n\in \calT_{Q}}p_U^n(u^n) \sum_{\substack{V \in \scV_n(\calX\times\calY;Q) :\\ D(V||W | Q)\ge t}}    W^n( \calT_V(u^n)  |u^n)\label{eqn:type_Q} \\
  &\le\sum_{Q\in \scP_n(\calU)} \sum_{u^n\in \calT_{Q}}p_U^n(u^n) \sum_{\substack{V \in \scV_n(\calX\times\calY;Q) :\\ D(V||W | Q)\ge t}} 2^{-n D(V||W | Q ) } \label{eqn:prob_tc}\\
   &\le \sum_{Q\in \scP_n(\calU)}\sum_{u^n\in \calT_{Q}}p_U^n(u^n) (n+1)^{ |\calU| |\calX| |\calY|}  2^{-nt } \label{eqn:type_count} \\
   &= (n+1)^{ |\calU| |\calX| |\calY|}  2^{-nt } 
\end{align}
where in~\eqref{eqn:use_div_nonneg} we used the bound in~\eqref{eqn:div_nonng}. For~\eqref{eqn:type_Q}, we noted that the type of $u^n$  in the innermost sum is $P_{u^n}=Q$. In \eqref{eqn:prob_tc}, we used~\cite[Lemma 1.2.6]{Csi97} to upper bound the $W^n(\fndot|u^n)$-probability of a $V$-shell. In~\eqref{eqn:type_count}, we applied the Type Counting Lemma for conditional types~\cite[Eq.~(2.5.1)]{Csi97} which asserts that $|\scV_n(\calX\times\calY;Q)|\le (n+1)^{ |\calU| |\calX| |\calY|}$. This completes the proof. \end{proof}

\subsection*{Acknowledgements}
The authors acknowledge discussions with Y.~Polyanskiy,  S.~C.~Draper, L.~Zheng,  Y.~Kochman,  D.~Wang and J.~Sun. The first author thanks Prof.~A.~S.~Willsky and the members of the Stochastic Systems Group  for their  hospitality while this work was done during his visit to MIT from Oct 2011 to Jan 2012.

V.~Y.~F.~Tan is supported  by a fellowship from A*STAR, Singapore. O.~Kosut is supported by Shell Global Solutions International B.V.  

\bibliographystyle{IEEETran}
\bibliography{isitbib}

\begin{thebibliography}{10}
\providecommand{\url}[1]{#1}
\csname url@samestyle\endcsname
\providecommand{\newblock}{\relax}
\providecommand{\bibinfo}[2]{#2}
\providecommand{\BIBentrySTDinterwordspacing}{\spaceskip=0pt\relax}
\providecommand{\BIBentryALTinterwordstretchfactor}{4}
\providecommand{\BIBentryALTinterwordspacing}{\spaceskip=\fontdimen2\font plus
\BIBentryALTinterwordstretchfactor\fontdimen3\font minus
  \fontdimen4\font\relax}
\providecommand{\BIBforeignlanguage}[2]{{%
\expandafter\ifx\csname l@#1\endcsname\relax
\typeout{** WARNING: IEEEtran.bst: No hyphenation pattern has been}%
\typeout{** loaded for the language `#1'. Using the pattern for}%
\typeout{** the default language instead.}%
\else
\language=\csname l@#1\endcsname
\fi
#2}}
\providecommand{\BIBdecl}{\relax}
\BIBdecl

\bibitem{elgamal}
A.~{El~Gamal} and Y.-H. Kim, \emph{Network Information Theory}.\hskip 1em plus
  0.5em minus 0.4em\relax Cambridge, U.K.: Cambridge University Press, 2012.

\bibitem{ahl71}
R.~Ahlswede, ``{Multiway communication channels},'' in \emph{Intl.\ Symp.\ on
  Inf.\ Th.}, Tsahkadsor, Armenian S.S.R, 1971, pp. 23–--52.

\bibitem{liao}
H.~H.~J. Liao, ``Multiple access channels,'' Ph.D. dissertation, University of
  Hawaii, Honolulu, 1972.

\bibitem{kor77}
J.~K\"{o}rner and K.~Marton, ``General broadcast channels with degraded message
  sets,'' \emph{IEEE Trans. on Inf. Th.}, vol.~23, no.~1, pp. 60--64, 1977.

\bibitem{cover}
T.~Cover, ``Comments on broadcast channels,'' \emph{IEEE Trans. on Inf. Th.},
  vol.~44, no.~6, pp. 2524--30, 1998.

\bibitem{bergmans}
P.~P. Bergmans, ``Random coding theorems for broadcast channels with degraded
  components,'' \emph{IEEE Trans. Inf. Th.}, vol.~19, pp. 197--–207, 1973.

\bibitem{sw73}
D.~Slepian and J.~K. Wolf, ``Noiseless coding of correlated information
  sources,'' \emph{IEEE Trans. on Inf. Th.}, vol.~19, pp. 471--80, 1973.

\bibitem{Shannon48}
C.~E. Shannon, ``A mathematical theory of communication,'' \emph{Bell System
  Tech. Journal}, vol.~27, pp. 379--423, 1948.

\bibitem{strassen}
V.~Strassen, ``{Asymptotische Absch\"{a}tzungen in Shannons
  Informationstheorie},'' in \emph{Trans. Third. Prague Conf. Inf. Th.}, 1962,
  pp. 689--723.

\bibitem{Hayashi09}
M.~Hayashi, ``Information spectrum approach to second-order coding rate in
  channel coding,'' \emph{IEEE Trans. on Inf. Th.}, vol.~55, pp. 4947--66, Nov
  2009.

\bibitem{PPV10}
Y.~Polyanskiy, H.~V. Poor, and S.~Verd\'{u}, ``Channel coding in the finite
  blocklength regime,'' \emph{IEEE Trans. on Inf. Th.}, vol.~56, pp. 2307 --
  59, May 2010.

\bibitem{Csi97}
I.~Csisz\'{a}r and J.~{K\"{o}rner}, \emph{Information Theory: Coding Theorems
  for Discrete Memoryless Systems}.\hskip 1em plus 0.5em minus 0.4em\relax
  Akademiai Kiado, 1981.

\bibitem{Kos11}
V.~Kostina and S.~Verd\'{u}, ``Fixed-length lossy compression in the finite
  blocklength regime: Discrete memoryless sources,'' in \emph{Int. Symp. Inf.
  Th.}, 2011.

\bibitem{ingber11}
A.~Ingber and Y.~Kochman, ``The dispersion of lossy source coding,'' in
  \emph{Data Compression Conference (DCC)}, 2011.

\bibitem{wang11}
D.~Wang, A.~Ingber, and Y.~Kochman, ``The dispersion of joint source-channel
  coding,'' in \emph{Allerton Conference}, 2011.

\bibitem{gallagerIT}
R.~G. Gallager, \emph{Information theory and reliable communication}.\hskip 1em
  plus 0.5em minus 0.4em\relax New York: Wiley, 1968.

\bibitem{Baron04}
D.~Baron, M.~A. Khojastepour, and R.~G. Baraniuk, ``{Redundancy rates of
  Slepian-Wolf coding},'' in \emph{Allerton Conf.}, 2004.

\bibitem{Sar05b}
S.~Sarvotham, D.~Baron, and R.~G. Baraniuk, ``Variable-rate universal
  {Slepian-Wolf} coding with feedback,'' in \emph{Asilomar Conference on
  Signals, Systems and Computers}, 2005.

\bibitem{He09}
D.-K. He, L.~A. {Lastras-Monta\~{n}o}, E.-H. Yang, A.~Jagmohan, and J.~Chen,
  ``On the redundancy of {Slepian-Wolf} coding,'' \emph{IEEE Trans. on Inf.
  Th.}, vol.~55, no.~12, pp. 5607–--27, Dec 2009.

\bibitem{Chen07}
J.~Chen, D.~He, A.~Jagmohan, and L.~A. Lastras-Montano, ``{On the duality and
  difference between Slepian-Wolf coding and channel coding},'' in
  \emph{Information Theory Workshop}, Lake Tahoe, CA, 2007.

\bibitem{Ahls79}
{R. Ahlswede}, ``{Coloring hypergraphs: A new approach to multi-user source
  coding--Part I},'' \emph{Journal of Combinatorics, Information and System
  Sciences}, vol.~1, no.~1, pp. 76--115, 1979.

\bibitem{Ahls79b}
------, ``{Coloring hypergraphs: A new approach to multi-user source
  coding--Part II},'' \emph{Journal of Combinatorics, Information and System
  Sciences}, vol.~5, no.~3, pp. 220--268, 1980.

\bibitem{Sar05}
S.~Sarvotham, D.~Baron, and R.~G. Baraniuk, ``{Non-asymptotic performance of
  symmetric Slepian-Wolf coding},'' in \emph{Conference on Information Sciences
  and Systems}, 2005.

\bibitem{Chang07}
C.~Chang and A.~Sahai, ``Universal quadratic lower bounds on source coding
  error exponents,'' in \emph{Conference on Information Sciences and Systems},
  2007.

\bibitem{Cor03}
T.~Cormen, C.~Leiserson, R.~Rivest, and C.~Stein, \emph{Introduction to
  Algorithms}, 2nd~ed.\hskip 1em plus 0.5em minus 0.4em\relax McGraw-Hill
  Science/Engineering/Math, 2003.

\bibitem{cover75}
T.~M. Cover, ``A proof of the data compression theorem of {Slepian and Wolf}
  for ergodic sources,'' \emph{IEEE Trans. Inf. Th.}, vol.~21, pp. 226--–228,
  Mar. 1975.

\bibitem{Ben03}
V.~Bentkus, ``{On the dependence of the Berry-Esseen bound on dimension},''
  \emph{J.\ Stat.\ Planning and Inference}, vol. 113, pp. 385 –-- 402, 2003.

\bibitem{miyake}
S.~Miyake and F.~Kanaya, ``Coding theorems on correlated general sources,''
  \emph{IEICE Trans.\ on Fundamentals of Electronics, Communications and
  Computer}, vol. E78-A, no.~9, pp. 1063--70, 1995.

\bibitem{Han10}
T.~S. Han, \emph{Information-Spectrum Methods in Information Theory}.\hskip 1em
  plus 0.5em minus 0.4em\relax Springer Berlin Heidelberg, Feb 2010.

\bibitem{Kot97}
I.~Kontoyiannis, ``Second-order noiseless source coding theorems,'' \emph{IEEE
  Trans. on Inf. Th.}, pp. 1339--41, Jul 1997.

\bibitem{Tan12}
V.~Y.~F. Tan, ``Moderate-deviations of lossy source coding for discrete and
  {Gaussian} sources,'' in \emph{Int. Symp. Inf. Th.}, 2012.

\bibitem{sason12}
I.~Sason, ``Moderate deviations analysis of binary hypothesis testing,'' in
  \emph{Int. Symp. Inf. Th.}, 2012.

\bibitem{altug10}
Y.~Altug and A.~B. Wagner, ``Moderate deviation analysis of channel coding:
  Discrete memoryless case,'' in \emph{Int. Symp. Inf. Th.}, 2010.

\bibitem{dueck}
G.~Dueck, ``{Maximal error capacity regions are smaller than average error
  capacity regions for multi-user channels},'' \emph{{Probl. Control Inf.
  Theory}}, vol.~7, pp. 11--19, 1978.

\bibitem{Han81}
T.~S. Han and K.~Kobayashi, ``{A new achievable rate region for the
  interference channel},'' \emph{{IEEE Trans. Inf. Theory}}, vol.~27, no.~1,
  pp. 49--60, 1981.

\bibitem{Goppa}
V.~D. Goppa, ``{Nonprobabilistic mutual information without memory},''
  \emph{Probl.\ of Control and Inf. Th.}, vol.~4, pp. 97--102, 1975.

\bibitem{Forney}
G.~D. Forney, ``On exponential error bounds for random codes on the {BSC},''
  2001, {Unpublished manuscript}.

\bibitem{Han98}
T.~S. Han, ``An information-spectrum approach to capacity theorems for the
  general multiple-access channel,'' \emph{IEEE Trans. on Inf. Th.}, vol.~44,
  no.~7, pp. 2773--95, Jul 1998.

\bibitem{dueck81}
G.~Dueck, ``{The strong converse coding theorem for the multiple-access
  channel},'' \emph{{J. Combinatorics, Information \& System Science}}, pp.
  187--196, 1981.

\bibitem{Ahl82}
R.~Ahlswede, ``{An elementary proof of the strong converse theorem for the
  multiple access channel},'' \emph{{J. of Combinatorics, Information \& System
  Sciences}}, pp. 216--230, 1982.

\bibitem{augustin}
U.~Augustin, ``Gedachtnisfreie kannale for diskrete zeit,'' \emph{Z.
  Wahrscheinlichkelts theory verw}, pp. 10--61, 1966.

\bibitem{bouch00}
S.~Boucheron and M.~R. Salamatian, ``About priority encoding transmission,''
  \emph{IEEE Trans. on Inf. Th.}, vol.~46, no.~2, pp. 699--705, 2000.

\bibitem{Got91}
{F. G\"{o}etze}, ``{On the rate of convergence in the multivariate CLT},''
  \emph{The Annals of Probability}, vol.~19, no.~2, pp. 721--739, 1991.

\bibitem{Weiss03}
T.~Weissman, E.~Ordentlich, G.~Seroussi, S.~Verd\'{u}, and M.~L. Weinberger,
  ``{Inequalities for the $l_1$ deviation of the empirical distribution},''
  Hewlett-Packard Labs, Tech. Rep., 2003.

\bibitem{Gal76}
R.~G. Gallager, ``Source coding with side information and universal coding,''
  MIT LIDS, Tech. Rep., 1976.

\end{thebibliography}

\begin{IEEEbiographynophoto}{{\bf Vincent~Y.~F.~Tan}}
 (S'07--M'11) received the B.A.\ and M.Eng.\ degrees in electrical and  information engineering from Sidney Sussex College, Cambridge University. He received the Ph.D.\ degree in electrical engineering and computer science (EECS) from the Massachusetts Institute of Technology (MIT). He was then a postdoctoral researcher in the Department of Electrical and Computer Engineering  (ECE)  at the University of Wisconsin-Madison and following that, a scientist  at the Institute for Infocomm Research (I$^2$R), Singapore. He is currently an assistant    professor in the Department of ECE   at the National University of Singapore.   His research interests include   information theory, detection and estimation, and learning and inference of graphical models.

Dr.\ Tan is a recipient of the 2005 Charles Lamb Prize, a Cambridge University Engineering Department prize awarded annually to the top candidate in Electrical and Information Engineering. He also received the 2011 MIT EECS Jin-Au Kong outstanding doctoral thesis prize. He is a member of the IEEE Machine Learning for Signal Processing (MLSP) Technical Committee.  
\end{IEEEbiographynophoto}

\begin{IEEEbiographynophoto}{{\bf Oliver Kosut}}
 (S'06--M'10) received B.S.\ degrees in electrical engineering and mathematics from the Massachusetts Institute of Technology, Cambridge, MA in 2004 and a Ph.D.\ degree in electrical and computer engineering from Cornell, Ithaca, NY in 2010.

He was a visiting student at University of California at Berkeley in 2008--9. He was a Postdoctoral Research Associate in the Laboratory for Information and Decision Systems at MIT, Cambridge, MA from 2010 to 2012. Since 2012 he has been an Assistant Professor at Arizona State University, Tempe, AZ. His research interests include network information theory, security, and power systems.\end{IEEEbiographynophoto}

\end{document}